\documentclass[a4paper,UKenglish,cleveref]{lipics-v2019}
\pdfoutput=1

\usepackage{mathtools} 
\usepackage{stmaryrd} 
\usepackage{tikz}
\usetikzlibrary{matrix,arrows,calc,backgrounds,shapes,shapes.geometric,patterns,positioning,fit,decorations.pathmorphing,decorations.pathreplacing,shadows}

\usepackage{enumitem}

\usepackage[noend,noline,ruled]{algorithm2e}
\SetKwInput{KwInput}{Input}

\usepackage[calc]{adjustbox}
\usepackage{wrapfig}


\usepackage{todonotes}

\newcommand{\cutout}[1]{}


\theoremstyle{definition}
\newtheorem{Definition}[theorem]{Definition}
\newtheorem*{open}{Open Problem}



\newcommand{\Iff}{\Longleftrightarrow}
\newcommand{\ra}{\rightarrow}
\newcommand{\impl}{\rightarrow}

\newcommand{\E}{\exists}
\newcommand{\A}{\forall}

\newcommand{\QD}{\mathsf{D}} 
\newcommand{\QE}{\mathsf{E}} 
\DeclareMathOperator{\QGA}{\rotatebox[origin=c]{180}{$\forall$}} 
\DeclareMathOperator{\QGE}{\rotatebox[origin=c]{180}{$\exists$}} 
\newcommand{\comvar}[2]{#1 \operatorname{\overline{\cap}} #2}
\DeclareMathOperator{\tr}{\textsf{tr}}

\renewcommand{\phi}{\varphi}
\renewcommand{\theta}{\vartheta}
\renewcommand{\emptyset}{\varnothing}
\renewcommand{\epsilon}{\varepsilon}

\newcommand{\FO}{\normalfont{\textsf{FO}}}
\newcommand{\GF}{\normalfont{\textsf{GF}}}
\newcommand{\CGF}{\normalfont{\textsf{CGF}}}
\newcommand{\ML}{\normalfont{\textsf{ML}}}
\newcommand{\LFD}{\normalfont{\textsf{LFD}}}
\renewcommand{\L}{\normalfont{\textsf{L}}}
\newcommand{\On}{\mathrm{On}}

\newcommand*{\ext}[1]{[\![ #1 ]\!]}

\DeclareMathOperator{\ar}{ar} 
\DeclareMathOperator{\qr}{qr} 
\DeclareMathOperator{\Free}{free} 
\DeclareMathOperator{\Th}{Th} 

\newcommand{\dep}{\operatorname{dep}}
\newcommand{\Ind}{\operatorname{Ind}}

\newcommand{\excl}{\mathbin{\operatorname{|}}}
\newcommand{\indep}{\perp}
\newcommand{\incl}{\subseteq}
\newcommand{\anon}{\Upsilon}

\newcommand{\PSPACE}{\textsc{Pspace}}
\newcommand{\PTIME}{\textsc{Ptime}}

\newcommand{\ALOGSPACE}{\textsc{Alogspace}}
\newcommand{\APTIME}{\textsc{Aptime}}

\newcommand{\accept}{\textsc{Accept}}
\newcommand{\reject}{\textsc{Reject}}

\DeclareMathOperator{\MC}{MC}
\DeclareMathOperator{\MCL}{MC_{list}}
\DeclareMathOperator{\MCF}{MC_{full}}
\DeclareMathOperator{\MCFK}{MC^{\textit{k}}_{full}}
\DeclareMathOperator{\MCFO}{MC_{formula}}
\DeclareMathOperator{\MCFOK}{MC^{\textit{k}}_{formula}}
\DeclareMathOperator{\MCFOBK}{MC^{\textit{B,k}}_{formula}}


\newcommand*{\tup}[1]{\bar{#1}}
\newcommand{\ta}{\tup a}
\newcommand{\tb}{\tup b}
\newcommand{\tc}{\tup c}
\newcommand{\tx}{\tup x}
\newcommand{\ty}{\tup y}
\newcommand{\tz}{\tup z}
\newcommand{\tv}{\tup v}
\newcommand{\und}[1]{\underline{#1}}
\newcommand{\restr}{\upharpoonright}
\newcommand{\defiff}{\mathbin{:\,\Longleftrightarrow}}

\newcommand{\bbN}{{\mathbb N}} 
\newcommand{\cO}{\mathcal{O}} 
\newcommand{\Vv}{{\cal V}} 

\newcommand{\frA}{{\mathfrak A}}
\newcommand{\frB}{{\mathfrak B}}
\newcommand{\frC}{{\mathfrak C}}
\newcommand{\frM}{\mathfrak{M}} 
\newcommand{\frN}{\mathfrak{N}} 
\newcommand{\TM}{T}
\newcommand{\TN}{T'}
\newcommand{\M}{(\frM,\TM)}
\newcommand{\N}{(\frN,\TN)}
\newcommand{\TMS}{{s}}
\newcommand{\TMT}{{t}}
\newcommand{\TMU}{{u}}
\newcommand{\TNS}{{s'}}
\newcommand{\TNT}{{t'}}

\newcommand{\TTMS}{{\tup \TMS}}
\newcommand{\TTMT}{{\tup \TMT}}

\newcommand{\TTNS}{{\tup \TNS}}
\newcommand{\TTNT}{{\tup \TNT}}


\title{Logics of Dependence and Independence: The Local Variants}

\author{Erich Grädel}{RWTH Aachen University, Germany}{graedel@logic.rwth-aachen.de}{}{}
\author{Phil Pützstück}{RWTH Aachen University, Germany}{phil.puetzstueck@rwth-aachen.de}{}{}

\authorrunning{E.~Grädel and P.~Pützstück}

\Copyright{E.~Grädel and P.~Pützstück}


\keywords{logics of dependence and independence, decidability, bisimulation.}

\nolinenumbers 
\hideLIPIcs  



\begin{document}

\maketitle

\begin{abstract}
Modern logics of dependence and independence are based on team semantics, which means that formulae
are evaluated not on a single assignment of values to variables, but on a set of such assignments, called a team.
This leads to high expressive power, on the level of existential second-order logic.
As an alternative, Baltag and van Benthem have proposed a local variant of dependence logic, 
called logic of functional dependence  ($\LFD$).
While its semantics is also based on a team, the formulae are evaluated \emph{locally} on just one of its 
assignments, and the team just serves as the supply of the possible assignments
that are taken into account in the evaluation process. This logic thus 
relies on the modal perspective of generalized assignments semantics, and can be seen as a fragment of first-order
logic.  For the variant of $\LFD$ without equality, the satisfiability problem is decidable.

We extend the idea of localising logics of dependence and independence in a systematic way, taking 
into account local variants of standard atomic dependency properties: besides dependence and independence,
also inclusion, exclusion, and anonymity. We study model-theoretic and algorithmic questions
of the localised logics, and also resolve some of the questions that had been left open by Baltag and van Benthem.
In particular, we study decidability issues of the local logics, and prove that 
satisfiability of $\LFD$ with equality is undecidable.
Further, we establish characterisation theorems via appropriate notions of bisimulation and study the
complexity of model checking problems for these logics.
\end{abstract}

\section{Introduction}

Following work that has been initiated by Hodges \cite{Hodges97} and 
Väänänen \cite{Vaananen07}, modern logics of dependence and independence are 
generally based on \emph{team semantics}. This means that a formula $\phi(x_1,\dots,x_m)$
in such a logic is not evaluated on a single assignment $s:\{x_1,\dots,x_m\}\ra A$ of values to 
the free variables,
but for a \emph{set} of such assignments, which is called a \emph{team}. In these logics,
dependence and independence statements about variables, such as \emph{``$y$ depends on $x$''} or
\emph{``$x$ and $y$ are independent''} are considered as \emph{atomic properties of teams}.
Besides dependence and independence atoms, further atomic team properties have been
considered: inclusion, exclusion, anonymity, conditional independence, and others.
A crucial feature of the logics with team semantics is that they manipulate second-order objects
by first-order syntax. The evaluation of a formula on a team can be considered as a dynamic process
(or game) that modifies the team while moving from the formula
through its syntax tree to the atoms or, equivalently, as an annotation of the syntax tree of the formula by teams.
The expressive power of logics with team semantics is typically on the level of powerful fragments
of existential second-order logic. More precisely, with a logic $L$ with team semantics, 
which syntactically extends first-order logic by certain atomic team properties, one can associate
a fragment $F\subseteq \Sigma^1_1$ of existential-second-order, such that every formula
$\phi(\bar x)\in L$ (of vocabulary $\tau$) is equivalent to a \emph{sentence}
$\psi\in F$ (of vocabulary $\tau\cup\{\TM\}$, where $\TM$ is a predicate for the team),
such that $\phi(\bar x)$ is true in a structure $\frM$ with a team $\TM$, if, and only if
the expanded structure $\M$ is a model of $\psi$. Understanding the expressive power of a logic
with team semantics thus means to identify the fragment $F\subseteq \Sigma^1_1$ such that
these equivalences hold in both directions. Here are some of the most important results of this kind: 
\begin{itemize}
    \item Dependence logic $\FO(\dep)$ and exclusion logic $\FO({}\excl{})$
        are equivalent to the fragment of  $\Sigma^1_1$-sentences in which the
        predicate for the team appears only negatively \cite{KontinenVaa09}.

    \item Inclusion logic $\FO(\subseteq)$ and anonymity logic $\FO(\anon)$ are
        equivalent to sentences of form $\forall \bar x (\TM\bar x\rightarrow \psi(\TM,\bar x))$
        in the posGFP-fragment of least fixed-point logic,
        such that $\TM$ occurs only positively in $\psi$ \cite{GallianiHel13}.

    \item Independence logic $\FO(\indep)$ and inclusion-exclusion logic
        $\FO(\incl,{}\excl{})$ are equivalent with full $\Sigma^1_1$
        (and thus can describe all NP-properties of teams) \cite{Galliani12}.
\end{itemize}
In particular, all of these logics are much more expressive than classical
first-order logic and are of course undecidable for satisfiability.

\medskip 
Recently, Baltag and van Benthem \cite{BaltagBen20} have proposed a different
kind of dependence logic, called logic of functional dependence $\LFD$,
which can be considered as a fragment of first-order logic.
Its semantics is also based on an underlying team of assignments,
but the formulae are evaluated \emph{locally} on just one of these assignments,
while the team serves as the supply of (or one might also say, the restriction for)
the possible assignments that are admitted in the evaluation process of the formula.
This is the modal perspective of generalized assignments semantics,
where not all possible assignments of values to variables are available but
only a given set of them; these may be considered as the possible worlds
in a Kripke style semantics, where neighbouring worlds are assignments that
agree on some subset of the variables. Thus, in an evaluation process of such
a formula, the team remains unchanged, but one moves around between different
assignments in that team. Dependence comes in by means of atoms $D_X y$
(where $y$ is a variable and $X$ is a set of variables), which are true at a
given assignment $s$ if all assignments $t$ in the underlying team that agree
with $s$ on all variables in the set $X$, also agree with $s$ on $y$.
The atom $D_X y$ can thus be read as a \emph{local} dependence of $y$ on $X$
(on the assignments related to the current one by agreement on $X$),
whereas the standard dependence atom $\dep(X,y)$ in Väänänen's dependence logic
\cite{Vaananen07} says that $y$ \emph{globally} depends on $X$ in the team,
in the sense that there is function which, for \emph{all} assignments in the team,
maps the values for $X$ to the value for $y$. There is a further dependence
operator in $\LFD$, of form $\QD_X \phi$, with the meaning that $\phi$ is true at
all assignments that agree with the current one on $X$.
 
Baltag and van Benthem provide a detailed study of many aspects of $\LFD$.
In particular they prove that $\LFD$ can be embedded into classical first-order logic
(with the team of possible assignments as an additional predicate) and they view
$\LFD$ as a \emph{minimal} logic of functional dependence. They prove that the variant
of $\LFD$ without equality is decidable for satisfiability, but leave open the problem
whether this is also the case for $\LFD^=$, the version with equality.
Towards the end of their paper, Baltag and van Benthem also study a local version
of \emph{independence}, by atoms $\Ind_X(Y)$, saying that the values of the variables
in $X$ at the current assignment $s$ do not constrain the values of the variables
in $Y$: for any assignment $t$ in the team there is one with the same $X$-values
as $s$ and the same $Y$-values as $t$. Interestingly, this local notion of independence
is not symmetric: $\Ind_X(Y)$ does not imply $\Ind_Y(X)$, contrary to the global
independence atoms $X\bot Y$ in the independence logic of Grädel and Väänänen \cite{GraedelVaa13}.
Baltag and van Benthem also observe that this local independence logic is undecidable,
even without equality.

In this paper, we resolve some of the problems left open by Baltag and van Benthem.
In particular, we prove that $\LFD^=$ is undecidable, and we establish a
characterisation theorem via an appropriate notion of bisimulation. 

Further, we study the idea of \emph{localising} logics of dependence and
independence in a more systematic way. We consider local variants of all the
standard atomic dependency notions for teams; beyond dependence and independence,
this includes anonymity, inclusion, and exclusion. In this setting, it turns out
that anonymity is just the negation of dependence, and inclusion is the negation
of exclusion (which is not the case for the global team semantical variants of these atoms).
This also suggests to look more closely at the role of negation.
If, as in global logics of dependence and independence, we insist that formulae
are written in negation normal form, and that negation is applied to classical
atoms only, but not to atomic dependencies, then by considering local dependence
atoms $D_X y$ and local anonymity atoms $\anon_Xy\equiv\neg D_X y$ separately
we obtain two even ``more minimal'' logics than $\LFD$, whose common extension coincides with $\LFD$.

\section{Localised variants of logics of dependence and independence}

\subsection{Global and local atomic properties of teams}

A team $\TM$ is a set of assignments $\TMS:\Vv\ra M$ with a common domain $\Vv$ of variables
and a common set of values $M$, typically the universe of a structure. 
For a tuple $\tx$ of variables from $\Vv$ we write $\TM(\tx) \coloneqq \{\TMS(\tx) \mid \TMS \in \TM\}$
for the set of values of $\tx$ in the team $\TM$.
Whenever $\Vv$ is finite, it is convenient to fix
some enumeration $\tv$ of $\Vv$ so we can denote assignments $\TMS$
by their tuple of values $\TTMS = \TMS(\tv)$.
Conversely, given some fitting tuple of values $\ta$,
the notation $\tv \mapsto \ta$ represents an assignment $\TMS$ with $\TMS(\tv) = \ta$.
This allows us to view $\TM$ as a relation in the usual sense.

\medskip 
For the purpose of this paper, we call the standard team semantical atoms such as
dependence, independence, anonymity, inclusion and exclusion \emph{global atoms}.
A global atom $\alpha$, over a set of variables $\Vv(\alpha)$ defines, 
for every set $M$ of values, the extension $\ext{\alpha}^M$ of all teams $\TM$ such that 
$M \models_{\TM} \alpha$. 
We require that the domain of $\TM$ contains at least $\Vv(\alpha)$, and that the
truth of $\alpha$ in $\TM$ only depends on the variables in $\Vv(\alpha)$, i.e., 
$M\models_{\TM} \alpha$ if, and only if, $M \models_{\TM\restriction\Vv(\alpha)} \alpha$.
The most important global atoms are:
\def\tmpWidth{\widthof{$M \models_{\TM} \dep(\tx,y)$}}
\begin{description}[leftmargin=!,labelwidth=\widthof{Independence: }]
    \item[Dependence:]
        \makebox[\tmpWidth]{$M \models_{\TM} \dep(\tx, y)$\hfill}
        $\defiff\ (\A \TMS\in \TM)(\A \TMT\in \TM)\ \TMS(\tx) = \TMT(\tx) \ra \TMS(y) = \TMT(y)$;
    \item[Inclusion:]
        \makebox[\tmpWidth]{$M \models_{\TM} \tx\incl\ty$\hfill}
        $\defiff\ (\forall \TMS\in \TM)(\exists \TMT\in \TM)\ \TMS(\tx) = \TMT(\ty)$;
            \item[Exclusion:]
        \makebox[\tmpWidth]{$M \models_\TM \tx \excl \ty$\hfill}
        $\defiff\ (\A \TMS\in \TM)(\A \TMT \in \TM)\ \TMS(\tx) \neq \TMT(\ty)$;
    \item[Anonymity:]
        \makebox[\tmpWidth]{$M \models_\TM \tx\anon y$\hfill}
        $\defiff\ (\A \TMS\in \TM)(\E \TMT\in \TM)\ \TMS(\tx)=\TMT(\tx) \land \TMS(y)\neq \TMT(y)$;
     \item[Independence:]
        \makebox[\tmpWidth]{$M \models_\TM  \tx\indep \ty$\hfill}
        $\defiff\ (\A \TMS,\TMT\in \TM)(\E \TMU\in \TM)\ \TMS(\tx) = \TMU(\tx) \land \TMT(\ty) = \TMU(\ty)$.
\end{description}
\def\tmpWidth{\relax}

Notice that all these global atoms are defined by a universal expression of form
$(\A \TMS\in \TM)\beta$ where $\beta$ is a statement about equalities and inequalities between 
values of $\TMS$ and values of other $\TMT \in \TM$. 

A \emph{local} team semantical atom $\beta$ instead is evaluated on a local assignment $\TMS$ 
of some fixed underlying team $\TM$. Thus,  the extension $\ext{\beta}^{M,\TM}$ of $\beta$
on a set $M$ of values and a team $\TM$ of assignments $\TMT: \Vv\ra M$, is
the set of all $\TMS\in \TM$ such that $(M,\TM)\models_\TMS \beta$.
We say that $\beta$ is the local variant of the global atom $\alpha$ if, for any 
set of values $M$ and any team $\TM$, we have that
\[
    M \models_\TM \alpha \ \Iff\ (M,\TM)\models_\TMS \beta \text{ for all } \TMS \in \TM.
\]

The following example gives a local variant of dependence.

\begin{example}\label{ex:local-dep}
    Consider the team over $\Vv = \{x,y,z\}$ represented by the following table:
    \begin{center}
    \begin{tabular}{ccc}
        $x$ & $y$ & $z$\\\hline
         0  &  0  & 0\\
         1  &  1  & 0\\
         1  &  2  & 1\\
         2  &  2  & 1
    \end{tabular}.
    \end{center}
    We say that \emph{$x$ locally depends on $y$ at the assignment
    $(1,1,0)$}, because fixing $y$ to be its value
    \emph{in this specific assignment},
    also fixes $x$ to its value in said assignment.
    In other words, fixing $y$ to $1$ causes $x$ to be fixed to 1.
    As another example, $y$ locally depends on $z$
    at the assignments $(1,2,1)$ and $(2,2,1)$,
    because $z=1$ entails $y=2$ in the above table.
    However, $y$ does not depend on $x$ at $(1,1,0)$.

    Notice that the global functional dependence
    is the universal closure of this local dependence;
    $y$ depends on $x$ in the whole team if, and only if,
    $y$ locally depends on $x$ at every assignment in the team.
\end{example}

The local variants of the standard global dependency atoms are the following:
\def\tmpWidth{\widthof{$(M,\TM) \models_s \Ind_{\tx}(\ty)$}}
\begin{description}[leftmargin=!,labelwidth=\widthof{Independence: }]
    \item[Dependence:]
        \makebox[\tmpWidth]{$(M,\TM)\models_\TMS D_{\tx} y$\hfill}
        $\defiff\ (\A \TMT\in \TM)\ \TMS(\tx) = \TMT(\tx) \ra \TMS(y) = \TMT(y)$; 

    \item[Inclusion:]
        \makebox[\tmpWidth]{$(M,\TM)\models_\TMS \tx \in \ty$\hfill}
        $\defiff\ \TMS(\tx)\in \TM(\ty)$;

    \item[Exclusion:]
        \makebox[\tmpWidth]{$(M,\TM)\models_\TMS \tx \notin \ty$\hfill}
        $\defiff\ \TMS(\tx) \notin \TM(\ty)$;

    \item[Anonymity:]
        \makebox[\tmpWidth]{$(M,\TM)\models_\TMS \anon_{\tx} y$\hfill}
        $\defiff\ (\E \TMT\in \TM)\ \TMS(\tx) = \TMT(\tx) \land \TMS(y) \neq \TMT(y)$;

    \item[Independence:]
        \makebox[\tmpWidth]{$(M,\TM)\models_\TMS \Ind_{\tx}(\ty)$\hfill}
        $\defiff\ (\A \TMT \in \TM)(\E \TMU\in \TM)\ \TMS(\tx) = \TMU(\tx) \land \TMT(\ty) = \TMU(\ty)$.
\end{description}
\def\tmpWidth{\relax}

Notice that there are some striking differences between the properties of local and global
atoms. For the local atoms, anonymity and dependence, as well as inclusion and exclusion,
are directly related via negation, which is not the case for the global atoms.
Further, local independence is not symmetric in $\tx$ and $\ty$.

\subsection{The local dependence logic LFD}

We now describe the \emph{logic of functional dependence} $\LFD$, as introduced
by Baltag and van Benthem \cite{BaltagBen20}, and recall the main results
they proved on this logic.
We shall then propose a slightly different presentation of local
logics on teams, of which $\LFD$ is one special case.

For a tuple $\ta = (a_1,\dots,a_n) \in A^n$,
we denote by $[\ta]\coloneqq \{a_1,\dots,a_n\}$ the set of its components.
Given some function $f$ with $[\ta] \subseteq \operatorname{dom}(f)$,
we also write $f(\ta) \coloneqq (f(a_1),\dots,f(a_n))$.

\begin{Definition}
A type $(\tau,\Vv)$ consists of a relational vocabulary
$\tau$ and a set $\Vv$ of variables.
If both $\tau$ and $\Vv$ are finite, we say $(\tau,\Vv)$ is a finite type.
A dependence model $\M$ of type $(\tau,\Vv)$
consists of a $\tau$-structure $\frM$ with universe $M$,
and a nonempty team $\TM$ with domain $\Vv$ and co-domain $M$. 
Pointed dependence models $\M,\TMS$ further distinguish a ``current'' assignment $\TMS \in \TM$.
\end{Definition}

\begin{Definition}\label{def:lfd-syntax}
The syntax of formulae in $\LFD(\tau,\Vv)$ is given by
    \[
        \phi\quad \Coloneqq
            \quad R\tx
            \quad\mid\quad D_Xy
            \quad\mid\quad \lnot \phi
            \quad\mid\quad \phi \land \phi
            \quad\mid\quad \QD_X\phi,
    \]
    where $R\in \tau$ is a relation symbol,
    $\tx $ is a tuple of variables of appropriate length,
    $X \subseteq \Vv$ is \emph{finite}, and $y \in \Vv$.
    Further, $\LFD^=$ is the extension of $\LFD$ by equality atoms $x=y$. We shall also make use of the
    following notations:
     \begin{itemize}
        \item The dual of $\QD_X\phi$ is defined as $\QE_X\phi \coloneqq \lnot \QD_X\lnot \phi$.
        \item For the special case $X = \emptyset$ we use $\QGA\phi \coloneqq \QD_\emptyset\phi$ and $\QGE \phi \coloneqq \QE_\emptyset \phi$.
        \item We use $\QD_{\tx}\phi \coloneqq \QD_{[\tx]}\phi$
            and likewise for the other quantifiers.
    \end{itemize}
\end{Definition}

We refer to the $\QD_X$ and $\QE_X$ as dependence quantifiers or modalities of $\LFD$,
and also call $\QGA$ and $\QGE$ global modalities.
As we will often deal with dependence on \emph{sets} of variables,
we will use the notation $\TMS =_X \TMT$ whenever the assignments $\TMS,\TMT \in \TM$ agree
on the set of variables $X \subseteq \Vv$, i.e.~$\TMS(x) = \TMT(x)$ for all $x \in X$.
Note that ${=_\emptyset} = \TM \times \TM$.

\begin{Definition}\label{def:lfd-semantics}
The semantics of $\LFD$ and $\LFD^=$ on a dependence model $\M$ is defined by the usual rules 
for atomic formulae and connectives together with
\begin{alignat*}{3}
    &\M \models_\TMS D_Xy
        &&\qquad\defiff\qquad \TMT =_X \TMS \text{ implies } \TMT =_y \TMS \text{ for all $\TMT \in \TM$},\\
    &\M\models_\TMS \QD_X\phi
        &&\qquad \defiff\qquad \M \models_\TMT \phi \text{ for all $\TMT \in \TM$ with $\TMT =_X \TMS$}.
\end{alignat*}
Obviously, $\QE_X,\QGE,\QGA$ then have the expected semantics
\begin{alignat*}{3}
    &\M \models_\TMS \QE_X\phi
        &&\qquad\Iff\qquad \M \models_\TMT \phi \text{ for some $\TMT\in \TM$ with $\TMT =_X \TMS$},\\
    &\M \models_\TMS \QGE\phi
        &&\qquad\Iff\qquad \M \models_\TMT \phi \text{ for some $\TMT\in \TM$},\\
    &\M \models_\TMS \QGA\phi
        &&\qquad\Iff\qquad \M \models_\TMT \phi \text{ for all $\TMT\in \TM$}.
    \end{alignat*}
\end{Definition}

Of course, $\QD_X$ and $\QE_X$ are variants of the traditional quantifiers $\forall$ and
$\exists$, and we briefly want to justify the use of these new quantifiers instead of the
classical ones.
The main point is that the reasoning about free and bound variables becomes more transparent.
Indeed, it is a desirable feature that the meaning of a formula should only depend on its free
variables, in the sense that if $\TMS =_{\Free(\phi)} \TMT$ then it should be the case that
$\M \models_\TMS \phi$ if, and only if, $\M \models_\TMT \phi$.
By defining  $\Free(D_Xy) = \Free(\QD_X\phi) = X$, this is easily seen to be the case.
Using a naive interpretation of traditional quantifiers, saying that $\A y\phi$ holds 
for the assignment $\TMS$ if, and only if $\phi$ holds for all assignments $\TMT$
that agree with $\TMS$ on all variables except $y$, we would get the unwanted    
behaviour\footnote{Often called non-locality, although we do not want to get these two notions of locality mixed up.}
that $\A y \phi$ may depend on variables not occurring free in it. 
Indeed, it is easy to construct a team in which the formula $\forall y Py$
is true at some assignments and false at others, although all assignments obviously
coincide on the free variables of $\A yPy$ (of which there are none).
On the other side, Baltag and van Benthem showed that such
problems do not occur if one translates dependency quantifiers into universal ones 
by $\QD_{\tx}\phi\ \Iff\ \A\tz\phi$,
where $\tz$ is an enumeration of $\Vv \setminus [\tx]$.

\medskip
The semantics of $\LFD$ formalizes a \emph{local} notion
of dependence, as in \cref{ex:local-dep}.
It is also this locality, together with the semantics of the dependence
quantifiers $\QD_X$ and $\QE_X$, which emphasizes the modal character of $\LFD$.
Indeed, notice the similarities to the modalities $\square$ and $\lozenge$ of
propositional modal logic $\ML$; the binary relations $=_X$ on  teams can be viewed as
the accessibility relations of the modality $\QD_X$ and its dual $\QE_X$ on the team.
In this sense, $\M$ can be viewed as a Kripke structure
that has $\TM$ as its universe and accessibility relations $(=_X)$ for $X \subseteq \Vv$.
Note that global functional dependence
is expressible in $\LFD$ via $\QGA D_Xy$,
which guarantees that $X$ locally determines $y$ at every assignment in the team,
i.e.~that $X$ determines $y$ in the whole team.

We shall see that there are important differences between $\LFD$ and $\LFD^=$.
In particular, without equality, we can assume, without loss of generality, that
the teams $\TM$ that we consider are variable-distinguished, i.e. $\TM(x)\cap \TM(y)=\emptyset$
for distinct variables $x,y$.

\begin{proposition}\label{fact:all-dist}
    Let $\M$ be a dependence model with universe $M$ and variables $\Vv$.
    Set $N = M \times \Vv$, $\TN = \{\TNS \mid \TMS \in \TM\}$ where $\TNS(x) = (\TMS(x),x)$,
    and construct $\frN$ with universe $N$ by adapting the relations
    so that $\frN \models_{\TNS} R\tx\ \Iff\ \frM \models_\TMS R\tx$.
    Then $\N$ is variable-distinguished and $\LFD$-equivalent to $\M$.
\end{proposition}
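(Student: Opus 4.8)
The plan is to construct the variable-distinguished model $\N$ exactly as prescribed in the statement and then verify, by induction on the structure of $\LFD$-formulae, that the map $\TMS \mapsto \TNS$ preserves satisfaction. The key observation driving everything is that this map is a bijection between the teams $\TM$ and $\TN$, and that it \emph{reflects and preserves agreement on every set of variables}: for any $X \subseteq \Vv$ we have $\TMS =_X \TMT$ if, and only if, $\TNS =_X \TNT$. Indeed, $\TNS(x) = (\TMS(x),x)$ agrees with $\TNT(x) = (\TMT(x),x)$ precisely when $\TMS(x) = \TMT(x)$, since the second coordinate is fixed to $x$ in both. This equivalence is what makes the whole translation go through, since the semantics of $\LFD$ refers to $\TM$ only through the relations $=_X$ and the structure's relations.

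\textbf{The induction.} First I would define $\frN$ precisely: for each $k$-ary $R \in \tau$ we set $R^{\frN} = \{(\TNS(x_1),\dots,\TNS(x_k)) : \TMS \in \TM,\ (\TMS(x_1),\dots,\TMS(x_k)) \in R^{\frM}\}$, or more simply stipulate $R^{\frN}$ so that the displayed equivalence $\frN \models_{\TNS} R\tx \Iff \frM \models_\TMS R\tx$ holds; one checks this is well-defined because the tags recover the variable. The inductive claim is that $\M \models_\TMS \phi \Iff \N \models_{\TNS} \phi$ for all $\TMS \in \TM$ and all $\phi \in \LFD(\tau,\Vv)$. The atomic case $R\tx$ holds by construction of $\frN$. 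The cases for $\lnot$ and $\land$ are immediate from the induction hypothesis. For the dependence atom $D_X y$, I use that $\TNS =_X \TNT$ iff $\TMS =_X \TMT$ and likewise for $=_y$, so the quantification $(\A \TMT \in \TM)$ transports verbatim to $(\A \TNT \in \TN)$ along the bijection. For $\QD_X \phi$, the set of $\TMT$ with $\TMT =_X \TMS$ corresponds bijectively to the set of $\TNT$ with $\TNT =_X \TNS$, so the claim follows by applying the induction hypothesis to $\phi$ at each such assignment.

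\textbf{Variable-distinguishedness.} Separately, I would verify that $\N$ really is variable-distinguished. For distinct variables $x \neq y$ and any $\TMS,\TMT \in \TM$, the values $\TNS(x) = (\TMS(x),x)$ and $\TNT(y) = (\TMT(y),y)$ can never coincide, because their second coordinates $x$ and $y$ differ. Hence $\TN(x) \cap \TN(y) = \emptyset$, as required. I would also note that $\TN$ is nonempty (since $\TM$ is) and has domain $\Vv$, so $\N$ is a legitimate dependence model of type $(\tau,\Vv)$.

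\textbf{The main obstacle} is essentially cosmetic rather than mathematical: there is no hard step here, but the verification must be carried out in a way that makes transparent \emph{why equality is not permitted} in the logic. The construction tags each value with its variable, which deliberately destroys any coincidences of values across different variables; an equality atom $x = y$ could detect this destruction, so the argument would break for $\LFD^=$. I would therefore be careful to flag that the induction is valid exactly because the syntax of $\LFD$ contains no equality atoms — the only way a formula can compare the values of two distinct variables is through a relation symbol $R$, which is faithfully transported by the construction of $\frN$. This is the point worth emphasising, as it foreshadows the later result that $\LFD^=$ behaves very differently.
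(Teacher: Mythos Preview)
Your proof is correct and is exactly the routine verification one would expect; the paper in fact states this proposition without proof, treating it as an evident fact. Your key observation---that the bijection $\TMS \mapsto \TNS$ preserves and reflects agreement on every set of variables---is precisely the point the paper later relies on implicitly when it invokes this proposition, and your remark about why equality must be excluded is apt.
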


\subsection{A general definition for localised logics}

To consider logics based on other local atoms than the local dependency atom,
we generalise the logic $\LFD$ described in the last subsection.
We will use a basic sublogic $\L$ of relational atoms and boolean connectives.
This is extended by our local atoms and the dependence quantifiers $\QD_X$ and $\QE_X$
introduced in the last section.
The main difference is that we will allow negation only on relational atoms,
to further differentiate between different local atoms.

Now let $\Omega\subseteq\{D, \anon, =, \neq, \in ,\not\in, \Ind,\dots\}$
be a collection of atomic operators, which, applied to appropriate tuples
of variables from $\Vv$, define local atoms $\beta(\tx)$ such as
$x=y$, $x\neq y$, $D_{\tx} y$, $\anon_{\tx} y$, $\tx\in \ty$, $\tx\not\in \ty$, $\Ind_{\tx} \ty$, etc.
We denote the collection of such atoms by $\Omega[\Vv]$.

\begin{Definition}
    The syntax of formulae in the local team logic $\L[\Omega]$ of type $(\tau,\Vv$) is given by
    \[
        \phi\quad \Coloneqq
            R\tx
            \ \mid\ \neg R\tx
            \ \mid\ \beta(\tx)
            \ \mid\ \phi \lor \phi
            \ \mid\ \phi \land \phi
            \ \mid\ \QE_X\phi
            \ \mid\ \QD_X\phi
    \]
    where $R\in \tau$ is a relation symbol,
    $X \subseteq \Vv$ is finite, $\tx $ is a tuple of variables in $\Vv$ of appropriate length,
    and $\beta(\tx)\in\Omega[\Vv]$.
    Formulae are evaluated on dependence models $\M$ consisting of $\tau$-structure $\frM$ and a
    team $\TM$ with domain $\Vv$ and values in $\frM$, locally at some assignment $\TMS \in \TM$.
    The rules to determine whether $\M\models_\TMS \phi$, for $\phi\in L[\Omega]$ extend
    the truth definitions for $\tau$-literals $R\tx$ and $\neg R\tx$,
    and for local team atoms $\beta(\tx)\in\Omega[\Vv]$
    by the standard rules for boolean connectives and
    the rules for dependence quantifiers $\QD_X,\QE_X$ as in \cref{def:lfd-semantics}.
\end{Definition}

The free variables of $=,\neq$ are defined as usual.
For the other atoms,
we set $\Free(D_Xy) = X$,
$\Free(\tx \in \ty) = [\tx]$,
and $\Free(\Ind_{\tx}(\ty)) = [\tx]$,
where the negations of these atoms have the same free variables.

We shall also use the operators  $\QGA$ and $\QGE$ explained above.
Note that we can now state our definition that some atom $\beta$ is the local
variant of a global atom $\alpha$ via
\[
    M \models_\TM \alpha
    \quad\Iff\quad (M,\TM) \models \QGA \beta
    \quad\defiff\quad (M,\TM) \models_\TMS \beta\text{ for all } \TMS \in \TM.
\]
We will thus allow ourselves to use $\alpha$ in $\L[\beta]$,
with the intended semantics $\alpha \equiv \QGA \beta$, e.g.~we can write $\dep(x,y)$ instead
of $\QGA D_xy$ within $\L[D]$.

Obviously, the logics $\LFD$ and $\LFD^=$ of Baltag and van Benthem 
are equivalent to $\L[D,\anon]$ and $\L[D,\anon,=,\neq]$, respectively.

\subsection{The standard translation into first-order logic}\label{subsec:trans}

A dependence model $\M$ of type $(\tau,\Vv)$ with finite $\Vv$,
enumerated as $\Vv=\{v_1\dots,v_n\}$, can be viewed as a structure of vocabulary $\tau\cup\{\TM\}$,
expanding $\frM$ by a predicate (of arity $|\Vv|=n$) for the team.
It is not difficult to see that, for any of the logics $\L[\Omega]$ described above,
there is straightforward translation that associates with every formula
$\phi\in\L[\Omega]$ of type $(\tau,\Vv)$ an equivalent formula $\phi^* \in \FO$
of vocabulary $\tau\cup\{\TM\}$, which means that,
for all $\M$ and all $\TMS\in \TM$, we have that
\[
    \M\models_\TMS \phi \ \Iff\  \M\models \phi^*(\TMS(\tv))
\]
Notice that the semantics on the left side is the semantics of
$\L[\Omega]$ whereas on the right side we have the
standard Tarski semantics of first-order logic. 

For the case of $\LFD$ and $\LFD^=$ this has been called
the \emph{standard translation} in \cite{BaltagBen20},
and it is a straightforward generalisation of the translations of modal logics into first-order logic.  
The rules of the translation are trivial for $\tau$-literals and Boolean connectives.
Quantifiers are translated via relativisation to the team predicate.
Given a formula $\phi$, $X \subseteq \Vv$ and a tuple $\tz$ enumerating $\Vv \setminus X$
(in particular, $\tz$ is a subtuple of $\tv$), we put
\[
    (\QE_X\phi)^*:= \E \tz (\TM\tv \land \phi^*)
    \quad\text{ and }
    \quad (\QD_X\phi)^*:= \A \tz (\TM\tv \ra \phi^*).
\]
It remains to provide translations for the local team atoms $\beta(\tx)\in\Omega[\Vv]$.
The translation preserves their free variables.
For ease of notation, we explicitly give the translations just for atoms with two variables
$v_i, v_j$ from $\Vv$. It is obvious that the translations generalise to arbitrary tuples.
Let $\tz,\tz'$ be distinct copies of $\tv = (v_1,\dots,v_n)$.
For $j\leq n$, let $\tz_{-j}$ denote the tuple obtained by omitting $z_j$, and
let $\tz[j\mapsto y]=(z_1,\dots,z_{j-1},y,z_{j+1},\dots,z_n)$. Then we set
\begin{align*}
    (v_i \in v_j)^*     &:= \E \tz (\TM\tz \land z_j=v_i),\\
    (D_{v_i} v_j)^*     &:= \A \tz_{-i}\A\tz'_{-i}((\TM\tz[i \mapsto v_i] \land \TM\tz'[i \mapsto v_i]) \to z_j = z'_j),\\
    (\Ind_{v_i} v_j)^*  &:= \A \tz (\TM\tz \ra \E \tz' (\TM\tz' \land z'_i= v_i \land z'_j=z_j)).
\end{align*}
For the negations of these atoms, we consider the negation of their translation,
so for example $(v_i \notin v_j)^* = \lnot(v_i \in v_j)^*$.

Thus, the local logics of dependence and independence $\L[\Omega]$ can all be considered
as fragments of first-order logic. We remark that this restricted expressive power, compared to
the global logics of dependence and independence which are fragments of existential second-order logic, 
is not just due to the localisation of the team atoms.
Indeed, as we have seen, the global dependency atoms are easily expressible in the local logics,
via a further universal quantification with $\QGA$. But the localisation of the dependencies,
together with the global restriction of the available assignments to a \emph{fixed team},
permits to evaluate these logics by first-order rules,
as opposed to the inherent second-order operations in team semantics
such as the decomposition of teams, and their Skolem extensions.
In the next sections, we explore whether this more limited expressiveness is balanced by the benefits
of better algorithmic manageability and convenient model-theoretic properties.

\section{Decidability and undecidability}

We now want to discuss the question which of the logics $\L[\Omega]$ are decidable for
satisfiability. An obvious road for proving this is to show that the standard translation puts $\L[\Omega]$ 
into a known decidable fragment of $\FO$. This works in some cases, but not always, and
in particular, $\LFD$ seems to resist such an approach.
By different methods, Baltag and van Benthem have shown that $\LFD$ is decidable, and have
formulated the corresponding question for $\LFD^=$ as an open problem. We shall prove below that
$\LFD^=$ is undecidable.

\subsection{Embeddings into the guarded fragment}

A known decidable fragment of $\FO$ is the guarded fragment $\GF$.
In general, guarded logics arise as a natural generalization of modal logics.
Consider the standard translation $\phi\mapsto \phi^*(x)$
that identifies propositional modal logic $\ML$ 
with the modal fragment of $\FO$, by rewriting modal operators as relativised
quantifiers:
\[
    (\square \phi)^* = \forall y (Exy \impl \phi^*(y))
    \qquad\text{and}\qquad
    (\lozenge \phi)^* = \exists y(Exy \land \phi^*)).
\]
The guarded fragment $\GF$, introduced in \cite{AndrekaBenNem98}, 
generalises this idea to a much more powerful setting of first-order logic with 
an arbitrary relational vocabulary and an arbitrary number of variables.  
It lifts all restrictions of the modal fragment, except for the 
requirement that all quantifiers must be relativised (guarded) by some atom
that contains all free variables of the quantified formula.
More formally, $\GF$ is the smallest fragment of relational $\FO$ generated
from atomic formulae by propositional connectives and guarded quantification:
if  $\phi(\tx,\ty)$ is a formula in $\GF$ and $\alpha(\tx,\ty)$ is an atomic formula
that contains all free variables of $\phi(\tx,\ty)$,  then also 
\[
    \forall \ty(\alpha(\tx,\ty) \impl \phi(\tx,\ty)) 
    \qquad\text{and}\qquad
    \exists \ty(\alpha(\tx,\ty) \land \phi(\tx,\ty)) 
\]
are formulae of $\GF$.

The guarded fragment preserves, and to some degree explains, many of the good
model-theoretic and algorithmic properties of modal logics.
In particular, $\GF$ is decidable \cite{AndrekaBenNem98} and indeed has the finite model property
\cite{Graedel99a}: every satisfiable formula of $\GF$ has a finite model.

If we consider the standard translation of the logics $\L[\Omega]$ into $\FO$, we see that
that the translations of the logical operators preserve guardedness, so the questions is
just, which of the local atoms $\beta(\tx)$ can be rewritten as a guarded formula.
Our previous first-order translation of the local inclusion and exclusion atoms
can be rewritten as
\[
    (v_i \in v_j)^*:= \E \tz_{-j} \TM\tz[j\mapsto v_i]
    \quad\text{ and }\quad
    (v_i \not\in v_j)^*:= \A \tz_{-j} (\TM\tz[j\mapsto v_i]\ra \bot)
\] 
which are guarded formulae.

\begin{proposition}
    The local variant of inclusion-exclusion logic $\L[=,\neq,\in,\not\in]$ 
    is a fragment of $\GF$. In particular, it has the finite model property
    (and is therefore decidable for satisfiability and validity).
\end{proposition}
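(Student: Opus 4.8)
The plan is to show that the standard translation of \cref{subsec:trans} maps every formula of $\L[=,\neq,\in,\not\in]$ into $\GF$, and then to read off both the finite model property and decidability from the corresponding properties of $\GF$. I would establish the embedding by induction on the structure of $\phi\in\L[=,\neq,\in,\not\in]$, maintaining the invariant that $\phi^*$ is a $\GF$-formula whose free variables are among $\tv$. For the base cases, the $\tau$-literals $R\tx$ and $\neg R\tx$ translate to themselves and are quantifier-free, hence in $\GF$; the equality atoms $v_i = v_j$ and $v_i \neq v_j$ are likewise quantifier-free. The inclusion and exclusion atoms are the only atoms whose translation introduces a quantifier, and the rewritings $(v_i\in v_j)^*=\E\tz_{-j}\,\TM\tz[j\mapsto v_i]$ and $(v_i\not\in v_j)^*=\A\tz_{-j}(\TM\tz[j\mapsto v_i]\ra\bot)$ given just above the statement already exhibit them as guarded: the single team atom $\TM\tz[j\mapsto v_i]$ serves as a guard containing both the quantified variables $\tz_{-j}$ and the free variable $v_i$.

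For the induction step, conjunction and disjunction are immediate since $\GF$ is closed under Boolean combinations, and for the dependence quantifiers the translations $(\QE_X\phi)^*=\E\tz(\TM\tv\land\phi^*)$ and $(\QD_X\phi)^*=\A\tz(\TM\tv\ra\phi^*)$ are guarded by the full team atom $\TM\tv$, which trivially contains every free variable of $\phi^*$ (all of which lie among $\tv$). Hence $\phi^*\in\GF$ throughout, and the whole translation stays inside $\GF$.

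The part requiring the most care — and the only genuine obstacle, since everything else is essentially spelled out in the excerpt — will be the generalisation of the inclusion and exclusion translations from single variables to tuples $\tx\in\ty$ and $\tx\not\in\ty$. For distinct component positions one simply substitutes each value of $\tx$ into the corresponding position of the team atom, keeping a single atom $\TM\tz[\dots]$ as guard; the subtle case is when $\ty$ repeats a variable, in which the inclusion additionally forces the corresponding components of $\tx$ to coincide. I would handle this by conjoining the relevant equality atoms in front of the guarded part, which keeps the result in $\GF$, so the embedding holds for arbitrary tuples.

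Finally, to obtain the finite model property I would observe that $\phi$ is $\L[=,\neq,\in,\not\in]$-satisfiable if, and only if, the $\GF$-sentence $\psi\coloneqq\E\tv(\TM\tv\land\phi^*)$ is satisfiable over $\tau\cup\{\TM\}$-structures: by the standard translation a witnessing current assignment $\TMS\in\TM$ corresponds exactly to a tuple in the team predicate, and $\psi$ forces this predicate to be nonempty. Since $\psi$ is guarded by $\TM\tv$ and $\phi^*\in\GF$, the finite model property of $\GF$ \cite{Graedel99a} yields a finite model of $\psi$, whose (nonempty) team predicate reads back as a finite team and thus as a finite dependence model of $\phi$. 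Decidability of satisfiability, and dually of validity, then follows from the decidability of $\GF$ \cite{AndrekaBenNem98} via the same reduction to $\psi$.
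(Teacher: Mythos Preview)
Your proposal is correct and follows the same approach as the paper, which in fact does not give a separate proof but simply states the proposition after observing that the logical operators preserve guardedness and that the inclusion/exclusion atoms admit the guarded rewritings you quote. You supply the details the paper leaves implicit---the induction, the tuple generalisation (where your treatment of repeated variables in $\ty$ is the right fix), and the explicit reduction of satisfiability to the guarded sentence $\E\tv(\TM\tv\land\phi^*)$---so this is a faithful and complete elaboration of the paper's argument.
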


Recall that the global inclusion-exclusion logic, instead,
has the full power of independence logic and $\Sigma^1_1$.
The standard translations of local dependence, local anonymity,
and local independence are, however, not guarded.
For local independence logic, a straightforward argument shows that even without equality,
one can enforce cartesian products within the assignment space,
leading to the expressive power of usual first-order logic.
Hence $\L[\Ind]$ is already undecidable, and can therefore not be embedded into $\GF$.
The relationship of $\LFD$ with the guarded fragment,
or other guarded logics, is more difficult to analyse; 
we shall address this issue in \cref{sec:comp:gf}.

\subsection{Other decidability arguments}

\begin{proposition}
    $\L[D,\anon,\neq,\notin]$ is decidable for satisfiability.
\end{proposition}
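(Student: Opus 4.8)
The plan is to reduce satisfiability for $\L[D,\anon,\neq,\notin]$ to satisfiability for $\LFD=\L[D,\anon]$, which Baltag and van Benthem proved decidable. The decisive structural feature is that in $\L[\Omega]$ negation is allowed only on relational atoms, so the atoms $\neq$ and $\notin$ occur only \emph{positively}. I would combine this positivity with the variable-distinguishing construction of \cref{fact:all-dist}, under which these two atoms collapse to fixed truth values depending solely on the syntactic variables involved.

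First I would prove a monotonicity lemma for the construction $\M \mapsto \N$ of \cref{fact:all-dist}. The bijection $\TMS \mapsto \TNS$ preserves agreement on every single variable, hence $\TNS =_X \TNT$ if, and only if, $\TMS =_X \TMT$, so it is an isomorphism of the underlying Kripke accessibility structure $(\TM,(=_X)_{X})$. Relational literals and the atoms $D_Xy$ and $\anon_Xy$ are preserved exactly. The atoms $v_i \neq v_j$ and $\tx \notin \ty$, on the other hand, can only turn from false to true: for syntactically distinct variables (resp. tuples) they become \emph{true} in $\N$, because distinct variables receive disjoint value sets under the tagging $\TNS(x)=(\TMS(x),x)$, whereas for identical variables (resp. tuples) they stay false. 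Since $\neq$ and $\notin$ appear only positively and the remaining operators $\land,\lor,\QD_X,\QE_X$ are monotone, induction on $\phi$ yields $\M \models_\TMS \phi \Rightarrow \N \models_{\TNS}\phi$. Hence every satisfiable formula is satisfiable in a variable-distinguished model.

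Next I would observe that over variable-distinguished models the truth value of $v_i\neq v_j$ and of $\tx\notin\ty$ no longer depends on the model at all: it is determined purely by whether the variables (resp. variable-tuples) are syntactically identical. Replacing each such atom in $\phi$ by $\top$ or $\bot$ accordingly yields a formula $\phi'$ that is equivalent to $\phi$ on all variable-distinguished models and that lies in $\LFD$. Applying \cref{fact:all-dist} to the $\LFD$-formula $\phi'$ shows that $\phi'$ is satisfiable if, and only if, it is satisfiable in a variable-distinguished model. Chaining these equivalences, $\phi$ is satisfiable if, and only if, $\phi'$ is satisfiable, and the latter is decidable; the reduction $\phi\mapsto\phi'$ is clearly effective.

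The main obstacle is the monotonicity lemma, and within it the universal modality $\QD_X$: I must ensure that strengthening the positive atoms $\neq,\notin$ underneath $\QD_X$ can only preserve, never destroy, truth. This is exactly where the isomorphism of the accessibility relations $=_X$ under the team bijection is essential, since it guarantees that the $X$-neighbourhood of $\TNS$ corresponds precisely to that of $\TMS$, so that neither spurious witnesses nor spurious counterexamples are introduced. The tuple versions of $\neq$ and $\notin$ require only the routine observation that two value-tuples built over distinct variable-tuples differ in the tagging component, which I would state but not belabour.
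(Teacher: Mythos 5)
Your proof is correct and follows essentially the same route as the paper: the same monotonicity induction over the variable-distinguished construction of \cref{fact:all-dist} (exploiting that $\neq$ and $\notin$ occur only positively), the same syntactic replacement of these atoms by $\top$/$\bot$ over variable-distinguished models, and the same resulting satisfiability-preserving reduction to $\LFD$. Your explicit remarks on the accessibility-relation isomorphism and the false-to-true monotonicity of the atoms just make precise what the paper's induction states more briefly.
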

\begin{proof}
    Given a dependence model $\M$ we denote by $\N$
    the corresponding $\LFD$-equivalent variable-distinguished (v.d.) model,
    as described in \cref{fact:all-dist}.
    We first prove via induction on $\phi \in \L[D,\anon,\neq,\notin]$ that
    whenever $\M \models_\TMS \phi$, we also have $\N \models_{\TNS} \phi$.

    If $\phi \in \{\tx \neq \ty,\ \tx \notin \ty\}$ and $\M \models_\TMS \phi$, then
    we know that $\tx \neq \ty$ \emph{as tuples of variables},
    so obviously $\N \models_{\TNS} \phi$.
    If otherwise $\phi \in L[D,\anon] \equiv \LFD$,
    this follows from $\M,\TMS \equiv_{\LFD} \N,\TNS$.
    The induction step for boolean connectives $\land$ and $\lor$ is clear.
    Also, note that if $\TMS,\TMT \in \TM$, then we have
    $\TMS(\tx) = \TMT(\tx)$ if and only if $\TNS(\tx) = \TNT(\tx)$.
    Hence the induction steps for the quantifiers $\QE_{\tx}\phi$ and $\QD_{\tx}\phi$
    are straightforward:
    If $\M \models_\TMS \QD_X\phi$ and $\TNT \in \TN$ with $\TNT =_X \TNS$,
    then also $\TMT =_X \TMS$, so $\M \models_\TMT \phi$. By induction hypothesis we obtain
    $\N \models_{\TNT} \phi$. Since $\TNT \in \TN$ with $\TNT =_X \TNS$ was arbitrary,
    we obtain $\N \models_\TNS \QD_X\phi$. The induction step for $\QE_X$ is analogous.
    This concludes the induction.

    In the following, let $[\tx \neq \ty] = \top$ if $\tx \neq \ty$ (as tuples of variables),
    and otherwise $[\tx \neq \ty] = \bot$.
    Then, since $\N$ is v.d., we have
    \[
        \N \models_{\TNS} \tx \neq \ty
        \quad\Iff\quad \N \models_{\TNS} [\tx \neq \ty].
    \]
    An analogue holds for $\tx \notin \ty$, so that over the class of v.d.~dependence models,
    we have $\tx \neq \ty \equiv \tx \notin \ty \equiv [\tx \neq \ty]$,
    and thereby $\L[D,\anon,\neq,\notin] \equiv \L[D,\anon] \equiv \LFD$.
    Finally, since the v.d.~model is $\LFD$-equivalent to the original,
    we see that $\LFD$-formulae are satisfiable if and only if they have a v.d.~model.
    Given $\phi \in \L[D,\anon,\neq,\notin]$, let $\phi'$ denote the $\LFD$-formula
    obtained from $\phi$ by replacing all occurrences of $\tx \neq \ty$ or $\tx \notin \ty$.
    Then
    \begin{align*}
        \phi\text{ satisfiable} &\quad\Iff\quad \M \models_\TMS \phi\text{ for some pointed dependence model }\M,s\\
                                &\quad\Iff\quad \N \models_{\TNS} \phi\\
                                &\quad\Iff\quad \N \models_{\TNS} \phi'\\
                                &\quad\Iff\quad \phi'\text{ satisfiable}.
    \end{align*}
    Thus we have constructed a satisfiability-preserving reduction from
    $\L[D,\anon,\neq,\notin]$ to $\LFD$.
    Since $\LFD$ is known to be decidable for satisfiability \cite{BaltagBen20}, this concludes the proof.
\end{proof}

\subsection{Undecidability: dependence together with equality or inclusion}

We now solve a main open problem from \cite{BaltagBen20} by proving that
$\LFD^=$ is undecidable.
In fact, we will prove that $\L[D,\in]$ is undecidable
and then show how to adapt the argument to $\L[D,=]$,
which is contained in $\L[D,\anon,=,\neq] \equiv \LFD^=$.
The crucial reason for the undecidability is that,
in the presence of either equality or inclusion,
no analogue of \cref{fact:all-dist} can be established,
and we can use these atoms to copy values between variables,
while keeping certain other values fixed.
We demonstrate this idea for inclusion:
\begin{example}
    Consider a dependence model $\M$ over variables $\Vv = \{x,y,z\}$
    such that $\M \models xyx \subseteq xyz$, i.e.~$\M \models \QGA(xyx \in xyz)$,
    meaning $\M \models_\TMS xyx \in xyz$ for all $\TMS \in \TM$.
    Then, given some $\TMS \in \TM$ with $\TMS(x,y,z) = (a,b,c)$, there exists $\TMT \in \TM$
    such that $\TMT(x,y,z) = (a,b,a)$. In this sense, we copied the value $a$ of $x$
    to the variable $z$, while keeping the value of the variables $x$ and $y$ fixed.
\end{example}

We recall some basic notions concerning the classical decision
problem for first-order logic.
For details we refer  to \cite[Chapter 3.1]{BoergerGraGur97}.
A syntactic fragment $F\subseteq\FO$ is called a
\emph{conservative reduction class} if there exists a conservative
reduction $f:\FO\ra F$, i.e. a computable function
that preserves satisfiability and finite satisfiability in both directions.
If $F$ is a conservative reduction class, then 
by Trakhtenbrot's Theorem, the satisfiability and finite satisfiability problems for 
$F$ are undecidable.

One of the classical conservative reduction classes is
the Kahr-Class, denoted $[\forall\exists\forall, (\omega,1)]$,
which consists of the sentences of form $\forall x \exists y \forall z \phi(x,y,z)$
where $\phi$ is quantifier-free, without equality, and
which may only use a single binary relation but an unbounded number of monadic ones.

We shall construct a conservative reduction from the Kahr-Class
into fragments of $\L[D,\in]$ and $\L[D,=]$.
We require only: four variables,
one binary predicate and an unbounded number of monadic ones,
one dependence atom (used positively), and six inclusions/equalities (also used positively).

\begin{theorem}\label{thm:cons-red}
    The four-variable fragments of $\L[=,D]$ and $\L[D,\in]$ are conservative reduction classes.
    In particular, the satisfiability, validity and finite satisfiability problems for $\LFD^=$ are undecidable.
\end{theorem}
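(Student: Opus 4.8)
The plan is to build a conservative reduction $f$ from the Kahr-Class $[\forall\exists\forall,(\omega,1)]$ into the four-variable fragment, so that by Trakhtenbrot's theorem undecidability of satisfiability, validity and finite satisfiability follows immediately (note validity reduces to unsatisfiability of the negation). A Kahr sentence has the shape $\forall x\exists y\forall z\,\phi(x,y,z)$ with $\phi$ quantifier-free, equality-free, using one binary relation $E$ and finitely many monadic predicates. The central difficulty, flagged by the authors, is that $\LFD$-style quantifiers $\QD_X,\QE_X$ do not let us freely pick \emph{arbitrary} new values for a variable the way ordinary first-order quantifiers do: moving along $=_X$ only reaches assignments already present in the fixed team, and crucially we cannot \emph{copy} the value of one variable into another while holding a third fixed. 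This is exactly what the inclusion (or equality) atom repairs, as the inclusion example before the theorem shows. So the heart of the construction is to simulate genuine first-order quantification over the universe by quantifying over team-assignments, using the dependence atom to force the team to encode a single intended binary relation and using a handful of positive inclusions/equalities to realise the value-copying.

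Concretely, I would set up four variables, say $u,v,w,t$, designate a binary predicate whose extension over the team encodes the Kahr $E$, and arrange the team so that its projections onto appropriate pairs of variables range over all of $M\times M$; the local dependence atom $\QGA D_X y$ (which, as the excerpt notes, expresses global functional dependence) is used to pin down that the encoded relation is well-defined as a function of the relevant coordinates, i.e.\ that the binary predicate read off at an assignment depends only on the two ``argument'' variables and not on the bookkeeping ones. The quantifier prefix $\forall x\exists y\forall z$ is then translated by nesting $\QGA$ (the global $\forall$ over the team) and $\QE_X$/$\QD_X$ modalities, where the subscript sets $X$ are chosen to fix exactly the coordinates that the outer quantified variables have already bound. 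The inclusions/equalities are inserted at the points where the simulated $\exists y$ and the innermost $\forall z$ need a fresh element of $M$ to appear in a designated coordinate: an inclusion of the form $\bar a\,c\subseteq \bar a\,w$ lets us copy an arbitrary universe element into coordinate $w$ while keeping the already-bound coordinates $\bar a$ fixed, which is precisely the move that ordinary first-order $\exists/\forall$ make but the bare dependence quantifiers cannot.

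For the two reduction targets I would give one uniform construction and then specialise: in $\L[D,\in]$ the copying uses inclusion atoms directly, while for $\L[D,=]$ I would replace each inclusion by an equality-based copy, exploiting that with equality present we may reuse a spare variable, set it equal to the source coordinate via an $x=y$ atom under a $\QE$, and read it off in the target position; the counting (four variables, one dependence atom, six inclusions/equalities, all positive) drops out of this bookkeeping. The two remaining verification obligations are the two directions of the conservative reduction. For soundness, given a model $\frA$ of the Kahr sentence I construct the dependence model $\M=f^{-1}$-style by taking the team to be a suitable set of tuples over $A$ (e.g.\ all $4$-tuples, or those encoding the graph of $E$) so that the translated modalities evaluate correctly, and I check that finiteness of $A$ yields a finite team, hence finite-satisfiability is preserved. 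For completeness, given any dependence model of the translated formula I read off a first-order structure on $M$ — defining $E$ from the designated binary predicate and the monadic predicates from theirs — and verify the Kahr sentence holds there; the dependence and inclusion/equality atoms guarantee respectively that $E$ is well-defined and that enough witnesses exist. The main obstacle, and where I would spend the most care, is arranging the subscript sets $X$ of the modalities together with the placement of the copying atoms so that ``quantifying over assignments that agree on $X$'' faithfully mimics ``quantifying over $M$'': one must ensure no spurious assignments sneak in (soundness) and that every needed element is actually realised somewhere in the team (completeness), which is the delicate interplay that forces the precise pattern of dependence and inclusion atoms.
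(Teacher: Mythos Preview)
Your broad framework is right --- the paper also reduces from the Kahr class, uses four variables, and the copying mechanism via inclusion atoms (later transcribed to equalities via $\QE_X(y=z)$) is exactly as you sketch. But you have misidentified the role of the single dependence atom, and this is the central idea you are missing.

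The dependence atom is \emph{not} used to make any ``encoded binary relation well-defined'': the binary predicate $E$ of the Kahr sentence simply lives in $\tau$, and $Exz$ is evaluated directly on the underlying structure --- there is nothing to encode and no well-definedness issue. Instead, the atom $\dep(x,y)\equiv\QGA D_x y$ forces $y$ to be a \emph{function} of $x$ across the whole team; this function $f$ is precisely a Skolem function for the existential quantifier in $\forall x\exists y\forall z\,\phi$. The translated formula is then simply
\[
\psi^* \;=\; \QGA\phi(x,y,z)\ \land\ \dep(x,y)\ \land\ \bigwedge_{i=0}^{5}\theta_i,
\]
with no nested $\QE_X/\QD_X$ simulating the prefix: the single $\QGA$ ranges over all team assignments, and since every assignment has $y=f(x)$ built in while the $\theta_i$ guarantee that $x$ and $z$ range sufficiently freely, one recovers $\forall x\forall z\,\phi(x,fx,z)$.

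This also changes the hard direction. You cannot take the universe $M$ of a satisfying dependence model and read off a structure on all of $M$, because nothing forces $\TM(x,z)$ to be $M\times M$. The paper instead fixes one assignment $s$, sets $\und{0}=s(x)$, $\und{i}=f^i(\und{0})$, and restricts to the orbit $A=\{\und{i}\mid i\in\bbN\}$. The six copy-rules $\theta_0,\dots,\theta_5$ (with the fourth variable $v$ serving as a scratch register) are then used to prove inductively that $A\times A\subseteq \TM(x,z)$: from $(\und{i},\und{j})\in\TM(x,z)$ one reaches both $(\und{i+1},\und{j})$ and $(\und{i},\und{j+1})$ by short copy sequences. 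This inductive closure argument is the real content of the completeness direction, and your proposal does not anticipate it.
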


\begin{proof}
 We first prove the claim for $\L[D,\in]$, and then show how to adapt the proof
 to $\L[D,=]$. Since the Kahr-Class is a conservative reduction class, it suffices,
 by transitivity, to exhibit a conservative reduction $ \psi \mapsto \psi^*$
from $[\forall\exists\forall, (\omega,1)]$  to $\L[D,\in]$.

Notice that, for  $\psi = \forall x\exists y \forall z \phi(x,y,z)$ in the Kahr-Class, 
the quantifier-free part  $\phi$ is in the base logic $\L$. We define
\[
    \psi^* \coloneqq \QGA\phi(x,y,z) \land \dep(x,y) \land \bigwedge_{i=0}^5 \vartheta_i.
\]

The subformulae $\vartheta_i$ allow us to copy the value of a variable
to another, \emph{while keeping certain other variables fixed},
as demonstrated in the example above.
This is used to enforce a cartesian product
included in $\TM(x,z)$ for the dependence models $\M$,
allowing us to retrieve a classical model from them.
We require an extra variable $v$ as an additional
``temporary storage'' to copy values between variables.
So while we keep $\phi$ to contain only the variables $x,y,z$,
overall we will need four variables, whose order we fix to $x,y,z,v$.
The $\vartheta_i$ are defined as:
\begin{alignat*}{2}
    &\vartheta_0 \coloneqq xx  \subseteq xz  \qquad\qquad&&(\text{copy $x$ to $z$, keeping $x$}),\\
    &\vartheta_1 \coloneqq yyv \subseteq yzv \qquad\qquad&&(\text{copy $y$ to $z$, keeping $y,v$}),\\
    &\vartheta_2 \coloneqq xzx \subseteq xzv \qquad\qquad&&(\text{copy $x$ to $v$, keeping $x,z$}),\\
    &\vartheta_3 \coloneqq yzy \subseteq yzv \qquad\qquad&&(\text{copy $y$ to $v$, keeping $y,z$}),\\
    &\vartheta_4 \coloneqq zzv \subseteq xzv \qquad\qquad&&(\text{copy $z$ to $x$, keeping $z,v$}),\\
    &\vartheta_5 \coloneqq vzv \subseteq xzv \qquad\qquad&&(\text{copy $v$ to $x$, keeping $z,v$}).
\end{alignat*}
Since $\dep(x,y) \equiv \QGA D_xy$ and analogously
$\tx \subseteq \ty \equiv \QGA(\tx \in \ty)$ it follows that $\psi^*$ indeed
is in the four-variable fragment of $\L[D,\in]$.

We have to prove the following two claims:
\begin{enumerate}
    \item A (finite) model of $\psi$ induces a (finite) dependence model of $\psi^*$.\label[claim]{clm:cons-red:1}
    \item A (finite) dependence model of $\psi^*$ induces a (finite) model of $\psi$.\label[claim]{clm:cons-red:2}
\end{enumerate}

To prove \cref{clm:cons-red:1},
assume that we have a model $\frA\models\psi$ with universe $A$.
Thus there exists a function $f \colon A \to A$ such that
$\frA \models \phi(a,fa,b)$ for all $a,b \in A$.
We construct the dependence model $(\frA,\TM)$ with team $\TM$ given by
\[
    \TM \coloneqq \{(a,fa,b,c) \mid a,b,c \in A\}.
\]
Remember that we denote assignments by their tuple of values,
so $(a,fa,b,c)$ represents the assignment $(x,y,z,v) \mapsto (a,fa,b,c)$.
It is clear that $y$ globally depends on $x$, i.e.~$(\frA,\TM) \models \dep(x,y)$,
and that by the choice of $\TM$ we also have $(\frA,\TM) \models \QGA\phi(x,y,z)$.
The $\vartheta_i$ are satisfied in $\M$, since $\TM(x,z,v) = A^3$
is a cartesian product of the whole universe.
Overall, we obtain $(\frA,\TM) \models \psi^*$.
Clearly, if $\frA$ is a finite model, then $(\frA,\TM)$ is finite as well.
This completes the proof of \cref{clm:cons-red:1}.

For the converse, \cref{clm:cons-red:2},
suppose that we have a dependence model $\M$ such that $\M \models \psi^*$.
Because of the global dependence $\M \models \dep(x,y)$ there exists a function
$f \colon \TM(x) \to \TM(y)$ such that $\TMT(y) = f(\TMT(x))$ for all $\TMT \in \TM$.
Note that $\TM(y) \subseteq \TM(x)$, since $\vartheta_1$ and $\vartheta_4$
allow us to copy values from $y$ to $z$ and from there to $x$.
Hence we have $f \colon \TM(x) \to \TM(x)$, i.e.~we can iterate $f$ on values of $x$.
Fix some arbitrary $\TMS \in \TM$ and set $\und{0} \coloneqq \TMS(x)$, as well as
$\und{i} \coloneqq f^i\und{0}$ for $i \in \mathbb{N}$.

We construct a model for $\psi$ by $\frA \coloneqq \frM \restr A$
where $A \coloneqq \{\und{i} \mid i \in \bbN\}$.
The function $f \restr A \colon A \to A$ plays the role of the Skolem function
for $y$ in the quantification $\forall x \exists y \forall z$ of $\psi$.
We need to ensure that $\phi(a,fa,b)$ holds in $\frM$ (and thus in $\frA$)
for all $a,b \in A$.
    
Since $\M \models \QGA\phi(x,y,z)$ we know that $\frM \models \phi(\TMT(x),f(\TMT(x)),\TMT(z))$
for all $\TMT\in \TM$.
Hence it suffices to show that
\begin{equation}\label{eq:univ}
    A \times A \subseteq \TM(x,z).
\end{equation}
In the following we write $*$ as placeholder for not further specified elements of $\frM$.
The expression $\TMT \xrightarrow{\vartheta_i} \TMU$ for some $\TMT \in \TM$
denotes that the existence of
$\TMU \in \TM$ follows by applying the ``copy-rule'' which $\vartheta_i$ represents.
Notice that for all $\TMT \in \TM$ with $\TMT(x) = \und{i}$ we have
$\TMT(y) = f(\und{i}) = \und{i+1}$.
In particular, keeping the value of $x$ implies keeping the value of $y$.
\begin{enumerate}
    \item $(\und{0},\und{0}) \in \TM(x,z)$:\\
        We know that $\TMS$ looks like $(\und{0},\und{1},*,*)$.
        Since $\TMS \in \TM$ and
        \[
            \TMS = (\und{0},\und{1},*,*)
            \xrightarrow{\vartheta_0} (\und{0},\und{1},\und{0},*)
            \eqqcolon \TMT\qquad\qquad\text{(copy $x$ to $z$, keeping $x,y$)}
        \]
        we see that $\TMT \in \TM$ with $\TMT(x,z) = (\und{0},\und{0})$.

    \item If $(\und{0},\und{j}) \in \TM(x,z)$, then also $(\und{0},\und{j+1}) \in \TM(x,z)$:\\
        By assumption we have $\TMT = (\und{0},\und{1},\und{j},*) \in \TM$.
        Together with the derivation
        \begin{alignat*}{2}
            \TMT =\ &(\und{0},\und{1},\und{j},*)\\
            \xrightarrow{\vartheta_2}\quad &(\und{0},\und{1},\und{j},\und{0})
                && (\text{copy $x$ to $v$, keeping $x,y,z$})\\
            \xrightarrow{\vartheta_4}\quad &(\und{j},\und{j+1},\und{j},\und{0})
                &&(\text{copy $z$ to $x$, keeping $z,v$})\\
            \xrightarrow{\vartheta_1}\quad &(*,\und{j+1},\und{j+1},\und{0})
                &&(\text{copy $y$ to $z$, keeping $y,v$})\\
            \xrightarrow{\vartheta_5}\quad &(\und{0},\und{1},\und{j+1},\und{0}) \eqqcolon \TMU
            \qquad\qquad&&(\text{copy $v$ to $x$, keeping $z,v$})
        \end{alignat*}
        we obtain $\TMU \in \TM$ with $\TMU(x,z) = (\und{0},\und{j+1})$.

    \item If $(\und{i},\und{j}) \in \TM(x,z)$ then also $(\und{i+1},\und{j}) \in \TM(x,z)$:\\
        By assumption we have $\TMT = (\und{i},\und{i+1},\und{j},*) \in \TM$.
        Together with the derivation
        \begin{alignat*}{2}
            \TMT =\ &(\und{i},\und{i+1},\und{j},*)\\
            \xrightarrow{\vartheta_3}\quad &(*,\und{i+1},\und{j},\und{i+1})
                && (\text{copy $y$ to $v$, keeping $y,z$})\\
            \xrightarrow{\vartheta_5}\quad &(\und{i+1},\und{i+2},\und{j},\und{i+1}) \eqqcolon \TMU
                \qquad\qquad&& (\text{copy $v$ to $x$, keeping $z,v$})
        \end{alignat*}
        we obtain $\TMU \in \TM$ with $\TMU(x,z) = (\und{i+1},\und{j})$.
\end{enumerate}
Now the inclusion (\ref{eq:univ}) follows via induction.
By the above argument this proves that $\frA \models \phi(a,fa,b)$
for all $a,b \in A$ and hence $\frA \models \forall x \exists y \forall z \phi(x,y,z)$.
Again it is clear that if $\M$ is finite, then so is $\frA$.
This concludes the proof of \cref{clm:cons-red:2},
showing that the four-variable fragment of $\L[D,\in]$ is a conservative reduction class.

For the case of $\L[D,=]$, note first that for any dependence model $\M$
with $\TMS \in \TM$ and non-empty tuples of variables $\tx,\ty,\tz$ with $[\ty] \subseteq [\tx]$, we have
\[
    \M \models_\TMS \tx\ty \in \tx\tz
    \quad\Iff\quad
    \M \models_\TMS \QE_{\tx\ty}(\ty = \tz)
    \quad\Iff\quad
    \M \models_\TMS \QE_{\tx}(\ty = \tz).
\]
Thus, if $[\ty] \subseteq [\tx]$,
we have $\tx\ty \subseteq \tx\tz \equiv \QGA\QE_{\tx}(\ty = \tz)$.
In our case, all inclusion atoms can be written in this form.
Hence, we can define equivalent formulae in $\L[D,=]$:
\begin{alignat*}{2}
    &\vartheta_0 \coloneqq \QGA\QE_{x}(x=z)\qquad\qquad&&(\text{copy $x$ to $z$, keeping $x$}),\\
    &\vartheta_1 \coloneqq \QGA\QE_{yv}(y=z)   &&(\text{copy $y$ to $z$, keeping $y,v$}),\\
    &\vartheta_2 \coloneqq \QGA\QE_{xz}(x=v)   &&(\text{copy $x$ to $v$, keeping $x,z$}),\\
    &\vartheta_3 \coloneqq \QGA\QE_{yz}(y=v)   &&(\text{copy $y$ to $v$, keeping $y,z$}),\\
    &\vartheta_4 \coloneqq \QGA\QE_{zv}(z=x)    &&(\text{copy $z$ to $x$, keeping $z,v$}),\\
    &\vartheta_5 \coloneqq \QGA\QE_{zv}(v=x)    &&(\text{copy $v$ to $x$, keeping $z,v$}).
\end{alignat*}
The proof then works in the exact same way as before.
\end{proof}

\subsection{Classification by Decidability: Conclusion}

Consider the lattice of the local logics using the atoms $D,\anon,=,\neq,\in,\notin,\Ind$.
Because we often want $\Omega$ to be closed under negation, we may also consider $\lnot\Ind$.
In the above discussions, we have shown that $\L[D,\in],\L[D,=]$ and $\L[\Ind]$
are minimal undecidable extensions of $\L$ in this lattice.
\begin{center}
\begin{tikzpicture}
    \node (tmax) at (0,5) {$\FO$};
    \node (max) at (0,4) {$\L[D,\anon,=,\neq,\in,\notin,\Ind,\lnot\Ind]$};
    \node (a) at (-2,2) {$\L[D,\in]$};
    \node (aa) at (0,3) {$\LFD^=$};
    \node (b) at (0,2) {$\L[D,=]$};
    \node (c) at (2,2) {$\L[\Ind]$};
    \node (h) at (3.75, 0) {$\GF$};
    \node (f) at (2.5,-1) {$\L[=,\neq,\in,\notin]$};
    \node (d) at (-2.5,-1) {$\L[D,\anon,\neq,\notin]$};
    \node (g) at (-1.25,-2) {$\LFD$};
    \node (min) at (0,-3) {$\L$};
    \node (left) at (-5,1) {};
    \node (right) at (5,1) {};
    \draw (tmax) -- (max) -- (aa) -- (b);
    \draw (max) -- (a);
    \draw (max) -- (c);
    \draw (min) -- (g) -- (d);
    \draw (min) -- (f) -- (h);
    \draw[dashed, color=red,line width=1pt] (left) -- (right);
    \draw (d) to [out=120, in=190] (max);
    \draw (f) to [out=60, in=-10] (max);
    \draw (min) -- (a);
    \draw (min) -- (b);
    \draw (min) -- (c);
\end{tikzpicture}
\end{center}
With this, many of the extensions of $\L$ by local atoms
$\Omega \subseteq \{D,\anon,=,\neq,\in,\notin,\Ind,\lnot\Ind\}$
have been classified by decidability.
The main question left is how negated local independence $\lnot \Ind$ fits into this picture,
and how anonymity and inclusion / equality affect each other.

\begin{open}
    Classify the remaining $\L[\Omega]$ where
    $\Omega \subseteq \{D,\anon,=,\neq,\in,\notin,\Ind,\lnot\Ind\}$
    is not a subset of $\{D,\anon,\neq,\notin\}$
    or $\{=,\neq,\in,\notin\}$ and not a superset of $\{D,\in\},\{D,=\}$ or $\{\Ind\}$.
    Of the extensions fitting this description $\L[\anon,=],\L[\anon,D]$ and $\L[\lnot\Ind]$ are minimal,
    whereas $\L[D,\anon,\neq,\notin,\lnot\Ind]$ and $\L[\anon,=,\neq,\in,\notin,\lnot\Ind]$ are maximal.
\end{open}

\section{Bisimulation and Ehrenfeucht-Fra\"issé theory for LFD}

We define a notion of bisimulation for local dependence logics $\L[\Omega]$
and in particular for $\LFD$ and $\LFD^=$
in such a way that we obtain an analogue of the classical
Ehrenfeucht-Fra\"issé Theorem in \cref{sec:bisim:ef},
and later an analogue of van Benthems's Theorem in \cref{sec:comp:fo:char}.
In the following, many results will require that $\Omega$ and hence $\L[\Omega]$
is closed under negation.
We shall also consider infinitary variants of these  logics.

\subsection{Bisimulation}

\begin{Definition}[Bisimulation]\label{def:bisim}
    Let $\M$ and $\N$ be two dependence models of the same
    type $(\tau,\Vv)$. A binary relation $Z \subseteq \TM \times \TN$
    is an $\L[\Omega]$-bisimulation between $\M$ and $\N$ if for all $(\TMS,\TNS) \in Z$:
    \begin{enumerate}
        \item $\M,\TMS$ and $\N,\TNS$ agree on the atoms of $\L[\Omega]$:
            \begin{enumerate}
                \item For all $R \in \tau$ and $\tx \in \Vv^{\ar(R)}$ we have
                    $\M \models_\TMS R\tx\ \Iff\ \N \models_\TNS R\tx$.
                \item For all local atoms $\beta\in \Omega[\Vv]$
                    we have $\M \models_\TMS \beta\ \Iff\ \N \models_\TNS \beta$.
            \end{enumerate}
        \item (back) For all $\TNT \in \TN$ and all \emph{finite}
            $X \subseteq \comvar{\TNT}{\TNS} \coloneqq \{x \in \Vv \mid \TNT =_x \TNS\}$
            there is some $\TMT \in \TM$ with $(\TMT,\TNT) \in Z$ and $\TMT =_X \TMS$.
        \item (forth) For all $\TMT \in \TM$ and all \emph{finite}
            $X \subseteq \comvar{\TMT}{\TMS} \coloneqq \{x \in \Vv \mid \TMT =_x \TMS\}$
            there is some $\TNT \in \TN$ with $(\TMT,\TNT) \in Z$ and $\TNT =_X \TNS$.
    \end{enumerate}
\end{Definition}

We restrict ourselves to finite sets $X \subseteq \comvar{\TMT}{\TMS}$
because $\L$ only allows finite sets within our modalities $\QD_X$ and $\QE_X$.
Whenever $\M, \N$ and $\L[\Omega]$ are clear from context, we write
$\TMS \sim \TNS$ if there exists an $\L[\Omega]$-bisimulation
$Z$ between $\M$ and $\N$ with $(\TMS,\TNS) \in Z$.

\begin{Definition}[Ordinal approximations to bisimulation]\label{def:bisim-approx}
    We write $\TMS \sim^0 \TNS$ and say that $\TMS$ and $\TNS$ are $\alpha$-bisimilar
    if $\M,\TMS$ and $\N,\TNS$ agree on $\L[\Omega]$-atoms.
    Now define $\TMS \sim^{\alpha} \TNS$ for ordinals $\alpha \in \On$ by induction;
    when defining $\TMS \sim^{\alpha+1} \TNS$,
    we require the conditions
    \begin{itemize}
        \item $(\alpha+1)$-back:
            For all $\TNT \in \TN$ and all \emph{finite}
            $X \subseteq \comvar{\TNT}{\TNS}$ there exists some $\TMT \in \TM$
            with $\TMT \sim^{\alpha} \TNT$ and $\TMT =_X \TMS$.
        \item $(\alpha+1)$-forth:
            For all $\TMT \in \TM$ and all \emph{finite}
            $X \subseteq \comvar{\TMT}{\TMS}$ there exists some $\TNT \in \TN$
            with $\TMT \sim^{\alpha} \TNT$ and $\TNT =_X \TNS$.
    \end{itemize}
    For limit ordinals $\lambda$,
    we say that $\TMS \sim^{\lambda} \TNS$ if
    $\TMS \sim^{\alpha} \TNS$ for all $\alpha < \lambda$,
    so essentially
    \begin{equation}\label{eq:approx-bisim-lim-ord}
        {\sim^\lambda} = \bigcap_{\alpha < \lambda} {\sim^\alpha}.
    \end{equation}
    As usual, $\alpha$-bisimilarity implies
    $\beta$-bisimilarity for all $\beta < \alpha$.
    Furthermore, full bisimilarity is now simply given by
    ${\sim} = \bigcap_{\alpha \in \On} {\sim^\alpha}$.
\end{Definition}

We want to emphasize that the back and forth
conditions of our bisimulations do not require the regarded assignments $\TMT$ or $\TNT$
to actually agree on any variables with $\TMS$ or $\TNS$ respectively,
i.e.~$\comvar{\TMT}{\TMS}$ and $\comvar{\TNT}{\TNS}$ may be empty.
The reason for this is that $\L$ has the global modalities
$\QGA = \QD_{\emptyset}$ and $\QGE = \QE_{\emptyset}$
and that we want bisimilarity to correspond to logical equivalence
(we say that some modality is global if its corresponding accessibility relation is the all-relation,
i.e.~contains all possible pairs of objects, as is the case for $=_\emptyset$ on teams).
As an example, let $\Vv$ be finite and
\[
    \psi \coloneqq \QGE(Rx) \land \bigwedge_{v \in \Vv} \lnot \QE_{v}Rx.
\]
Then $\M \models_\TMS \psi$ means that there is some
$\TMT \in \TM$ with $\M \models_{\TMT} Rx$,
but that there is no $\TMU \in \TM$ with $u =_x s$ having this property,
and thus $\comvar{\TMT}{\TMS} = \emptyset$.
Since $\L$ is able to witness this $\TMT$, it is natural
to require a bisimilar $\TNT \in \TN$ for \emph{every} $\TMT \in \TM$,
and not just for those $\TMT$ that agree with $\TMS$ on some variable.
Likewise for the back condition.

This forces every bisimulation to be \emph{global},
meaning that every assignment in $\TM$
is bisimilar to at least one assignment in $\TN$, and vice versa.
This is a common consequence of having global modalities;
in the context of, say, $\ML$ with an explicitly added global modality,
often denoted $\ML(\forall)$, the canonical bisimulation is just the global version of
ordinary $\ML$-bisimulation \cite{DawarOtto09}.

Since we already defined bisimulation for infinite ordinals,
the corresponding step for our logics is to consider their infinitary variants.
We briefly give some routine definitions which we need in what follows.

\begin{Definition}[{$\L[\Omega]_\infty$}]
    The infinitary extension $\L[\Omega]_\infty$ of $\L[\Omega]$ allows
    conjunction and disjunction over arbitrarily large sets of $\L[\Omega]_\infty$-formulae,
    with the obvious semantics.
\end{Definition}

\begin{Definition}[Quantifier Rank]
    The quantifier rank $\qr(\phi)$ of some $\phi \in \L[\Omega]_\infty$ is a
    recursively defined ordinal. We define $\qr(\beta) = 0$ for atoms $\beta$
    (including negated relational atoms and the local atoms),
    as well as $\qr(\QD_X\phi) = \qr(\QE_X\phi) = \qr(\phi) +1$.
    Lastly, we set
    $\qr\left(\bigwedge_{i \in I} \phi_i\right)
    = \qr\left(\bigvee_{i \in I} \phi_i\right)
    = \sup_{i \in I} \qr(\phi_i)$.
\end{Definition}

\begin{Definition}[Equivalence]~
    We use the usual symbols $\equiv_{\L[\Omega]}$ ($\equiv_{\L[\Omega]}^{\alpha}$)
    for equivalence of pointed dependence models in the logic $\L[\Omega]$
    (up to quantifier rank $\alpha$).
    We also use the short form $\TMS \equiv_{\L[\Omega]}^\alpha \TNS$ whenever $\M,\TMS$ and $\N,\TNS$
    are clear from context.
    The infinitary case ${\L[\Omega]}_\infty$ is defined analogously, but we often
    write $\equiv_{\L[\Omega]}^\infty$ instead of $\equiv_{\L[\Omega]_\infty}$.
\end{Definition}

\subsection{An Ehrenfeucht-Fra\"issé Theorem}
\label{sec:bisim:ef}

Given a logic, a common goal is to find a correspondence
between logical indistinguishability and behavioural
equivalence in some structural form,
often as a relation akin to bisimulation, a collection of partial isomorphisms,
or a winning strategy of certain two-player games.
For $\FO$ we have back-and-forth systems and Ehrenfeucht-Fra\"issé games (cf.~\cite[Chapter 3.3]{Hodges97a}),
whereas for $\ML$ one has ordinary bisimulation
and the corresponding bisimulation games (cf.~\cite[Chapter 2.2]{BlackburnRijVen01}).
The following results and in particular \cref{thm:bisim}
show that the bisimulations defined above
fulfill such a role for $\L[\Omega]$ whenever $\Omega$ is closed under negation.

\begin{lemma}\label{lemma:char-form-fin}
    Let $\M,\TMS$ be a dependence model of some finite type $(\tau,\Vv)$ and $\Omega$ be finite
    and closed under negation.
    For every $k \in \bbN$ there exists a formula $\chi^k_\TMS \in \L[\Omega](\tau,\Vv)$
    of quantifier rank $k$ that defines the
    $\sim^k$-class of $\TMS$, so that for all suitable $\N,\TNS$
    \[
        \N \models_\TNS \chi^k_{\TMS}
        \quad\Iff\quad
        \TNS \sim^k \TMS
    \]
    Up to $\L[\Omega]$-equivalence,
    the number of such $\chi^k_{\TMS}$ is finite and depends only on $\tau,\Vv,\Omega$ and $k$.
\end{lemma}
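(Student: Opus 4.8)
The plan is to prove this by constructing the characteristic formulas $\chi^k_{\TMS}$ explicitly by induction on $k$, in the standard style of Hennessy--Milner-type characteristic formulas for modal logics, but adapted to the back-and-forth conditions of \cref{def:bisim}. The key point is that because $\tau$, $\Vv$ and $\Omega$ are all finite, there are only finitely many atoms (relational literals $R\tx$ and local atoms $\beta \in \Omega[\Vv]$), and only finitely many finite sets $X \subseteq \Vv$ since $\Vv$ itself is finite. This finiteness is what guarantees the conjunctions and disjunctions below are finite, so that each $\chi^k_{\TMS}$ is a genuine $\L[\Omega]$-formula (not merely an $\L[\Omega]_\infty$-formula), and it also gives the final counting statement.

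\medskip

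First I would define the base case. For $k=0$, set
\[
    \chi^0_{\TMS} \coloneqq
        \bigwedge \{ \beta \mid \M \models_\TMS \beta \}
        \ \land\
        \bigwedge \{ \lnot \beta \mid \M \not\models_\TMS \beta \},
\]
where $\beta$ ranges over all relational atoms $R\tx$ (with $R \in \tau$, $\tx \in \Vv^{\ar(R)}$) and all local atoms in $\Omega[\Vv]$; here $\lnot\beta$ is meaningful precisely because $\Omega$ is closed under negation (so $\lnot\beta$ is again an $\L[\Omega]$-atom) and because negation on relational atoms is allowed by the syntax. This formula has quantifier rank $0$ and, by condition~(1) of \cref{def:bisim-approx}, satisfies $\N \models_\TNS \chi^0_{\TMS} \Iff \TNS \sim^0 \TMS$. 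For the inductive step, assuming the $\chi^k_{\TMT}$ have been constructed for all $\TMT$, I define
\[
    \chi^{k+1}_{\TMS} \coloneqq \chi^0_{\TMS}
        \ \land\ \bigwedge_{X \subseteq \Vv}\ \bigwedge_{\substack{\TMT \in \TM \\ X \subseteq \comvar{\TMT}{\TMS}}} \QE_X \chi^k_{\TMT}
        \ \land\ \bigwedge_{X \subseteq \Vv} \QD_X \bigvee_{\substack{\TMT \in \TM \\ X \subseteq \comvar{\TMT}{\TMS}}} \chi^k_{\TMT}.
\]
The first inner conjunction encodes the forth condition (every $X$-successor type of $\TMS$ must be realised in $\N$), and the $\QD_X$-conjunct encodes the back condition (every $X$-successor of $\TNS$ must match some $X$-successor type of $\TMS$). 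By the definition of $\qr$, and since $\qr(\chi^k_{\TMT}) = k$ by induction hypothesis, we get $\qr(\chi^{k+1}_{\TMS}) = k+1$, as required.

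\medskip

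The correctness equivalence $\N \models_\TNS \chi^{k+1}_{\TMS} \Iff \TNS \sim^{k+1} \TMS$ is then proved in both directions by unwinding the semantics of $\QE_X$ and $\QD_X$ against the $(k{+}1)$-back and $(k{+}1)$-forth clauses, using the inductive hypothesis to translate between $\chi^k$-satisfaction and $\sim^k$-equivalence. The main subtlety to handle carefully — and the step I expect to be the real obstacle — is the counting claim, namely that up to $\L[\Omega]$-equivalence there are only \emph{finitely} many distinct $\chi^k_{\TMS}$, with a bound depending only on $\tau,\Vv,\Omega,k$. The point is that although $\TM$ may be infinite, the formula $\chi^{k+1}_{\TMS}$ depends on $\TMS$ only through (i) its atomic type $\chi^0_{\TMS}$, and (ii) for each of the finitely many $X \subseteq \Vv$, the \emph{set} of $\sim^k$-classes realised among the $\TMT$ with $X \subseteq \comvar{\TMT}{\TMS}$. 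Thus I would argue by induction that the number $N_k$ of distinct $\chi^k_{\TMS}$ up to equivalence is finite: $N_0$ is bounded by $2$ to the number of atoms, and $N_{k+1}$ is bounded by a function of $N_k$ and $|\Vv|$ (roughly $N_0 \cdot 2^{2^{|\Vv|}\cdot N_k}$), since replacing each $\TMT$-disjunct by its equivalent representative $\chi^k$-formula collapses the otherwise-infinite conjunctions and disjunctions to choices over a finite pool. This collapsing is what keeps each $\chi^k_{\TMS}$ inside finitary $\L[\Omega]$ and simultaneously yields the finiteness bound.
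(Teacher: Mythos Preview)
Your proposal is correct and follows essentially the same approach as the paper: an inductive construction of Hennessy--Milner style characteristic formulae, with the $\QE_X$-conjuncts encoding the forth condition and the $\QD_X$-disjunction encoding the back condition. The differences are cosmetic --- the paper takes $\chi^0_{\TMS}$ to be the conjunction of all quantifier-rank-$0$ formulae true at $\TMS$ (equivalent to your explicit atomic type, since $\Omega$ is closed under negation), and omits your extra $\chi^0_{\TMS}$ conjunct in $\chi^{k+1}_{\TMS}$, which is in any case redundant because $\QE_{\Vv}\chi^k_{\TMS}$ already forces $\TNS \sim^k \TMS$ and hence $\TNS \sim^0 \TMS$. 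Your more explicit treatment of the finiteness/counting step (collapsing the a priori infinite conjunctions over $\TMT \in \TM$ to finitely many $\sim^k$-representatives) is exactly the point the paper leaves implicit in its phrase ``by the induction hypothesis, these are well-defined $\L[\Omega](\tau,\Vv)$-formulae''.
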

\begin{proof}
    Since $\tau,\Vv$ and $\Omega$ are finite,
    there are up to equivalence only finitely many formulae
    $\phi \in \L[\Omega](\tau,\Vv)$ with $\qr(\phi) = 0$,
    which allows us to define
    \[
        \chi^0_{\TMS} \coloneqq \bigwedge\{\phi \in \L[\Omega](\tau,\Vv) \mid \qr(\phi) = 0,\ \M \models_\TMS \phi\}.
    \]
    One proceeds inductively by defining
    $\chi_{\TMS}^{k+1} \coloneqq \phi^{k+1}_{\text{back}} \land \phi^{k+1}_{\text{forth}}$, where
    \[
        \phi_{\text{back}}^{k+1} \coloneqq
            \bigwedge_{X \subseteq \Vv}
            \QD_X \bigvee_{\substack{\TMT \in \TM\\ \TMT =_X \TMS}}
            \chi^k_{\TMT}
        \quad\qquad\text{and}\qquad\quad
        \phi_{\text{forth}}^{k+1} \coloneqq
            \bigwedge_{\TMT \in \TM}\
            \bigwedge_{\substack{X\subseteq \Vv\\[1pt] \TMT =_X \TMS}}
            \QE_{X}\chi^k_{\TMT}.
    \]
    By the induction hypothesis, these are well-defined $\L[\Omega](\tau,\Vv)$-formulae
    and correspond precisely to the $(k+1)$-back and $(k+1)$-forth
    conditions as given in \cref{def:bisim-approx}.
\end{proof}

\begin{lemma}\label{lemma:char-form-inf}
    Let $\M,\TMS$ be a dependence model of some type $(\tau,\Vv)$ and $\Omega$ be closed under negation.
    For every $\alpha \in \On$ there exists a formula $\chi^\alpha_\TMS \in \L[\Omega](\tau,\Vv)$
    of quantifier rank $\alpha$ that defines the
    $\sim^\alpha$-class of $\TMS$, so that for all suitable $\N,\TNS$
    \[
        \N \models_\TNS \chi^\alpha_{\TMS}
        \quad\Iff\quad
        \TNS \sim^\alpha \TMS.
    \]
\end{lemma}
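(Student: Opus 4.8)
The plan is to mirror the construction of \cref{lemma:char-form-fin}, but to carry it out by transfinite induction on the ordinal $\alpha$ rather than by ordinary induction on $k$. The statement drops the finiteness assumptions on $\tau,\Vv$ and $\Omega$, but we pay for this by moving into the infinitary logic $\L[\Omega]_\infty$, where arbitrary conjunctions and disjunctions are available. So the formulae $\chi^\alpha_\TMS$ we build will genuinely live in $\L[\Omega]_\infty$, and the quantifier rank $\qr(\chi^\alpha_\TMS)=\alpha$ will be an ordinal, consistently with the definition of quantifier rank for $\L[\Omega]_\infty$ given above.

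For the base case $\alpha=0$, I would set
\[
    \chi^0_{\TMS} \coloneqq \bigwedge\{\phi \in \L[\Omega](\tau,\Vv) \mid \qr(\phi)=0,\ \M \models_\TMS \phi\},
\]
exactly as before; the only change is that this conjunction may now be infinite, which is harmless in $\L[\Omega]_\infty$. Because $\Omega$ is closed under negation, $\N \models_\TNS \chi^0_\TMS$ forces $\N,\TNS$ to satisfy exactly the same $\L[\Omega]$-atoms (relational and local) as $\M,\TMS$, which is precisely the meaning of $\TNS \sim^0 \TMS$. For the successor step I would reuse the back/forth formulae verbatim,
\[
    \phi_{\text{back}}^{\alpha+1} \coloneqq
        \bigwedge_{\substack{X \subseteq \Vv\\ X \text{ finite}}}
        \QD_X \bigvee_{\substack{\TMT \in \TM\\ \TMT =_X \TMS}}
        \chi^\alpha_{\TMT},
    \qquad
    \phi_{\text{forth}}^{\alpha+1} \coloneqq
        \bigwedge_{\TMT \in \TM}\
        \bigwedge_{\substack{X\subseteq \Vv,\ X \text{ finite}\\ \TMT =_X \TMS}}
        \QE_{X}\chi^\alpha_{\TMT},
\]
and put $\chi^{\alpha+1}_\TMS \coloneqq \phi_{\text{back}}^{\alpha+1} \land \phi_{\text{forth}}^{\alpha+1}$; the induction hypothesis gives $\qr(\chi^\alpha_\TMT)=\alpha$, so $\qr(\chi^{\alpha+1}_\TMS)=\alpha+1$ by the quantifier-rank clauses. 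The genuinely new case is the limit ordinal $\lambda$, where I would define $\chi^\lambda_\TMS \coloneqq \bigwedge_{\alpha<\lambda} \chi^\alpha_\TMS$; since $\qr(\chi^\alpha_\TMS)=\alpha$ the supremum is $\lambda$, matching the required rank, and by \eqref{eq:approx-bisim-lim-ord} this conjunction defines $\bigcap_{\alpha<\lambda}{\sim^\alpha}={\sim^\lambda}$, so $\N\models_\TNS\chi^\lambda_\TMS$ iff $\TNS\sim^\lambda\TMS$.

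The correctness argument at successors is the only substantive point: I must check that $\N \models_\TNS \chi^{\alpha+1}_\TMS$ translates, via the semantics of $\QD_X$ and $\QE_X$, into exactly the $(\alpha+1)$-back and $(\alpha+1)$-forth conditions of \cref{def:bisim-approx}. Satisfaction of $\phi^{\alpha+1}_{\text{back}}$ says that for every finite $X$ and every $\TNT \in \TN$ with $\TNT =_X \TNS$ there is some $\TMT \in \TM$, $\TMT =_X \TMS$, with $\N\models_\TNT \chi^\alpha_\TMT$, i.e.\ $\TNT \sim^\alpha \TMT$ by the induction hypothesis — which is the back clause; the forth clause is symmetric. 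The main obstacle, and the reason finiteness of $\tau,\Vv,\Omega$ is no longer needed, is purely bookkeeping about ordinals: one must verify that the quantifier rank really equals $\alpha$ at every stage (using $\qr(\bigwedge_i\phi_i)=\sup_i\qr(\phi_i)$ for the limit and base conjunctions, where the supremum does not exceed the intended ordinal) and that the transfinite recursion is well-founded. No cardinality bound on the number of characteristic formulae can be claimed here, which is exactly the clause absent from this infinitary version of the lemma.
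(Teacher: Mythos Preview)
Your proposal is correct and follows essentially the same approach as the paper's own proof: the paper likewise reuses the base and successor constructions from \cref{lemma:char-form-fin} (noting, as you do, the restriction to finite $X\subseteq\Vv$ in the quantifiers), defines $\chi^\lambda_\TMS \coloneqq \bigwedge_{\alpha<\lambda}\chi^\alpha_\TMS$ at limit ordinals with reference to \eqref{eq:approx-bisim-lim-ord}, and concludes by transfinite induction. Your write-up is simply more explicit about the quantifier-rank bookkeeping and the correctness check at successors than the paper's terse sketch.
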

\begin{proof}
    For $\alpha = 0$ and successor ordinals,
    we use analogous definitions to the ones above,
    except that we have to explicitly consider only \emph{finite} $X \subseteq \Vv$
    as they are used within the quantifiers $\QD_X$ and $\QE_X$.
    For limit ordinals $\lambda$, we set
    $
        \chi^{\lambda}_\TMS \coloneqq
        \bigwedge_{\alpha < \lambda} \chi^{\alpha}_\TMS,
    $
    which corresponds to definition of $\sim^{\lambda}$,
    see \cref{eq:approx-bisim-lim-ord} in \cref{def:bisim-approx}.
    Conclude via transfinite induction.
\end{proof}

\begin{theorem}[{Ehrenfeucht-Fra\"issé and Karp theorems for $\L[\Omega]$}]\label{thm:bisim}~\\
    Let $\Omega$ be closed under negation.
    If $\Omega$ is \emph{finite} and
    $\M,\TMS$ and $\N,\TNS$ are dependence models of the same \emph{finite} type, then
    \[
        \TMS \sim_{\L[\Omega]}^k \TNS
        \quad\Iff\quad
        \TMS \equiv_{\L[\Omega]}^k \TNS,
        \qquad k \in \bbN.
    \]
    As a consequence we obtain that under those same conditions
    \[
        \TMS \sim_{\L[\Omega]}^{\omega} \TNS
        \quad\Iff\quad
        \TMS \equiv_{\L[\Omega]} \TNS.
    \]
    For $\Omega$ not necessarily finite and arbitrary types we obtain
    \[
        \TMS \sim_{\L[\Omega]}^\alpha \TNS
        \quad\Iff\quad
        \TMS \equiv_{\L[\Omega]_\infty}^\alpha \TNS,
        \qquad\alpha \in \On
    \]
    and therefore
    \[
        \TMS \sim_{\L[\Omega]} \TNS
        \quad\Iff\quad
        \TMS \equiv_{\L[\Omega]}^\infty \TNS.
    \]
\end{theorem}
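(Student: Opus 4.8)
The plan is to prove the four equivalences by leveraging the characteristic formulae constructed in Lemmas~\ref{lemma:char-form-fin} and~\ref{lemma:char-form-inf}, and then deriving the $\omega$- and $\infty$-level statements as consequences of the levelled ones. The backbone of the argument is the standard two-directional implication $\TMS \sim^\alpha \TNS \Iff \TMS \equiv^\alpha \TNS$, split as usual into ``bisimilarity preserves equivalence'' and ``equivalence implies bisimilarity''.

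For the direction $\TMS \sim^k \TNS \Rightarrow \TMS \equiv_{\L[\Omega]}^k \TNS$, I would proceed by induction on the quantifier rank of formulae $\phi \in \L[\Omega](\tau,\Vv)$ with $\qr(\phi) \le k$, showing $\M \models_\TMS \phi \Iff \N \models_\TNS \phi$. The atomic case is exactly condition~(1) of \cref{def:bisim}; crucially, here we use that $\Omega$ is closed under negation, since $\L[\Omega]$ permits negation only on relational atoms, and so to handle a negated local atom $\lnot\beta$ we instead invoke agreement on the atom $\lnot\beta \in \Omega[\Vv]$ directly. Boolean connectives are immediate. For $\QD_X\phi$ (and dually $\QE_X\phi$), suppose $\M \models_\TMS \QD_X\phi$ and take any $\TNT =_X \TNS$ in $\TN$; since $X \subseteq \comvar{\TNT}{\TNS}$ is finite, the $(k)$-back condition yields some $\TMT \sim^{k-1} \TNT$ with $\TMT =_X \TMS$, so $\M \models_\TMT \phi$, and the induction hypothesis at rank $k-1$ transfers this to $\N \models_\TNT \phi$. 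The converse direction $\TMS \equiv^k \TNS \Rightarrow \TMS \sim^k \TNS$ is where the characteristic formulae do the work: since $\qr(\chi^k_\TMS) = k$ and $\M \models_\TMS \chi^k_\TMS$, equivalence gives $\N \models_\TNS \chi^k_\TMS$, which by \cref{lemma:char-form-fin} means exactly $\TNS \sim^k \TMS$.

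For the infinitary statement $\TMS \sim^\alpha \TNS \Iff \TMS \equiv_{\L[\Omega]_\infty}^\alpha \TNS$ over arbitrary types and arbitrary $\Omega$, the same two-part scheme applies but by transfinite induction, using \cref{lemma:char-form-inf} for the characteristic formulae and the limit clause $\chi^\lambda_\TMS = \bigwedge_{\alpha<\lambda}\chi^\alpha_\TMS$; the preservation direction at limit stages reduces to the intersection $\sim^\lambda = \bigcap_{\alpha<\lambda}\sim^\alpha$ matching the $\sup$ definition of quantifier rank for infinitary conjunctions, and the successor case reuses the back-and-forth argument verbatim, noting that only finite $X$ enter the quantifiers. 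The $\infty$-level equivalence $\TMS \sim \TNS \Iff \TMS \equiv^\infty_{\L[\Omega]} \TNS$ then follows from ${\sim} = \bigcap_{\alpha\in\On}\sim^\alpha$ together with the levelled result applied at every $\alpha$.

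Finally, to obtain the Karp-style consequence $\TMS \sim^\omega \TNS \Iff \TMS \equiv_{\L[\Omega]} \TNS$ in the finite case, I would argue that over a finite type with finite $\Omega$ the approximants stabilise: by \cref{lemma:char-form-fin} there are, up to equivalence, only finitely many characteristic formulae $\chi^k_\TMS$ at each rank, and since each $\sim^{k+1}$ refines $\sim^k$ on a finite set of $\sim^k$-classes, the descending sequence of equivalence relations must reach a fixed point at some finite stage, giving $\sim^\omega = \sim$. Ordinary $\L[\Omega]$-equivalence coincides with $\equiv^\omega$ because every $\L[\Omega]$-formula has finite quantifier rank. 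The main obstacle I anticipate is the careful bookkeeping at the atomic level: ensuring that the single closure-under-negation hypothesis genuinely suffices to cover negated local atoms within a syntax that forbids negation outside relational literals, and verifying that the global modalities $\QGA,\QGE$ (i.e.\ $X = \emptyset$) are correctly handled by the back-and-forth conditions, which is precisely why the definition permits $\comvar{\TMT}{\TMS}$ to be empty.
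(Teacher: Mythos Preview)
Your overall approach matches the paper's: induction on quantifier rank for the preservation direction, and invoking the characteristic formulae of \cref{lemma:char-form-fin} and \cref{lemma:char-form-inf} for the converse; the infinitary and full-bisimulation statements are derived just as you describe.

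There is, however, a genuine error in your derivation of the consequence $\TMS \sim^\omega \TNS \Iff \TMS \equiv_{\L[\Omega]} \TNS$. The stabilisation claim ``the descending sequence of equivalence relations must reach a fixed point at some finite stage, giving $\sim^\omega = \sim$'' is false: finiteness of $(\tau,\Vv)$ and $\Omega$ guarantees only that each individual $\sim^k$ has finitely many classes, not that this number is bounded uniformly in~$k$. Already for $\tau$ containing a single unary predicate and $|\Vv|=2$ one can encode directed graphs via the team so that formulae of increasing rank separate ever more configurations, and the refinement is strict at every finite level (the same phenomenon occurs in basic modal logic). Fortunately the claim is also unnecessary: you have $\sim^\omega = \bigcap_{k<\omega}\sim^k$ by \cref{def:bisim-approx} and $\equiv_{\L[\Omega]} = \bigcap_{k<\omega}\equiv^k_{\L[\Omega]}$ (your own final observation that every finitary formula has finite rank), so the $\omega$-level equivalence follows immediately from the levelled result at each $k$. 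Simply drop the stabilisation argument. A minor side remark: closure of $\Omega$ under negation is not actually used in the forward (preservation) direction---condition~(1) of the bisimulation already gives a biconditional on every atom and the syntax contains no further negated local atoms---the hypothesis enters only via the characteristic-formula lemmas in the converse direction.
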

\begin{proof}
    The proof is a routine induction.
    One shows that $k$-$\L[\Omega]$-bisimilarity entails $\L[\Omega]$-equivalence
    up to quantifier rank $k$ by using the characteristics of our bisimulation,
    whereas the converse implication is immediate from the above lemmas.
    The infinitary case is handled analogously.
\end{proof}

This allows us to show undefinability of
some property of (pointed) dependence models,
by finding two such models that are bisimilar but differ on said property.
By the above theorem, we then know that these models
are logically indistinguishable in $\L[\Omega]$,
so the considered property cannot be defined in the respective logic.

\begin{example}\label{ex:inc}
    Let $\Omega = \{D,\anon,=,\neq\}$.
    We show that the inclusion $\TM(x) \subseteq \TM(y)$ is not $\L[\Omega]_\infty$-definable.
    It suffices to show this for $\tau = \emptyset$ and $\Vv = \{x,y\}$,
    because the example below can be adapted accordingly.
    Consider dependence models $\M,\N$ of type $(\emptyset,\{x,y\})$
    with teams given by
    \[
        \TM \coloneqq \{(a,b),(b,a)\}
        \qquad\text{and}\qquad
        \TN \coloneqq \{(1,2),(2,0)\}.
    \]
    Note that $\TM(x) \subseteq \TM(y)$, but $\TN(x) \not\subseteq \TN(y)$.
    Now let $Z$ be the binary relation on
    $\TM \times \TN$ defined by
    $(a,b) \mathbin{Z} (1,2)$ and $(b,a) \mathbin{Z} (2,0)$.
    It is easy to verify that $Z$ is a full $\L[\Omega]$-bisimulation and
    hence $\TMS \equiv_{\L[\Omega]}^{\infty} \TNS$ by \cref{thm:bisim},
    for all $(\TMS,\TNS) \in Z$.
    Indeed, note that $\anon_\emptyset x,\anon_\emptyset y, D_xy,D_yx$
    and $x \neq y$ hold at all assignments in both teams,
    so the pairs of assignments agree on atoms.
    Furthermore, in both teams we see that the two assignments
    do not agree on any variables, so evidently the only choice we have
    at the back and forth clauses (every assignment stands in $Z$-relation to exactly one other)
    always works out.
\end{example}

\section{A Characterisation Theorem}\label{sec:comp:fo:char}

We now prove a characterisation theorem for $\L[\Omega]$.
The theorem is an analogue of van Benthem's Theorem, which states
that $\ML$, via its standard translation into $\FO$,
is precisely the bisimulation-invariant fragment of $\FO$ over the class of pointed Kripke structures.
It was first formulated in \cite{Benthem76} and \cite{Benthem83}.
We adapt a well known proof using saturated structures
by following the exposition in \cite[Chapter 2.6]{BlackburnRijVen01}.
We assume that the reader is familiar with basic model-theoretic notions
such as elementary extensions and $\omega$-saturated structures.

Given a dependence model $\M,\TMS$ of finite type $(\tau,\Vv)$,
we have already seen that we can interpret $\M$ as a $\tau\cup\{\TM\}$-structure,
where $\TM$ is a $|\Vv|$-ary relation symbol; this is done by fixing an enumeration $\tv$ of $\Vv$
and viewing $\TM$ as the $|\Vv|$-ary relation $\TM(\tv) = \{\TMS(\tv) \mid \TMS \in \TM\}$ over $\M$.
In the following, we also write $\TTMS \coloneqq \TMS(\tv)$ for the $|\Vv|$-ary tuple
corresponding to $\TMS \in \TM$. For clarity of presentation we denote the evaluation
of the first-order formula $\phi(\tx) \in \FO(\tau \cup \{\TM\})$ in the structure $\M$ at
the tuple $\TTMS$ under classical Tarski-semantics by $\M \models^{\FO} \phi(\TTMS)$.
In \cref{subsec:trans} we gave a first-order translation $\L[\Omega] \to \FO, \phi \mapsto \phi^*$ such that
for all $\phi \in\L[\Omega]$ and every fitting dependence model $\M,\TMS$ we have
\[
    \M \models_\TMS \phi \quad\Iff\quad \M \models^{\FO} \phi^*(\TTMS).
\]
In the following, let $(\tau,\Vv)$ be a finite type with an enumeration $\tv$ of $\Vv$,
and $\Omega \subseteq \{D,\anon,=,\neq,\in,\notin,\Ind,\cdots\}$
a finite set of local atoms that is closed under negation
and to which we may extend the above first-order translation.

\begin{Definition}\label{def:lfd-theory}
    For dependence models $\M,\TMS$  and $\N,\TNS$ let
    \[
        \Th_{\L[\Omega]}(\M,\TMS) \coloneqq
        \{\phi^* \mid \phi \in \L[\Omega],\ \M \models_\TMS \phi\} \subseteq \FO(\tau \cup \{\TM\}).
    \]
    We use $\Th_{\L[\Omega]}(\TMS)$ if $\M$ is clear from context.
    Note that $\Th_{\L[\Omega]}(\TMS)$ generally
    contains formulae with free variables among $\tv$, not just sentences.
\end{Definition}

\begin{lemma}
    Let $\M,\TMS$ and $\N,\TNS$ be pointed dependence models. Then
    \[
        \M \models^{\FO} \Th_{\L[\Omega]}(\TNS)(\TTMS)
        \quad\Iff\quad
        \Th_{\L[\Omega]}(\TMS) = \Th_{\L[\Omega]}(\TNS)
        \quad\Iff\quad
        \TMS \equiv_{\L[\Omega]} \TNS.
    \]
\end{lemma}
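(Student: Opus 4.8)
The plan is to prove the chain of equivalences by going around the cycle, establishing the three implications that close the loop. The rightmost equivalence, $\Th_{\L[\Omega]}(\TMS) = \Th_{\L[\Omega]}(\TNS) \Iff \TMS \equiv_{\L[\Omega]} \TNS$, should be essentially definitional: the theory $\Th_{\L[\Omega]}(\TMS)$ collects exactly the translations $\phi^*$ of $\L[\Omega]$-formulae $\phi$ true at $\TMS$, and by the translation property $\M \models_\TMS \phi \Iff \M \models^{\FO} \phi^*(\TTMS)$, equality of these two theories says precisely that $\TMS$ and $\TNS$ satisfy the same $\L[\Omega]$-formulae, which is the meaning of $\TMS \equiv_{\L[\Omega]} \TNS$. (A small point to handle carefully: since $\Omega$ is closed under negation but $\L[\Omega]$ allows negation only on relational atoms, I should check that non-membership $\phi^* \notin \Th_{\L[\Omega]}(\TMS)$ still corresponds to falsity of $\phi$ at $\TMS$; this is fine because falsity of $\phi$ is witnessed by truth of a formula whose translation is equivalent to $\lnot\phi^*$, using closure of $\Omega$ under negation together with de Morgan dualities for the connectives and quantifiers $\QD_X,\QE_X$.)

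First I would prove the backward implication $\TMS \equiv_{\L[\Omega]} \TNS \Longrightarrow \M \models^{\FO} \Th_{\L[\Omega]}(\TNS)(\TTMS)$. This direction is immediate: if $\phi^* \in \Th_{\L[\Omega]}(\TNS)$ then $\N \models_\TNS \phi$, so $\TMS \equiv_{\L[\Omega]} \TNS$ gives $\M \models_\TMS \phi$, whence $\M \models^{\FO} \phi^*(\TTMS)$ by the translation property. Since $\phi^*$ was arbitrary, $\M$ satisfies the whole theory at $\TTMS$.

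The remaining implication, $\M \models^{\FO} \Th_{\L[\Omega]}(\TNS)(\TTMS) \Longrightarrow \TMS \equiv_{\L[\Omega]} \TNS$, is the only one with real content, and I expect it to be the main obstacle. The idea is contrapositive-flavoured but clean: suppose $\M$ satisfies $\Th_{\L[\Omega]}(\TNS)$ at $\TTMS$; I want to show every $\L[\Omega]$-formula true at $\TNS$ is true at $\TMS$ and conversely. The forward half is direct from the hypothesis. For the converse, if $\M \models_\TMS \psi$ for some $\psi \in \L[\Omega]$, I use closure under negation to produce an $\L[\Omega]$-formula $\psi'$ with $(\psi')^*$ logically equivalent to $\lnot \psi^*$; were $\N \models_\TNS \psi'$, then $(\psi')^* \in \Th_{\L[\Omega]}(\TNS)$ would force $\M \models^{\FO} (\psi')^*(\TTMS)$, i.e.~$\M \models_\TMS \lnot\psi$, contradicting $\M \models_\TMS \psi$; hence $\N \not\models_\TNS \psi'$, which by the same negation-duality yields $\N \models_\TNS \psi$. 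Thus $\TMS$ and $\TNS$ agree on all of $\L[\Omega]$.

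The delicate part is therefore the bookkeeping around negation: because $\L[\Omega]$ is not literally closed under the syntactic operation $\lnot(\cdot)$ (negation is pushed only to relational atoms), I must argue that for each $\psi \in \L[\Omega]$ there is a genuine $\L[\Omega]$-formula acting as its negation up to translation-equivalence. I would establish this by a routine induction on the structure of $\psi$: relational atoms $R\tx$ flip to $\lnot R\tx$ and vice versa, the local atoms in $\Omega$ flip within $\Omega$ by the closure assumption (here the correspondences noted earlier, such as $\anon_X y \equiv \lnot D_X y$ and $\tx\notin\ty \equiv \lnot(\tx\in\ty)$, are what make closure under negation meaningful), conjunction and disjunction swap by de Morgan, and $\QD_X$ and $\QE_X$ are dual via $\QE_X\phi = \lnot\QD_X\lnot\phi$. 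Once this negation-normal-form lemma is in hand, the argument above goes through and all three statements are equivalent.
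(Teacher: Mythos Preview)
Your argument is correct. The paper states this lemma without proof, treating it as immediate from the definitions and the standing assumption that $\Omega$ is closed under negation; your proposal supplies exactly the routine details the paper omits, including the only nontrivial point, namely that every $\psi\in\L[\Omega]$ admits an $\L[\Omega]$-formula whose translation is equivalent to $\lnot\psi^*$.
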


\begin{definition}\label{def:vartheta}
    Write $n = |\Vv| = |\tv|$ and consider tuples of variables $\tx = (x_1,\dots,x_n)$
    and $\ty = (y_1,\dots,y_n)$.
    With respect to the ordering of $\Vv$ given by $\tv$, we identify
    $X \subseteq \Vv$ with a set of indices $I_X \subseteq \{1,\dots,n\}$.
    Now define $\vartheta_X(\tx,\ty) \in \FO(\tau \cup \{\TM\})$ by
    \[
        \vartheta_X(\tx,\ty) \coloneqq \TM\tx \land \TM\ty \land \bigwedge_{i \in I_X} x_i = y_i.
    \]
\end{definition}

\begin{lemma}
    Let $\M$ be a dependence model, $X \subseteq \Vv$ and $\TTMS,\TTMT$ tuples over $\frM$ of length $|\Vv|$.
    Write $\TMS,\TMT$ for the induced assignments with $\TMS(\tv) = \TTMS$ and $\TMT(\tv) = \TTMT$.
    Then
    \[
        \M \models^{\FO} \vartheta_X(\TTMS,\TTMT)
        \quad\Iff\quad
        \TMS,\TMT \in \TM \text{ and } \TMS =_X \TMT.
    \]
\end{lemma}

\begin{lemma}\label{lemma:omega-sat}
    Let $\M$ and $\N$ be two $(\tau,\Vv)$ dependence models,
    so that their corresponding $(\tau \cup \{\TM\})$-structures are $\omega$-saturated.
    If $\M,\TMS \sim_{\L[\Omega]}^{\omega} \N,\TNS$, then already $\TMS \sim_{\L[\Omega]} \TNS$.
\end{lemma}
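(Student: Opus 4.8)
The plan is to show that the relation $Z \coloneqq {\sim_{\L[\Omega]}^\omega}$ is itself a full $\L[\Omega]$-bisimulation between $\M$ and $\N$; since $\TMS \sim^\omega \TNS$ means $(\TMS,\TNS)\in Z$, this immediately gives $\TMS \sim \TNS$. The atom clause of \cref{def:bisim} is free: if $(\TMS,\TNS)\in Z$ then in particular $\TMS \sim^0 \TNS$, so $\M,\TMS$ and $\N,\TNS$ agree on all $\L[\Omega]$-atoms. By the symmetry of the construction (interchanging the roles of $\M$ and $\N$, and using $\omega$-saturation of $\M$ in place of $\N$), it suffices to verify the forth condition.

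So fix $(\TMS,\TNS)\in Z$, a witness $\TMT \in \TM$, and a finite set $X \subseteq \comvar{\TMT}{\TMS}$; I must produce $\TNT \in \TN$ with $\TNT =_X \TNS$ and $\TMT \sim^\omega \TNT$. The idea is to realise, by $\omega$-saturation of the $(\tau\cup\{\TM\})$-structure $\N$, the type in the variables $\ty = (y_1,\dots,y_n)$ over the finite parameter tuple $\TTNS$ given by
\[
    p(\ty) \coloneqq \{\vartheta_X(\TTNS,\ty)\} \cup \{\phi^*(\ty) \mid \phi \in \L[\Omega],\ \M \models_\TMT \phi\}.
\]
Here $\vartheta_X(\TTNS,\ty)$ forces, by the lemma on $\vartheta_X$, that any realising tuple is of the form $\TTNT$ for some $\TNT \in \TN$ with $\TNT =_X \TNS$, while the remaining formulae force $\TNT$ to satisfy the whole $\L[\Omega]$-theory $\Th_{\L[\Omega]}(\TMT)$. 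Since $\Omega$ is closed under negation, this theory is complete, so by the preceding lemma on $\L[\Omega]$-theories its realisation is equivalent to $\TMT \equiv_{\L[\Omega]} \TNT$, i.e.\ $\TMT \sim^\omega \TNT$ by \cref{thm:bisim}.

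The one thing that needs checking is that $p$ is finitely satisfiable in $\N$, and this is where the finite approximations are consumed. A finite subset of $p$ is contained in $\{\vartheta_X(\TTNS,\ty)\} \cup \{\phi^*(\ty)\}$ for the single conjunction $\phi$ of the finitely many chosen $\L[\Omega]$-formulae, which again lies in $\L[\Omega]$ and satisfies $\M \models_\TMT \phi$; let $k \coloneqq \qr(\phi) \in \bbN$. Since $\TMS \sim^\omega \TNS$ gives $\TMS \sim^{k+1} \TNS$, the $(k+1)$-forth condition applied to $\TMT$ and the finite set $X \subseteq \comvar{\TMT}{\TMS}$ furnishes some $\TNU \in \TN$ with $\TMT \sim^k \TNU$ and $\TNU =_X \TNS$. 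By \cref{thm:bisim}, $\sim^k$ agrees with $\equiv_{\L[\Omega]}^k$, so $\TMT \sim^k \TNU$ together with $\qr(\phi)=k$ gives $\N \models_{\TNU} \phi$; combined with $\TNU =_X \TNS$ this shows that $\TTNU$ realises the chosen finite subset in $\N$. Hence $p$ is finitely satisfiable, $\omega$-saturation produces a realisation $\TTNT$, and the associated $\TNT$ is the required witness.

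The main obstacle is precisely this finite-satisfiability step: bisimilarity up to each finite rank only yields, for every $k$, a potentially different witness $\TNU$, and the work of the proof is to glue these into a single witness that is $\L[\Omega]$-equivalent to $\TMT$ while still agreeing with $\TNS$ on $X$. Saturation is exactly the tool that performs this gluing, turning the family of rank-$k$ witnesses into a realisation of the full type $p$. Everything else — the atom clause, the reduction to the forth condition by symmetry, and the translation between $\sim^{\bullet}$ and $\equiv_{\L[\Omega]}^{\bullet}$ — is routine given \cref{thm:bisim} and the preceding lemmas on $\Th_{\L[\Omega]}$ and $\vartheta_X$.
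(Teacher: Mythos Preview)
Your proof is correct and follows the same overall architecture as the paper: define the candidate relation $Z$, verify the atom clause, and for the forth clause set up the type $p(\ty)=\{\vartheta_X(\TTNS,\ty)\}\cup\Th_{\L[\Omega]}(\TMT)(\ty)$ and realise it by $\omega$-saturation. The paper takes $Z={\equiv_{\L[\Omega]}}$ while you take $Z={\sim^\omega}$, but by \cref{thm:bisim} these coincide, so this is purely cosmetic.

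The one genuine difference is in the finite-satisfiability step. The paper argues syntactically: it observes that $\exists\ty(\vartheta_X(\ty,\tx)\wedge\bigwedge\Phi_0(\ty))$ is equivalent to the standard translation $(\QE_X\bigwedge\Psi_0)^*$, and then transfers this $\L[\Omega]$-formula from $\TMS$ to $\TNS$ using $\TMS\equiv_{\L[\Omega]}\TNS$. You instead argue semantically, using the $(k{+}1)$-forth clause of $\sim^{k+1}$ directly to produce a rank-$k$ witness $\TNU$. Your route is a bit more economical---it avoids unpacking the standard translation of $\QE_X$ and the verification that the existential formula matches it---at the cost of invoking the finite-rank direction of \cref{thm:bisim} one extra time. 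Both arguments are standard and of comparable weight; neither buys anything substantial over the other.
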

\begin{proof}
    For ease of presentation we write $\sim$ and $\sim^\omega$, omitting the subscript $\L[\Omega]$.
    Let $\M$ and $\N$ be as described above.
    By our Ehrenfeucht-Fra\"issé Theorem, it follows from $\TMS \sim^\omega \TNS$ that $\TMS \equiv_{\L[\Omega]} \TNS$. Hence it suffices to show that
    \[
        Z \coloneqq \{(\TMS,\TNS) \in \TM \times \TN \mid \M,\TMS \equiv_{\L[\Omega]} \N,\TNS\}.
    \]
    is a bisimulation between $\M$ and $\N$.
    Given $(\TMS,\TNS) \in Z$, we clearly have $\TMS \equiv^0_{\L[\Omega]} \TNS$,
    so $\TMS$ and $\TNS$ agree on atoms.
    We proceed by checking the forth condition.

    Let $\TMT \in \TM$ and $X \subseteq \comvar{\TMT}{\TMS} = \{x \in \Vv \mid \TMT =_x \TMS\}$ be finite.
    Set
    \[
        p(\ty) \coloneqq \{\vartheta_X(\ty,\TNS)\} \cup \Th_{\L[\Omega]}(\TMT)(\ty).
    \]
    We want to show that $p$ is a type of $\N$ with parameters $\TTNS$,
    i.e.~that $p$ together with the first-order theory $\Th_{\FO}(\N,\TTNS)$ is satisfiable.
    For a compactness argument we consider a finite $\Phi_0(\ty) \subseteq \Th_{\L[\Omega]}(\TMT)(\ty)$
    and define
    \[
        \phi(\tx) \coloneqq
        \exists \TTNT(\vartheta_X(\TTNT,\tx) \land \bigwedge \Phi_0(\TTNT)).
    \]
    Now there exists some finite
    $\Psi_0 \subseteq \L[\Omega]$ with $\Phi_0(\ty) = \{\phi^*(\ty) \mid \phi \in \Psi_0\}$.
    Since our translation commutes with $\land$, we obtain $(\bigwedge \Psi_0)^* = \bigwedge \Phi_0$.
    We claim that $\phi \equiv (\QE_X \bigwedge \Psi_0)^*$.
    Clearly any $(\tau \cup \{\TM\})$-structure can be interpreted as the corresponding
    structure to a $(\tau,\Vv)$ dependence model $(\frA,T'')$.
    For such a $(\frA,T'')$ and any $a \in T''$ we have
    \begin{align*}
        &(\frA,T'') \models^{\FO} \phi(\ta)\\
        \Iff\quad&\text{there exists $\tb$ over $\frA$ with $(\frA,T'') \models^{\FO} \vartheta_X(\tb,\ta)$
            and $(\frA,T'') \models^{\FO} \left(\bigwedge \Psi_0\right)^*(\tb)$}\\
        \Iff\quad&\text{there exists $b \in T''$ with $b =_X a$
            and $(\frA,T'') \models_{b} \bigwedge \Psi_0$}\\
        \Iff\quad&(\frA,T'') \models_{a} \QE_X \bigwedge \Psi_0\\
        \Iff\quad&(\frA,T'') \models^{\FO} \left(\QE_X \bigwedge \Psi_0\right)^*(\ta).
    \end{align*}
    Thus, as claimed, we have $\phi \equiv \left(\QE_X\bigwedge \Psi_0\right)^*$.
    From $\M \models_{\TMT} \bigwedge \Psi_0$ and $\TMT =_X \TMS$ we obtain
    $\M \models_{\TMS} \QE_X\bigwedge \Psi_0$.
    Since $\TMS \equiv_{\L[\Omega]} \TNS$ we get $\N \models_{\TNS} \QE_X\bigwedge \Psi_0$
    and therefore $\N \models^{\FO} \phi(\TTNS)$.
    Hence $\{\vartheta_X(\ty,\TTNS)\} \cup \Phi_0(\ty) \cup \Th_{\FO}(\N,\TTNS)$ is satisfiable.
    It follows by compactness that
    $p(\ty)$ is a type with finitely many parameters (namely $\TTNS$) over $\N$.

    By $\omega$-saturatedness we obtain some tuple $\TTNT$ in $\frN$
    with $\N \models^{\FO} p(\TTNT)$.
    From the definition of $\vartheta_X$ we see that
    $\TTNT = \TNT(\tv)$ for some $\TNT \in \TN$ with $\TNT =_X \TNS$.
    Furthermore $\N \models^{\FO} \Th_{\L[\Omega]}(\TMT)(\TTNT)$,
    so $\TMT \equiv_{\L[\Omega]} \TNT$ and hence $(\TMT,\TNT) \in Z$,
    which proves the forth condition.
    The back condition is shown analogously.
    We conclude that $Z$ is a bisimulation.
\end{proof}

\begin{theorem}[Expressive Completeness]\label{thm:exprcomp}
    For any $\phi \in \FO(\tau \cup \{\TM\})$ the following are equivalent:
    \begin{enumerate}
        \item $\phi$ is $\L[\Omega]$-bisimulation-invariant,
            i.e.~for all $\M,\TMS$ and $\N,\TNS$ of type $(\tau,\Vv)$
            \[
                \TMS \sim_{\L[\Omega]} \TNS
                \qquad\Longrightarrow\qquad
                \M \models^{\FO} \phi(\TTMS)
                \quad\text{iff}\quad
                \N \models^{\FO} \phi(\TTNS).
            \]
        \item $\phi \equiv \psi^*$ for some $\psi \in \L[\Omega](\tau,\Vv)$,
            so that for all $\M,\TMS$ of type $(\tau,\Vv)$
            \[
                \M \models^{\FO} \phi(\TTMS)
                \qquad\Iff\qquad
                \M \models_{\TMS} \psi.
            \]
    \end{enumerate}
    As this result holds a fixed $\Omega$ and arbitrary finite types $(\tau,\Vv)$,
    we write $\FO/{\sim_{\L[\Omega]}} \equiv \L[\Omega]$.
\end{theorem}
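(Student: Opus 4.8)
The plan is to prove the two implications separately, with (2)$\Rightarrow$(1) routine and (1)$\Rightarrow$(2) carrying all the weight. For (2)$\Rightarrow$(1), suppose $\phi \equiv \psi^*$ for some $\psi \in \L[\Omega]$. If $\TMS \sim_{\L[\Omega]} \TNS$, then by \cref{thm:bisim} we have $\TMS \equiv_{\L[\Omega]} \TNS$, so $\M \models_\TMS \psi$ iff $\N \models_\TNS \psi$; applying the standard translation on both sides gives $\M \models^{\FO} \phi(\TTMS)$ iff $\N \models^{\FO} \phi(\TTNS)$, which is exactly the invariance condition.

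For the hard direction (1)$\Rightarrow$(2) I would follow the saturation-based proof of van Benthem's theorem. Let $\phi$ be $\L[\Omega]$-bisimulation-invariant and set
\[
    \Gamma \coloneqq \{\psi^* \mid \psi \in \L[\Omega],\ \phi \models^{\FO} \psi^*\},
\]
the set of $\L[\Omega]$-consequences of $\phi$. Since $\L[\Omega]$ is closed under conjunction and the translation commutes with $\land$, every finite subconjunction of $\Gamma$ is again the translation of a single $\L[\Omega]$-formula. Hence, if I can show $\Gamma \models^{\FO} \phi$, compactness yields finitely many $\psi_1^*,\dots,\psi_k^* \in \Gamma$ with $\bigwedge_i \psi_i^* \models^{\FO} \phi$, and $\psi \coloneqq \bigwedge_i \psi_i$ then satisfies $\phi \equiv \psi^*$, as required.

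It remains to establish $\Gamma \models^{\FO} \phi$. So let $\N,\TNS \models^{\FO} \Gamma$; I want $\N,\TNS \models^{\FO} \phi$. First I would argue that $\{\phi\} \cup \Th_{\L[\Omega]}(\TNS)$ is finitely satisfiable. Indeed, if some finite conjunction $\eta^*$ of members of $\Th_{\L[\Omega]}(\TNS)$ were inconsistent with $\phi$, then $\phi \models^{\FO} \neg\eta^*$; because $\Omega$ is closed under negation, $\L[\Omega]$ is semantically closed under negation as well (push $\neg$ inward via De Morgan and the dualities $\neg\QE_X \equiv \QD_X\neg$, $\neg\QD_X \equiv \QE_X\neg$), so $\neg\eta^* \equiv (\neg\eta)^*$ with $\neg\eta \in \L[\Omega]$, whence $\neg\eta^* \in \Gamma$; but $\N,\TNS$ satisfies both $\Gamma$ and $\eta^*$, a contradiction. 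By compactness I obtain a pointed model which, read as a dependence model $\M,\TMS$, satisfies $\M \models^{\FO} \phi(\TTMS)$ and $\M,\TMS \models^{\FO} \Th_{\L[\Omega]}(\TNS)$; closure under negation again forces $\Th_{\L[\Omega]}(\TMS) = \Th_{\L[\Omega]}(\TNS)$, i.e.\ $\TMS \equiv_{\L[\Omega]} \TNS$.

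Now I have $\M,\TMS \models^{\FO}\phi$ together with $\TMS \equiv_{\L[\Omega]} \TNS$, and only need to transport $\phi$ across to $\N,\TNS$. The main obstacle is precisely that invariance is phrased for full bisimilarity $\sim_{\L[\Omega]}$, whereas so far I only have elementary $\L[\Omega]$-equivalence; this is the Hennessy--Milner gap, bridged by $\omega$-saturation. I would pass to $\omega$-saturated elementary extensions $\M^+ \succ \M$ and $\N^+ \succ \N$ of the associated $(\tau\cup\{\TM\})$-structures (the teams stay nonempty and the arity of $\TM$ is fixed, so these remain dependence models). As $\phi$ and every $\psi^*$ are first-order, elementary extension preserves both $\M^+,\TMS \models^{\FO}\phi$ and $\TMS \equiv_{\L[\Omega]} \TNS$ at the tuples $\TTMS,\TTNS$. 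By \cref{thm:bisim} (using that $\Omega$ and the type are finite) $\equiv_{\L[\Omega]}$ coincides with $\sim^{\omega}_{\L[\Omega]}$, and \cref{lemma:omega-sat} then upgrades $\sim^{\omega}_{\L[\Omega]}$ to full $\sim_{\L[\Omega]}$ over the saturated models. Invariance of $\phi$ now yields $\N^+,\TNS \models^{\FO}\phi$, and descending along the elementary extension gives $\N,\TNS \models^{\FO}\phi$. This proves $\Gamma \models^{\FO}\phi$ and completes the argument.
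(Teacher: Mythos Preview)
Your proof is correct and follows essentially the same route as the paper's: define the set of $\L[\Omega]$-consequences of $\phi$, reduce to showing it entails $\phi$ via compactness, use closure of $\Omega$ under negation to get a model of $\phi$ that is $\L[\Omega]$-equivalent to the given one, and then bridge the Hennessy--Milner gap by passing to $\omega$-saturated elementary extensions and invoking \cref{thm:bisim} together with \cref{lemma:omega-sat}. Apart from swapping the roles of the two models (you call the arbitrary model of $\Gamma$ by $\N,\TNS$ and the compactness witness by $\M,\TMS$, whereas the paper does the reverse), the argument is the same.
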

\begin{proof}
    As in the previous proof, we omit $\L[\Omega]$ from $\sim$ and $\sim^\omega$,
    and say bisimulation instead of $\L[\Omega]$-bisimulation.
    The implication ``(2) $\Rightarrow$ (1)'' is clear by our Ehrenfeucht-Fra\"issé Theorem.
    Now assume that $\phi$ is bisimulation-invariant
    and consider the set of its $\L[\Omega]$-consequences:
    \[
        C(\phi)(\tx) \coloneqq
        \{\psi^*(\tx) \mid \psi \in \L[\Omega](\tau,\Vv)\text{ and } \phi \models \psi^*\}.
    \]
    We claim that it suffices to show $C(\phi) \models \phi$.
    Indeed, using compactness this yields a finite subset $C_0 \subseteq C(\phi)$
    with $C_0 \models \phi$ and therefore $\bigwedge C_0 \equiv \phi$.
    But $C_0 = \{\psi^* \mid \psi \in \Psi\}$ for some (finite) $\Psi \subseteq \L[\Omega](\tau,\Vv)$,
    so by setting $\psi = \bigwedge \Psi$,
    we obtain the desired result that $\phi \equiv \bigwedge C_0 = \psi^*$
    for some $\psi \in \L[\Omega](\tau,\Vv)$.

    If $C(\phi)$ is unsatisfiable, $C(\phi) \models \phi$ holds vacuously.
    Hence let $\M,\TMS$ be a dependence model of type $(\tau,\Vv)$ with $\M \models^{\FO} C(\phi)(\TTMS)$.
    We need to show $\M \models^{\FO} \phi(\TTMS)$.
    If $\Th_{\L[\Omega]}(\TMS) \cup \{\phi\}$ were unsatisfiable,
    by compactness there would be some finite $\Phi_0 \subseteq \Th_{\L[\Omega]}(\TMS)$
    such that $\Phi_0 \cup \{\phi\}$ is unsatisfiable.
    This implies $\phi \models \lnot \bigwedge \Phi_0$.
    Since $\Omega$ is closed under negation, $\L[\Omega]$ is too,
    and we obtain $\lnot \bigwedge \Phi_0 \in C(\phi)$.
    This contradicts $\M \models^{\FO} C(\phi)(\TTMS)$ and $\Phi_0 \subseteq \Th_{\L[\Omega]}(\TMS)$.

    Therefore $\Th_{\L[\Omega]}(\TMS) \cup \{\phi\}$ has some model $\N,\TNS$.
    Since $\N \models^{\FO} \Th_{\L[\Omega]}(\TMS)(\TTNS)$ we obtain $\TMS \equiv_{\L[\Omega]} \TNS$.
    Now take $\omega$-saturated elementary extensions $\M \preceq (\frM_+,\TM_+)$
    and $\N \preceq (\frN_+,\TN_+)$.
    By elementary extension we have $(\frM_+,\TM_+),\TMS \equiv_{\L[\Omega]} \M,\TMS$
    and likewise for $(\frN_+,\TN_+),\TNS$. It follows that
    \[
        (\frM_+,\TM_+),\TMS
        \ \equiv_{\L[\Omega]}\ \M,\TMS
        \ \equiv_{\L[\Omega]}\ \N,\TNS
        \ \equiv_{\L[\Omega]}\ (\frN_+,\TN_+),\TNS.
    \]
    Our Ehrenfeucht-Fra\"issé Theorem yields $(\frM_+,\TM_+),\TMS \sim^{\omega} (\frN_+,\TN_+),\TNS$.
    But now we can apply \cref{lemma:omega-sat} to find
    $(\frM_+,\TM_+),\TMS \sim (\frN_+,\TN_+),\TNS$.
    Since $\phi \in \FO$, we infer $(\frN_+,\TN_+) \models^{\FO} \phi(\TTNS)$
    from $\N \models^{\FO}\phi(\TTNS)$ by elementary extension.
    Moreover, $\phi$ is bisimulation-invariant,
    so we obtain $(\frM_+,\TN_+) \models^{\FO}\phi(\TTMS)$.
    Again by elementary extension, we arrive at $\M \models^{\FO}\phi(\TTMS)$.

    With this we showed that whenever $\M \models^{\FO} \phi(\TTMS)$
    we also have $\M \models^{\FO} C(\phi)(\TTMS)$
    for an arbitrary $\M,\TMS$ of type $(\tau,\Vv)$. Hence $C(\phi) \models \phi$.
    We discussed above how this concludes the proof of the theorem.
\end{proof}

\section{Local Dependence and Guarded Fragments of \texorpdfstring{$\FO$}{FO}}
\label{sec:comp:gf}

We have seen that some of the considered local logics can be embedded into $\GF$,
the guarded fragment of first-order logic. However, it was unclear whether this
is also the case for $\LFD$, which features the local dependence atom.
We now discuss the relationship of local dependence with
$\GF$ and other guarded fragments of $\FO$.
Arguably one of the most natural generalizations of $\GF$ within $\FO$
is the clique-guarded fragment $\CGF$, introduced in \cite{Graedel99a}.

Recall that the Gaifman graph of a relational $\tau$-structure $\frA$ 
has as its universe the universe of $\frA$, and an edge between two distinct elements
$a\neq b$ if these coexist in some atomic fact of $\frA$,
i.e.~they occur together in some $\tc \in R^{\frA}$ for some $R \in \tau$.
Obviously, guarded tuples in a relational structure $\frA$ induce
a clique in the Gaifman graph of $\frA$.
Moreover, for each finite relational $\tau$ and $k \in \bbN$, there is a positive,
existential first-order formula $clique(x_1,\dots,x_k)$ which is satisfied at a tuple $\ta$
of some $\tau$-structure $\frA$ if and only if $\ta$ induces a clique in the Gaifman graph of $\frA$.
$\CGF$ is then defined in an analogous way to $\GF$,
but always uses $clique$ of the right arity as a guard, in the sense of
\[
    \forall \ty(clique(\tx\ty) \impl \phi(\tx\ty))
    \qquad\text{and}\qquad
    \exists \ty(clique(\tx\ty) \land \phi(\tx\ty)).
\]
Obviously $\ML \subsetneq \GF \subsetneq \CGF \subsetneq \FO$.
Guarded bisimulations between structures $\frA,\ta$ and $\frB,\tb$
are defined as sets $I$ of partial isomorphisms between $\frA$ and $\frB$
such that $(\ta \mapsto \tb) \in I$ and $I$ is closed under suitable back and forth conditions.
This notion naturally extends to clique-guarded bisimulation for $\CGF$, as described in \cite{Graedel99a}.
For a survey of various notions of bisimulation
and their uses for understanding expressive power,
model-theoretic and algorithmic properties of modal and guarded logics,
we refer the reader to \cite{GraedelOtt14}.

\paragraph*{Other first-order translations}
Apart from the standard translation discussed until now,
one can also consider other first-order translation of $\L[\Omega]$.
In \cite{BaltagBen20}, Baltag and van Benthem emphasized the modal perspective
of $\LFD \equiv \L[D,\anon]$, and presented a modal semantics for $\LFD$
over so-called standard relational models.
The semantics of local dependence and local independence
only require knowledge about structure of the team with respect to the ''agreement''-relations $=_X$
for $X \subseteq \Vv$.
In this sense, one introduces abstract equivalence relations $\sim_X$ which allow
us to abstract away the actual values of assignments while preserving the intended semantics.
A dependence model induces such a standard relational model in a straightforward way:
the new universe is the team, the relations become monadic and we add the equivalences $\sim_x$
for $x \in \Vv$. Formulae in the base logic $\L$ are then translated as
\[
    \tr^\TMS(R\tx) \coloneqq R_{\tx} \TMS
    \qquad\text{and}\qquad
    \tr^\TMS(\QD_X\phi) \coloneqq \forall \TMT\left(\bigwedge_{x \in X} \TMT \sim_x \TMS \impl \tr^\TMT(\phi)\right),
\]
where $\tr$ commutes with boolean connectives. For the local atoms, we set
\[
    \tr^\TMS(D_xy) \coloneqq \forall \TMT\left(\TMT \sim_x \TMS \impl \TMT \sim_y \TMS\right)
    \qquad\text{and}\qquad
    \tr^\TMS(\Ind_x y) \coloneqq \forall \TMT \exists \TMU(\TMS \sim_x \TMU \land \TMU \sim_y \TMT).
\]
If we restrict the considered class of structures to
those that are induced by dependence models under this correspondence
(so in particular they have to interpret the $\sim_x$ as equivalence relations)
then this translation shares many of the nice characteristics of the standard translation.
Indeed, using the modal translation in the definition of $\Th_{\L[\Omega]}$
and setting $\vartheta_X(\TMS,\TMT) = \bigwedge_{x \in X} \TMS \sim_x \TMT$,
the proof of the characterisation theorem in the last section
also works in this modal context, with minimal adaptions.

\paragraph*{Expressive Incomparability}
In the following we want to prove that under these translations,
local dependence is inherently incompatible with the clique-guarded fragment of first-order logic.

Let $\tau$ be a relational vocabulary and $\frA,\frB$ two $\tau$-structures
with disjoint domains $A,B$.
Their disjoint union is the $\tau$-structure $\frC = \frA \uplus \frB$ with domain 
$C = A \uplus B$ so that $\frC \restr A$ and $\frC \restr B$ are isomorphic
to $\frA$ and $\frB$ respectively, and for all tuples $\tc$ over $C$
which contain elements from both $A$ and $B$ we have $\frC \models \lnot R\tc$
for every $R \in \tau$.

\begin{proposition}\label{fact:bisim-disj}
    The relevant bisimulations for $\GF$ and $\CGF$ are compatible with
    disjoint unions of bisimilar models.
    Specifically, let $I$ be a clique-guarded bisimulation between
    $\frA$ and $\frB$, and $I'$ be one between $\frA$ and $\frC$.
    Then $I \cup I'$ is a clique-guarded bisimulation between $\frA$ and $\frB \uplus \frC$.
    In particular, from $\frA,\ta \sim_{\CGF} \frB,\tb$ and
    $\frA,\ta \sim_{\CGF} \frC,\tc$ we infer
    \[
        \frA,\ta \sim_{\CGF} (\frB\uplus\frC),\tb
        \qquad\text{and}\qquad
        \frA,\ta \sim_{\CGF} (\frB\uplus\frC),\tc.
    \]
    In this setting, if $\phi(\tx) \in \CGF$, then $\frA \models \phi(\ta)$ holds
    if and only if $(\frB \uplus \frC) \models \phi(\tb) \land \phi(\tc)$.
\end{proposition}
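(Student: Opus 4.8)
The plan is to exploit the locality of guardedness in a disjoint union. Since no atomic fact of $\frB \uplus \frC$ can mix elements of $B$ and $C$, the Gaifman graph of $\frB \uplus \frC$ is the disjoint union of the Gaifman graphs of $\frB$ and $\frC$; consequently every clique of $\frB \uplus \frC$ lies entirely within $B$ or entirely within $C$. This single structural fact is what makes $I \cup I'$ behave well, and it is the observation I would put first. I treat the clique-guarded case explicitly; the argument for $\GF$ is identical with ``clique'' replaced by ``guarded set''.

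First I would note that $\frB$ and $\frC$ are induced substructures of $\frB \uplus \frC$, so any partial isomorphism $f$ from $\frA$ into $\frB$ (a member of $I$) is still a partial isomorphism from $\frA$ into $\frB \uplus \frC$, and likewise for members of $I'$ with $\frC$. Hence $I \cup I'$ is a set of partial isomorphisms from $\frA$ into $\frB \uplus \frC$ containing both $\ta \mapsto \tb$ and $\ta \mapsto \tc$. The forth condition is then immediate: it quantifies over cliques of $\frA$, which are unaffected by the construction, so for $f \in I$ (resp.\ $f \in I'$) and a clique $Y$ of $\frA$ the required extension is supplied directly by the forth condition of $I$ (resp.\ $I'$) and already lies in $I \cup I'$.

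The crux is the back condition. Let $f \in I \cup I'$ and let $Z$ be a clique of $\frB \uplus \frC$; by the observation above, $Z \subseteq B$ or $Z \subseteq C$, and I treat $Z \subseteq B$ (the other case being symmetric). If $f \in I$, then $Z$ is a clique of $\frB$ and the back condition of $I$ directly yields a matching $g \in I \subseteq I \cup I'$ with image $Z$. If instead $f \in I'$, then the image of $f$ lies in $C$ and so meets $Z \subseteq B$ only in the empty set; the agreement requirement is thus vacuous, and I only need to produce \emph{some} $g \in I \cup I'$ with image $Z$. Such a $g$ exists because $I$ is nonempty: starting from any element of $I$ and applying the back condition of $I$ to the clique $Z$ of $\frB$ produces $g \in I$ with image $Z$. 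In every case the required $g$ is found, so $I \cup I'$ satisfies back and is a clique-guarded bisimulation between $\frA$ and $\frB \uplus \frC$. The displayed bisimilarities follow by reading off $\ta \mapsto \tb$ and $\ta \mapsto \tc$ as seeds.

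For the final equivalence I would invoke invariance of $\CGF$ under clique-guarded bisimulation (\cite{Graedel99a}): from $\frA, \ta \sim_{\CGF} (\frB \uplus \frC), \tb$ and $\frA, \ta \sim_{\CGF} (\frB \uplus \frC), \tc$ we obtain $\frA \models \phi(\ta) \Iff (\frB \uplus \frC) \models \phi(\tb)$ and $\frA \models \phi(\ta) \Iff (\frB \uplus \frC) \models \phi(\tc)$. Since both biconditionals share the antecedent $\frA \models \phi(\ta)$, this is equivalent to each conjunct and hence to their conjunction, yielding $\frA \models \phi(\ta) \Iff (\frB \uplus \frC) \models \phi(\tb) \land \phi(\tc)$. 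The one genuinely delicate step, which I would spell out carefully, is the back condition in the cross case $f \in I'$, $Z \subseteq B$: there one must observe that the empty overlap makes the agreement requirement vacuous and that mere nonemptiness of $I$ suffices to realise the clique $Z$.
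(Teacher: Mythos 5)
Your proof is correct, and it matches what the paper intends: the paper states this proposition without any proof, treating it as a routine fact about (clique-)guarded bisimulations, and your argument is exactly the natural verification it leaves implicit. You correctly isolate the two points that make it work --- that the Gaifman graph of $\frB \uplus \frC$ is the disjoint union of the Gaifman graphs, so no clique (or guarded set) mixes the two components, and that in the cross case of the back condition the empty overlap makes the agreement requirement vacuous, so nonemptiness of the other bisimulation suffices to realise the clique.
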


This highlights a small but important
difference between these guarded fragments and our logics of local dependence.
Namely, an analogue for invariance under disjoint union
cannot hold for dependence models in presence of the local dependence atom.
Indeed, we can define constancy of a variable $x$ via $D_\emptyset x$,
which is clearly not invariant under disjoint unions.

\begin{proposition}\label{prop:notgf}
    The standard and modal translations do not embed $\L[D]$ into $\CGF$.
    More specifically, there does not exist a $\CGF$-sentence $\psi$
    that is equivalent to either translation of the constancy atom $D_\emptyset x$.
\end{proposition}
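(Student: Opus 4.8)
The plan is to exploit exactly the obstruction identified just before the statement: the constancy atom $D_\emptyset x$ asserts that $x$ takes the same value across \emph{all} assignments in the team, and this property is destroyed by taking disjoint unions. The strategy is to construct two dependence models that are $\CGF$-indistinguishable (under the relevant translation) yet disagree on $D_\emptyset x$, thereby contradicting the existence of any equivalent $\CGF$-sentence $\psi$. Concretely, I would build a ``small'' dependence model $\M$ on which $D_\emptyset x$ holds, and then produce a second model by a disjoint-union-style construction on which $D_\emptyset x$ fails, while arguing that the two translated first-order structures are clique-guarded bisimilar.

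The key steps, in order, are as follows. First, I would pin down the target first-order structure: under the standard translation this is the $(\tau\cup\{\TM\})$-expansion of a dependence model, while under the modal translation it is the standard relational model with monadic predicates and equivalence relations $\sim_x$. I would treat the modal translation as the main case, since there the Gaifman-graph/clique structure is cleanest, and note that the standard-translation case follows by an analogous (or reducing) argument. Second, I would exhibit a dependence model $\M$ with $\M\models D_\emptyset x$ — for instance a team in which $x$ is constant — and verify that its translated structure is clique-guarded bisimilar to itself and to the two ``halves'' of a disjoint union. Third, I would invoke \cref{fact:bisim-disj}: taking $\frB=\frC=$ (the translation of) $\M$, clique-guarded bisimilarity is preserved under disjoint union, so the translated structure of $\M$ is $\CGF$-bisimilar to the disjoint union $\frB\uplus\frC$. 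The crucial point is that this disjoint union corresponds to a dependence model $\M'$ whose team is (essentially) two disjoint copies, on which $x$ is \emph{no longer} constant, so $\M'\models\lnot D_\emptyset x$. Fourth, I would close the argument: if some $\CGF$-sentence $\psi$ were equivalent to $(D_\emptyset x)^*$, then by \cref{fact:bisim-disj} we would have $\frA\models\psi$ iff $(\frB\uplus\frC)\models\psi$, forcing $\M$ and $\M'$ to agree on $D_\emptyset x$ — a contradiction.

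The main obstacle is the bookkeeping needed to check that the disjoint union $\frB\uplus\frC$ of the translated structures actually \emph{is} the translation of a legitimate dependence model (or is elementarily equivalent to one), especially regarding the equivalence relations $\sim_x$. In a genuine disjoint union, no atomic fact links the two components, so $\sim_x$ splits into two separate equivalence classes; I must confirm that this still yields a well-formed standard relational model in the admissible class (each $\sim_x$ remains an equivalence relation) and that on it $D_\emptyset x$ indeed fails because the two copies furnish distinct $x$-values. A secondary subtlety is ensuring the clique-guarded bisimulation of \cref{fact:bisim-disj} genuinely applies — i.e.\ that the reflexive bisimulation of $\M$ with itself satisfies the hypotheses — but this is routine once the disjoint-union model is set up correctly. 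Handling the standard translation in parallel requires verifying that its Gaifman graph behaves the same way under disjoint union, which I expect to be straightforward given the monadic-plus-team structure.
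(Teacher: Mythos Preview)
Your proposal is correct and follows essentially the same approach as the paper: both arguments exploit that $\CGF$ is invariant under disjoint unions (via \cref{fact:bisim-disj}) while the constancy atom $D_\emptyset x$ is not, yielding the desired separation. The paper's proof is terser and treats the standard translation as the primary case (noting that the disjoint union of the translated structures is again the translation of a dependence model), whereas you foreground the modal translation and spell out more of the bookkeeping, but the underlying idea is identical.
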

\begin{proof}
    In the context of the standard translation,
    it is clear that the disjoint union of the first-order structures
    corresponding to two dependence models is itself the
    first-order structure corresponding to the disjoint union of the dependence models.
    Unlike $D_\emptyset x$, $\CGF$ is invariant under disjoint unions,
    so there cannot exist a $\psi \in \CGF$ with $(D_\emptyset x)^* \equiv \psi$.

    The argument for the modal translation is similar, and relies on the fact
    that the class of first-order structures we consider is well-behaved
    with respect to disjoint union in the above sense.
\end{proof}

There are certainly also notions expressible in $\GF$ but not in $\L[\Omega]$.
For one, in the setting of the standard translation it becomes obvious that
$\L[\Omega]$ cannot make statements about assignments and values outside of the team;
we may have $\M \models^{\FO} \exists x Rx$ and simultaneously $\M \models \QGA \lnot Rx$.
Even in the modal setting we can easily find formulae in $\GF$ that
are not equivalent to any $\LFD \equiv \L[D,\anon]$ formula.

\begin{proposition}\label{prop:notgf2}
    Let $(\tau,\Vv)$ be a finite type with $x,y,z \in \Vv$ and let
    \[
        \phi(\TMS) \coloneqq \forall \TMT(\TMT \sim_x \TMS\ \impl\ (\TMT \sim_y \TMS \lor \TMT \sim_z \TMS)) \in \GF^2.
    \]
    Then there exists no $\psi \in \L[D,\anon]$ whose modal translation is equivalent
    to $\phi$ over the considered class of first-order structures.
\end{proposition}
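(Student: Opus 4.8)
The plan is to run the argument through the characterisation machinery developed above. By \cref{thm:exprcomp} (in its modal-translation version, which the paper notes holds with minimal adaptations), a first-order formula is equivalent to some $\psi\in\L[D,\anon]$ exactly when it is $\L[D,\anon]$-bisimulation-invariant. Hence it suffices to exhibit two pointed dependence models $\M,\TMS$ and $\N,\TNS$ of type $(\tau,\Vv)$ that are fully $\L[D,\anon]$-bisimilar, $\TMS\sim_{\L[D,\anon]}\TNS$, yet satisfy $\M,\TMS\models\phi$ and $\N,\TNS\not\models\phi$. Then \cref{thm:bisim} gives $\TMS\equiv_{\L[D,\anon]}^{\infty}\TNS$, so no $\L[D,\anon]$-formula, in particular none whose modal translation equals $\phi$, can separate the two points. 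I would restrict to $\Vv=\{x,y,z\}$ and interpret all relations of $\tau$ as empty in both models, so that relational atoms play no role in the bisimulation; leaving the remaining structure identical on both sides then yields the general case.

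The construction starts from the observation that any point satisfying $\phi$ has its $\sim_x$-class covered by its own $\sim_y$- and $\sim_z$-classes, i.e.\ it is ``cross-shaped''. Writing assignments as value-tuples for $(x,y,z)$, I would take $\TM \coloneqq L_y\cup L_z\cup G$ with $L_y=\{(0,0,c)\mid c\in M\}$, $L_z=\{(0,b,0)\mid b\in M\}$ and $G=\{(a,b,c)\mid a\neq 0\}$, with current point $\TMS=(0,0,0)$, and set $\N$ with $\TN\coloneqq\TM\cup\{e\}$ where $e=(0,1,1)$ lies off both lines, with the same current point $\TNS=(0,0,0)$. Then $\M,\TMS\models\phi$, since the $\sim_x$-class of $\TMS$ is $L_y\cup L_z$ and every member agrees with $\TMS$ on $y$ or on $z$; but $\N,\TNS\not\models\phi$, since $e$ is $\sim_x$-related to $\TNS$ yet differs on both $y$ and $z$. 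At the root the two points have the same local atom profile: every nontrivial $D_X w$ fails and every $\anon_X w$ holds. The subtlety, which I expect to be the crux, is that $\phi$ secretly forces dependence atoms at neighbouring assignments: for a line point $(0,0,c)$ with $c\neq 0$ we have $\M\models_{(0,0,c)} D_{xz}y$, because fixing $x$ together with a non-central value of $z$ forces $y$. Consequently a naive pairing of the cross $\M$ against a fully generic team is already $\L[D,\anon]$-distinguishable, so the witness $e$ must be added precisely to reinstate the correct distribution of $D$/$\anon$-profiles among the reachable assignments.

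Finally I would define the bisimulation $Z$ by matching assignments according to their local $D$/$\anon$-profile rather than by their names, re-routing the ``isolated'' line points and the witness $e$ to suitable counterparts (for instance sending the all-profile-trivial neighbours of $\TNS$, together with $e$, to the centre $\TMS$, and pairing the $D_{xz}y$-holding line points of the two teams with each other), and then verify the clauses of \cref{def:bisim}: agreement on atoms at related pairs and the back/forth conditions along finite $X\subseteq\Vv$. The main obstacle is exactly this verification, namely reconciling the forced dependence atoms $D_{xz}y$ and $D_{xy}z$ across the two teams and showing that every move along a finite $\sim_X$ in one model can be answered in the other while preserving the entire $D$/$\anon$-profile; a short Spoiler--Duplicator analysis at small depth already indicates that the re-routing closes up, which is what makes me confident the bisimulation exists even though writing out all cases is routine but tedious.
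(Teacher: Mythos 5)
Your reduction is the right one, and it is the same as the paper's: by \cref{thm:bisim} it suffices to produce two pointed dependence models that are $\L[D,\anon]$-bisimilar but disagree on $\phi$. The gap is that your concrete pair $\M,\TMS$ and $\N,\TNS$ is \emph{not} bisimilar --- indeed not even $2$-bisimilar --- so the final step of your plan cannot be carried out by any re-routing. Consider the $\L[D,\anon]$-formula of quantifier rank $2$
\[
    \psi_0 \coloneqq \QE_{x}\bigl(\anon_{xy}z \,\land\, \QD_{xy}\,\anon_{xz}y\bigr),
\]
where $\anon_X w \equiv \lnot D_X w$. In $\N$ the point $e=(0,1,1)$ witnesses $\N \models_{\TNS}\psi_0$: the assignments of $\TN$ with $x=0,y=1$ are exactly $e$ and $(0,1,0)$, and both satisfy $\anon_{xz}y$ (all of $L_z$ agrees with $(0,1,0)$ on $x,z$ but not on $y$; the point $(0,0,1)$ agrees with $e$ on $x,z$ but not on $y$). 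In $\M$, however, no assignment with $x=0$ satisfies the matrix: each $(0,b,0)$ with $b\neq 0$ fails $\anon_{xy}z$ (its $xy$-class is a singleton), while each $(0,0,c)$ fails $\QD_{xy}\anon_{xz}y$, because its $xy$-class is $L_y$, which contains points $(0,0,c')$ with $c'\neq 0$ satisfying $D_{xz}y$ --- exactly the forced atoms you yourself identified. Hence $\M \not\models_{\TMS}\psi_0$, so by \cref{thm:bisim} $\TMS \not\sim^{2} \TNS$. In game terms: the back move to $e$ along $X=\{x\}$ forces $e$ to be paired with the centre $(0,0,0)$, since that is the only assignment of $\TM$ with $x=0$ whose $D/\anon$-profile matches that of $e$; then the forth move from the centre to $(0,0,c)$, $c\neq 0$, along $X=\{x,y\}$ has no legal answer, since an answer must have $x=0$ and $y=1$, and neither $(0,1,0)$ nor $e$ satisfies $D_{xz}y$ in $\N$. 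Your re-routing idea cannot circumvent this, because the clause $\TNT =_X \TNS$ in \cref{def:bisim} constrains actual values, not profiles.

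The structural lesson is that inserting a \emph{single} witness point can never work: adding $e$ destroys the atoms $D_{xy}z$ and $D_{xz}y$ at the two adjacent line points $(0,1,0)$ and $(0,0,1)$ while leaving them intact at all other line points, and this asymmetry is $\L[D,\anon]$-definable. What works instead is \emph{duplication}, which is the mechanism of the paper's construction: keep $\M$ a single cross (the centre $(0,0,0)$ together with the two lines, all assignments sharing the same $x$-value; your block $G$ can be dropped), and let $\TN$ consist of \emph{two disjoint crosses} inside one team --- a second centre $(0,\alpha,\alpha)$ with its own two lines, using $y$- and $z$-values disjoint from those of the first cross. Then $\phi$ holds at the centre of $\M$ but fails at the first centre of $\N$ (the second centre agrees with it on $x$ but on neither $y$ nor $z$), and relating centres to centres, $L_y$-points to $L_y$-points and $L_z$-points to $L_z$-points gives a relation in which every pair agrees on all $D/\anon$-atoms and which satisfies the back and forth clauses of \cref{def:bisim}; no point is forced into a mismatched pairing, because the witness against $\phi$ now carries an entire profile-faithful copy of its environment with it.
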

\begin{proof}
    We give two $(\emptyset,\{x,y,z\})$ dependence models
    that are $\L[D,\anon]$-bisimilar, but where their corresponding
    (modal) first-order structures disagree on $\phi$.
    These are uniquely determined by the structure of their teams.
    The first team has $=_y$-classes $\{t_1,s\},\{t_2\}$ and $=_z$-classes $\{t_1\},\{s,t_2\}$.
    The second team has $=_y$-classes $\{t'_1,s',s''\},\{t'_2,t''_2\}$
    and $=_z$-classes $\{t'_1\},\{s', t'_2\},\{s'', t''_2\}$.
    In both teams all assignments agree on $x$.
    The bisimulation relates $s$ to $s'$ and $s''$, $t_1$ to $t'_1$, and $t_2$ to $t'_2$ and $t''_2$.
    Then $s \sim_{\L[D,\anon]} s'$ but $\phi$ holds at $s$, while it does not hold at $s'$.
    The example is easily extended to larger types containing the variables $x,y,z$.
\end{proof}

\begin{corollary}
    In the context of the standard and modal translations,
    $\LFD \equiv \L[D,\anon]$ is expressively incomparable
    to the guarded fragment $\GF$ and even the clique-guarded fragment $\CGF$.
\end{corollary}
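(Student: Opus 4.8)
The plan is to read off the corollary from the two preceding propositions by unpacking ``expressively incomparable'' into the four non-containments $\LFD \not\subseteq \GF$, $\GF \not\subseteq \LFD$, $\LFD \not\subseteq \CGF$, and $\CGF \not\subseteq \LFD$, and then to exploit the inclusion $\GF \subsetneq \CGF$ to reduce these four to just two genuine separations. Concretely, for the forward direction it suffices to separate $\LFD$ from the larger logic $\CGF$, since $\LFD \subseteq \GF$ would imply $\LFD \subseteq \CGF$; dually, for the backward direction it suffices to separate the smaller logic $\GF$ from $\LFD$, since $\CGF \subseteq \LFD$ would imply $\GF \subseteq \LFD$.

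For $\LFD \not\subseteq \CGF$ I would invoke \cref{prop:notgf}: the constancy atom $D_\emptyset x \in \L[D] \subseteq \LFD$ has, under both the standard and the modal translation, no equivalent $\CGF$-formula. The reason, already used there, is that $\CGF$ is invariant under disjoint unions of bisimilar models by \cref{fact:bisim-disj}, whereas constancy of $x$ is manifestly destroyed by taking disjoint unions. This single witness, together with $\GF \subseteq \CGF$, rules out both $\LFD \subseteq \GF$ and $\LFD \subseteq \CGF$. For the reverse direction $\GF \not\subseteq \LFD$ I would exhibit a guarded formula with no $\LFD$-equivalent in each translation context. In the modal setting this is precisely \cref{prop:notgf2}, whose formula $\phi \in \GF^2$ is separated from $\L[D,\anon] \equiv \LFD$ by a pair of $\L[D,\anon]$-bisimilar dependence models that disagree on $\phi$. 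In the standard-translation setting one can instead use the observation preceding \cref{prop:notgf2}: a formula such as $\exists x\, Rx$ speaks about elements lying outside the team, so it can be true while $\QGA \lnot Rx$ holds, and therefore cannot be equivalent to the translation of any $\LFD$-formula, whose truth value depends only on the team. Again by $\GF \subseteq \CGF$, this upgrades $\GF \not\subseteq \LFD$ to $\CGF \not\subseteq \LFD$.

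Combining the two directions yields that $\LFD$ is contained in neither $\GF$ nor $\CGF$ and that neither $\GF$ nor $\CGF$ is contained in $\LFD$, which is exactly the claimed incomparability in both translation contexts. Since the substantive work is carried out in \cref{prop:notgf} and \cref{prop:notgf2}, I expect no genuine obstacle at this stage; the only point demanding care is bookkeeping: keeping the standard and modal translations separate, and verifying that one witness on each side suffices once $\GF \subseteq \CGF$ is used to propagate the two core separations to all four required non-containments.
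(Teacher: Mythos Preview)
Your proposal is correct and matches the paper's own reasoning: the corollary is stated in the paper without explicit proof, and you have correctly assembled it from \cref{prop:notgf} (for $\LFD \not\subseteq \CGF$, hence $\not\subseteq \GF$), from \cref{prop:notgf2} together with the preceding $\exists x\,Rx$ observation (for $\GF \not\subseteq \LFD$ in the modal and standard settings respectively), and from the inclusion $\GF \subseteq \CGF$ to propagate these to all four non-containments.
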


\section{Model checking}\label{sec:mc}

In this last section we study the complexity of the model checking problem,
abbreviated $\MC(\L[\Omega])$, for local logics $\L[\Omega]$:
Given a formula $\psi\in\L[\Omega]$ and a fitting finite pointed dependence model $\M,\TMS$,
we ask whether $\M,\TMS \models \psi$.
We consider the \emph{combined complexity}, measured with respect to the size of all inputs.
It turns that this complexity is largely influenced
by how we encode the team $\TM$ of $\M$.

For comparison, we recall that the model checking for
first-order logic (\FO) is $\PSPACE$-complete in general, 
but $\PTIME$-complete for many interesting fragments
of $\FO$ \cite[Chapter 3.1]{Graedel+07}, 
including the modal fragment $\ML$, the bounded
variable fragments $\FO^k$ (for any $k\geq 2$), and also
the guarded fragment $\GF$  \cite{BerwangerGra01}.

For the rest of this section $(\tau,\Vv)$ denotes a finite type,
$\Omega\subseteq \{D,\anon,=,\neq,\in,\notin,\Ind,\lnot \Ind\}$,
is a collection of local atoms,
and we study the model checking problem of  
$\L[\Omega](\tau,\Vv)$.

Before we discuss these technicalities of team encodings,
we consider the special case of full models.
We call dependence models full if their team consists of the whole
assignment space, so $\M$ is full if $\TM = M^{\Vv}$.
Over the class of all full dependence models,
even the base logic $\L = \L[\emptyset]$ is as expressive as relational first-order logic without equality;
the quantifiers $\exists x$ and $\forall x$ then have the same semantics
as $\QE_{\Vv_x}$ and $\QD_{\Vv_x}$ respectively, where $\Vv_x \coloneqq \Vv \setminus \{x\}$.
For example, if $\M$ is full and $\Vv = \{x,y\}$, then
\[
    \M \models^{\FO} \forall x \exists y Rxy
    \quad\Iff\quad
    \M \models \QD_{y}\QE_x Rxy
    \quad\Iff\quad
    \M \models \QGA\QE_x Rxy.
\]
This was already noted for $\LFD \equiv \L[D,\anon]$ by Baltag and van Benthem in \cite{BaltagBen20}.

It is therefore not surprising that when restricting attention to full dependence models,
we obtain the same lower bounds on the complexity of $\MC(\L[\Omega])$ as for first-order logic.
Denote by $\MCF(\L[\Omega])$ the restriction of model checking for $\L[\Omega]$
to instances where the team is always the full team,
and write $\MCFK(\L[\Omega])$ for its $k$-variable restriction.
Thus the inputs are of the form $(\psi,\frM,\tv,\TTMS)$ where $\tv$ is an enumeration of $\Vv$
and $\TTMS = \TMS(\tv)$ for $\TMS \in M^{\Vv}$ is an encoding of
the current assignment at which $\psi$ should be evaluated.
By the same techniques as for first-order logic, one
immediately obtains the following hardness results.

\begin{proposition}\label{prop:mcf-pspace}\label{prop:mcfk-ptime}
    $\MCF(\L)$ is $\PSPACE$-hard and $\MCFK(\L)$ is $\PTIME$-hard for $k \geq 2$.
    Since $\L$ is a sublogic of $\L[\Omega]$, these results also hold
    for $\L[\Omega]$ with arbitrary $\Omega$.
\end{proposition}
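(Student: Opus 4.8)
The plan is to reduce from the corresponding hardness results for first-order logic, using the observation stated just above: over full dependence models, the base logic $\L$ captures relational $\FO$ without equality by simulating $\exists x$ with $\QE_{\Vv_x}$ and $\forall x$ with $\QD_{\Vv_x}$, where $\Vv_x = \Vv \setminus \{x\}$. Since the classical model checking problem for relational $\FO$ (without equality) is $\PSPACE$-complete, and its $k$-variable fragment $\FO^k$ is $\PTIME$-complete for $k \geq 2$ (see \cite[Chapter 3.1]{Graedel+07}), it suffices to exhibit a logspace-computable reduction translating an $\FO$ model checking instance into an $\MCF(\L)$ instance that preserves the answer.

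First I would fix the reduction on formulae. Given a relational sentence or formula $\alpha$ of $\FO$ over vocabulary $\tau$ with variables among $\Vv$, I define $\alpha^{\sharp} \in \L(\tau,\Vv)$ by recursion: relational atoms and boolean connectives are left unchanged, and each quantifier is rewritten as $(\exists x\, \beta)^{\sharp} \coloneqq \QE_{\Vv_x} \beta^{\sharp}$ and $(\forall x\, \beta)^{\sharp} \coloneqq \QD_{\Vv_x} \beta^{\sharp}$. On the model side, given a $\tau$-structure $\frM$ one forms the full dependence model $\M = (\frM, M^{\Vv})$, and any assignment $\TMS$ from the $\FO$ instance is encoded as the tuple $\TTMS = \TMS(\tv)$. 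The semantic equivalence $\M \models^{\FO} \alpha(\TTMS) \Iff \M \models_{\TMS} \alpha^{\sharp}$ is exactly the correspondence already noted in the excerpt, extended by a routine induction over the structure of $\alpha$; the quantifier step uses that over a full team, moving to an arbitrary $\TMT$ with $\TMT =_{\Vv_x} \TMS$ is precisely moving to an assignment that differs from $\TMS$ only at $x$, and all such assignments are present since $\TM = M^{\Vv}$.

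For the $\PSPACE$-hardness of $\MCF(\L)$ this gives the result directly from $\PSPACE$-hardness of $\FO$ model checking, since the reduction is clearly computable in logarithmic space: $\alpha^{\sharp}$ has the same size as $\alpha$ up to the annotation of each quantifier with a set $\Vv_x$, and the full team need not be written out explicitly — it is determined by $\frM$ and $\Vv$, so the instance $(\alpha^{\sharp}, \frM, \tv, \TTMS)$ is polynomial (indeed linear) in the input. For the $\PTIME$-hardness of $\MCFK(\L)$ with $k \geq 2$, I would observe that $\alpha^{\sharp}$ uses exactly the same variables as $\alpha$, so the translation maps $\FO^k$ into the $k$-variable fragment of $\L$, and $\PTIME$-hardness transfers from that of $\FO^k$ model checking.

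The only genuine subtlety, rather than an obstacle, is bookkeeping around the team encoding in the definition of $\MCF$: the inputs carry an enumeration $\tv$ and an explicit current assignment $\TTMS$, but the full team itself is implicit, so I must be careful that the reduction produces a well-formed $\MCF$ instance without ever materialising $M^{\Vv}$, whose size is exponential in $|\Vv|$. Since $\MCF$ is defined precisely so that the full team is implicit in the pair $(\frM, \tv)$, this is immediate. Finally, because $\L = \L[\emptyset]$ is a syntactic sublogic of $\L[\Omega]$ for every $\Omega$, every $\MCF(\L)$ (respectively $\MCFK(\L)$) instance is already an $\MCF(\L[\Omega])$ (respectively $\MCFK(\L[\Omega])$) instance, so both hardness results lift verbatim to arbitrary $\Omega$.
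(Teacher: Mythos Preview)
Your proposal is correct and follows essentially the same approach as the paper, which does not give a detailed proof but simply notes that over full models $\L$ captures relational $\FO$ without equality via the quantifier translation $\exists x \mapsto \QE_{\Vv_x}$, $\forall x \mapsto \QD_{\Vv_x}$, and then says the hardness results follow ``by the same techniques as for first-order logic.'' You have spelled out precisely this reduction, including the bookkeeping that the full team is implicit in the $\MCF$ input and that the translation preserves the variable count, so nothing is missing.
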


\subparagraph{Solving the model checking problem for $\L[\Omega]$.}
We consider a general method for solving the model checking problem for $\L[\Omega]$
that abstracts away the specifics on how the team is encoded.
We assume familiarity with alternating complexity classes
as given in \cite[Chapter 3]{BalcazarDiaGab90} or \cite[Chapters 16.2 \& 19.1]{Papadimitriou94}.
The model checking problem for first-order logic model can be solved 
in a standard way (cf.~\cite[Chapter 3.1]{Graedel+07}) by an alternating algorithm
which, to determine whether $\frA\models\psi(\ta)$, requires
\begin{itemize}
    \item alternating space $\cO(\log |\psi| + r\log|\frA|)$, where $r$ is the maximal
        number of free variables in any subformula of $\psi$, and
    \item alternating time $\cO(|\psi|\log|\frA|)$.
\end{itemize}
    
\samepage{
Together with the well-known facts that $\ALOGSPACE = \PTIME$ and $\APTIME = \PSPACE$
one then obtains that $\MC(\FO^k) \in \PTIME,\ k \in \bbN$ and $\MC(\FO) \in \PSPACE$.
It is straightforward to adapt this alternating algorithm to our setting.

\vspace{1em}
\begin{algorithm}[H]
    \small
    \DontPrintSemicolon
    \SetInd{2em}{1.5em}
    \SetAlgoCaptionSeparator{:}
    \caption{Alternating model checking for $\L[\Omega]$}\label{algo:mc}
    \textbf{ModelCheck($\psi,\M,\TMS$)}\;
    \KwInput{a formula $\psi \in \L[\Omega](\tau,\Vv)$ in negation normal form where
        $(\tau,\Vv)$ is finite, and\newline
        a finite pointed $(\tau,\Vv)$ dependence model $\M,\TMS$.}
    \vspace{1em}
    \If{\text{\normalfont $\psi$ is a literal}}
    {
        \lIf{$\M,\TMS \models \psi$}{\accept \textbf{ else} \reject}
    }
    \vspace{.5em}
    \If{$\psi = \eta \land \vartheta$}
    {
        \textbf{universally choose} $\phi \in \{\eta,\vartheta\}$\;
        \textbf{ModelCheck($\phi,\M,\TMS$)}
    }
    \vspace{.5em}
    \If{$\psi = \eta \lor \vartheta$}
    {
        \textbf{existentially guess} $\phi \in \{\eta,\vartheta\}$\;
        \textbf{ModelCheck($\phi,\M,\TMS$)}
    }
    \vspace{.5em}
    \If{$\psi = \QD_X\phi$}
    {
        \textbf{universally choose} $\TMT \in \TM$ with $\TMT =_X \TMS$\;
        \textbf{ModelCheck($\phi,\M,\TMT$)}
    }
    \vspace{.5em}
    \If{$\psi = \QE_X\phi$}
    {
        \textbf{existentially guess} $\TMT \in \TM$ with $\TMT =_X \TMS$\;
        \textbf{ModelCheck($\phi,\M,\TMT$)}
    }
\end{algorithm}
}
It implements the usual model checking game
between an existential and universal player,
played on positions $(\phi,\TMT)$ where $\phi$ is some subformula of $\psi$
and $\TMT \in \TM$ an assignment.
The algorithm \accept s if and only if the existential player has
a winning strategy for this game, which is the case if and only if
$\M,\TMS \models \psi$.

\subparagraph{How to encode the team.}
Now we come back to the discussion on how to encode the team.
A first idea might be to simply list all
assignments of the team in the input.
Call this variant $\MCL(\L[\Omega])$.
In many cases, this encoding is rather inefficient;
given $\frM,\psi$, a tuple $\tv$ enumerating $\Vv$ and a current assignment $\TMS \in \TM$,
encoding $\TM$ as a list of all its assignments may cause an exponentially longer input,
as in the case of full models where $\TM = M^{\Vv}$.
As a consequence, one obtains a deceptively low complexity for $\MCL(\L[\Omega])$.

\begin{proposition}\label{prop:algo-comp1}
    The alternating algorithm can be implemented to decide instances
    $(\psi,\frM,\tv,\TTMS,\TM)$ of $\MCL(\L[\Omega])$
    with alternating workspace $\cO(\log|\psi| + \log |\TM|)$.
    In particular, we obtain that $\MCL(\L[\Omega]) \in \ALOGSPACE = \PTIME$.
\end{proposition}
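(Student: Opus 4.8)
The plan is to implement Algorithm~\ref{algo:mc} on an alternating machine whose working tape, at every configuration, holds only a pointer to a subformula of $\psi$ together with a single index into the list encoding of $\TM$. The key observation is that the algorithm is \emph{tail-recursive}: in each of its five cases the machine either decides a literal immediately or performs one alternating choice and then continues on the unique chosen successor, with no work left to do after the recursive call. Hence no recursion stack is needed, and the machine may overwrite its current position $(\phi,\TMT)$ in place. Storing such a position costs $\cO(\log|\psi|)$ for the subformula pointer and $\cO(\log|\TM|)$ for the index of $\TMT$, giving the claimed budget $\cO(\log|\psi|+\log|\TM|)$.

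I would then check that every transition respects this budget. The Boolean cases only move the subformula pointer. For $\QD_X\phi$ and $\QE_X\phi$ the machine universally, respectively existentially, guesses an index $j\le|\TM|$ naming a candidate $\TMT$, tests the guard $\TMT=_X\TMS$ by comparing, for each $x\in X$, the corresponding components of the two listed assignments, and updates the index; a guard that fails is treated as a vacuously accepting branch for $\QD_X$ and a rejecting branch for $\QE_X$. The more delicate point is the evaluation of the literals at the leaves. Relational literals $R\tx,\neg R\tx$ are decided by reading the relevant components of $\TMS$ and scanning $R^{\frM}$. Each local atom $\beta\in\Omega[\Vv]$ is decided by a fixed deterministic subroutine looping over the team: $\tx\in\ty$ and $\tx\notin\ty$ scan $\TM$ once, $D_Xy$ and $\anon_Xy$ loop once while testing agreement on $X$ and on $y$, and $\Ind_X(\ty)$ together with its negation use two nested loops realising a ``for all $\TMT$ there is $\TMU$'' quantification. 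Every such subroutine uses only a constant number of team indices, hence $\cO(\log|\TM|)$ space, plus pointers into $\frM$ for the value comparisons; since leaves are terminal, this workspace is transient and does not accumulate.

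Putting these pieces together shows that each configuration along any computation path stores $\cO(\log|\psi|+\log|\TM|)$ bits. Correctness is already in hand: by construction the machine accepts exactly when the existential player wins the model-checking game, i.e.\ when $\M,\TMS\models\psi$. Invoking $\ALOGSPACE=\PTIME$ then yields $\MCL(\L[\Omega])\in\PTIME$. I expect the only real obstacle to be bookkeeping, namely verifying that the guard tests and the local-atom subroutines can be interleaved with the alternating recursion while the persistent state never exceeds the current position, and that the comparisons against $\frM$ remain within the logarithmic budget.
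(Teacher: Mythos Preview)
Your proposal is correct and follows essentially the same strategy as the paper: represent the current position by a subformula pointer of size $\cO(\log|\psi|)$ together with a constant number of indices into the team list, each of size $\cO(\log|\TM|)$, and observe that the recursion is tail-recursive so no stack is needed.

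The one real difference concerns how the local atoms $\beta\in\Omega$ are evaluated at the leaves. You use deterministic subroutines that loop over the team list (once for $D,\anon,\in,\notin$, doubly nested for $\Ind$ and $\lnot\Ind$), maintaining a constant number of team indices. The paper instead evaluates these atoms by further alternation: for $D_Xy$ it universally chooses $\TMT\in\TM$ with $\TMT=_X\TMS$ and accepts iff $\TMT=_y\TMS$; for $\tx\in\ty$ it existentially guesses $\TMT$ and accepts iff $\TMS(\tx)=\TMT(\ty)$; for $\Ind_{\tx}\ty$ it universally chooses $\TMU$, then existentially guesses $\TMT$, etc. Both implementations stay within the same $\cO(\log|\TM|)$ budget and yield the same $\ALOGSPACE$ bound. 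Your deterministic loops are slightly more elementary and make it transparent that the leaf checks require no alternation at all; the paper's version is more uniform with the rest of the algorithm and lets one read off directly that at most three assignment indices and three subformula pointers are ever held simultaneously.
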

\begin{proof}
    We implement picking (existentially guessing or universally choosing)
    assignments $\TMT \in \TM$ by picking a pointer to some tuple in the list representing $\TM$.
    Such a pointer only requires $\log |\TM|$ space,
    which allows us to decide whether $\M,\TMS \models \beta$ with alternating workspace
    $\cO(\log |\TM|)$, for any $\beta \in \Omega$. This is clear for $\beta \in \{=,\neq\}$.
    For $\beta = D_Xy$, we universally choose $\TMT \in \TM$ with $\TMT =_X \TMS$ and
    {\accept} if and only if $\TMT =_y \TMS$.
    If $\beta = (\tx \in \ty)$, then we existentially guess $\TMT \in \TM$
    and {\accept} if and only if $\TMS(\tx) = \TMT(\ty)$.
    Given $\beta = \Ind_{\tx}\ty$, we universally choose $\TMU \in \TM$ and then
    existentially guess $\TMT \in \TM$, accepting
    only in the case that $\TMS(\tx) = \TMT(\tx)$ and $\TMU(\ty) = \TMT(\ty)$.
    The atoms $\anon,\notin,\lnot\Ind$ are handled dually to their counterparts.

    A pointer of length $\log |\psi|$ suffices to specify the current subformula $\phi$ of $\psi$.
    Since $\psi,\frM$ and $\TM$ are never modified,
    the algorithm only needs to keep track of at most 3 assignments and 3 subformulae at any time.
    Together with the above analysis, this proves the claim.
\end{proof}

Comparing the above to our result that $\MCF(\L[\Omega])$ is $\PSPACE$-hard,
we seem to have a contradiction to the common belief that $\PTIME \neq \PSPACE$.
This is however not the case, as the two problems differ on the length of their inputs.
Indeed, there cannot exist a polynomial-time reduction from $\MCF(\L[\Omega])$
to $\MCL(\L[\Omega])$, because the size of the full team $M^{\Vv}$ is exponential in the size
$|\psi| + |\frM| + |\Vv|$ of the input of $\MCF(\L[\Omega])$.

Nevertheless, this disparity shows that the approach of
encoding $\TM$ as a list may yield unsatisfactory results.
A different approach is to encode the team as a first-order formula $\phi_\TM$
over the vocabulary $\tau \cup M$ where all elements of $\frM$ are added as constants
interpreted by themselves in the corresponding expansion $\frM_M$ of $\frM$.
We want that for all $\TTMT \in M^{|\Vv|}$:
\[
    \frM_M \models \phi_\TM(\TTMT)
    \qquad\Iff\qquad
    \text{there is some $\TMT \in \TM$ with $\TMT(\tv) = \TTMT$.}
\]

\begin{Definition}\label{def:mcfo}
    We encode the team by a first-order formula as described above.
    More specifically, define the problem $\MCFO(\L[\Omega])$ as follows.
    \begin{enumerate}
        \item The inputs are tuples $(\psi,\frM,\tv,\TTMS,\phi_\TM)$ where
            \begin{enumerate}
                \item $(\psi,\frM,\tv,\TTMS)$ have the same interpretation as before,
                    with $\psi \in \L[\Omega](\tau,\Vv)$, $\frM$ a $\tau$-structure with universe $M$,
                    $\tv$ an enumeration of $\Vv$ and $\TTMS \in M^{|\Vv|}$ encoding
                    a current assignment.
                \item $\phi_\TM \in \FO(\tau \cup M)$ encodes the team $\TM \subseteq M^{\Vv}$
                    as described above, with $\frM_M \models \phi_\TM(\TTMS)$.
            \end{enumerate}
        \item The task is to decide whether $\M,\TMS \models \psi$.
        \item The complexity is measured in the input size,
            so essentially with respect to $|\psi| + |\frM| + |\Vv| + |\phi_\TM|$.
    \end{enumerate}
\end{Definition}

\begin{proposition}\label{prop:algo-comp2}
    We can implement the alternating algorithm so that a given instance $(\psi,\frM,\tv,\TTMS,\phi_\TM)$
    for $\MCFO(\L[\Omega])$ is solved requiring only 
    \begin{enumerate}
        \item alternating space $\cO(\log |\psi| + \log |\phi_\TM| + (|\Vv| + r)\log |\frM|)$, where
            $r$ is the maximal number of free variables in any subformula of $\phi_\TM$, and
        \item alternating time $\cO(|\psi|\cdot(|\psi|+|\frM|+(|\Vv|+|\phi_\TM|)\log|\frM|))$.
    \end{enumerate}
    In particular, we obtain that $\MCFO(\L[\Omega]) \in \APTIME = \PSPACE$.
\end{proposition}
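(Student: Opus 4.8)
The plan is to keep Algorithm~\ref{algo:mc} essentially unchanged, but to replace every step in which it picks an assignment $\TMT \in \TM$ by a subroutine that first guesses a tuple $\TTMT \in M^{|\Vv|}$ component by component, and then certifies that this tuple encodes a genuine team member, i.e.\ that $\frM_M \models^{\FO} \phi_\TM(\TTMT)$, by invoking the standard alternating first-order model-checking procedure on $\phi_\TM$ over $\frM_M$ with the parameters $\TTMT$ (recall that $\frM_M$ has the same universe as $\frM$, so $|\frM_M|=|\frM|$). The only delicate point is how this team-membership test interacts with the alternation of the outer game, and here one must distinguish whether $\TMT$ is picked existentially or universally.

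When the outer game existentially guesses some $\TMT \in \TM$ (as in $\QE_X\phi$, in $\tx \in \ty$, or in the inner existential of $\Ind$), we existentially guess $\TTMT$ and then universally branch to verify the conjunction of (i) the team-membership test $\frM_M \models^{\FO}\phi_\TM(\TTMT)$, (ii) the relevant agreement or equality constraint relating $\TMT$ and $\TMS$ (e.g.\ $\TMT =_X \TMS$), and (iii) the continuation (a recursive call or an atomic comparison); each conjunct, in particular (i) via first-order model checking, nests into the alternating computation without difficulty. When the outer game universally chooses some $\TMT \in \TM$ (as in $\QD_X\phi$, in $D_Xy$, or in the outer universal of $\Ind$), team membership acts as a \emph{guard}: we universally choose $\TTMT$ and must verify the implication $(\phi_\TM(\TTMT) \land \TMT =_X \TMS) \to C$ for the continuation $C$. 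Rewriting this as $\lnot\phi_\TM(\TTMT) \lor \lnot(\TMT =_X \TMS) \lor C$, we existentially guess one of the three disjuncts; the first is discharged by running first-order model checking on $\phi_\TM$ in its dual (rejecting) mode, which the alternating procedure supports at no extra cost, while the other two are immediate. Thus each assignment that Algorithm~\ref{algo:mc} picks is realised by one guess of a $|\Vv|$-tuple together with a single call to first-order model checking on $\phi_\TM$, and the local atoms of $\Omega$ (including the dual atoms $\anon,\notin,\lnot\Ind$) are evaluated by their fixed pattern of such picks.

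For the space bound, the outer procedure keeps a pointer of length $\cO(\log|\psi|)$ into $\psi$ and a constant number of assignments, each a $|\Vv|$-tuple over $M$ costing $\cO(|\Vv|\log|\frM|)$. Each team-membership test runs first-order model checking on $\phi_\TM$, which by the quoted bounds uses alternating space $\cO(\log|\phi_\TM| + r\log|\frM|)$, where $r$ is the maximal number of free variables in a subformula of $\phi_\TM$. Since the tuple $\TTMT$ currently carried by the outer game is provided as the parameters of this test, its storage $\cO(|\Vv|\log|\frM|)$ runs concurrently with the $\cO(r\log|\frM|)$ workspace of the test; as the tests are invoked one at a time and their space is reused, summing the contributions yields total alternating space $\cO(\log|\psi| + \log|\phi_\TM| + (|\Vv|+r)\log|\frM|)$, as claimed.

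For the time bound, the recursion on $\psi$ consists of $\cO(|\psi|)$ alternating steps. At each step we guess a constant number of $|\Vv|$-tuples in time $\cO(|\Vv|\log|\frM|)$, perform a constant number of first-order model-checking calls on $\phi_\TM$, each of alternating time $\cO(|\phi_\TM|\log|\frM|)$, evaluate the agreement relations $=_X$ and any relational literal of $\psi$ against $\frM$ in time $\cO(|\frM|)$, and handle the current subformula of $\psi$ in time $\cO(|\psi|)$. Hence the per-step cost is $\cO(|\psi| + |\frM| + (|\Vv| + |\phi_\TM|)\log|\frM|)$, and multiplying by the $\cO(|\psi|)$ steps gives the stated total alternating time. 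Since every summand is polynomial in the input size, the algorithm runs in alternating polynomial time, whence $\MCFO(\L[\Omega]) \in \APTIME = \PSPACE$. I expect the main obstacle to be exactly the complexity-preserving integration of the guard-style membership test into the universal branches of the outer game; once the implication is rephrased as a disjunction and delegated to the dual mode of the first-order procedure, the remaining accounting is routine.
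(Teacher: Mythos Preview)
Your proposal is correct and follows essentially the same approach as the paper: guess tuples component-wise, certify team membership by alternating first-order model checking on $\phi_\TM$, and account for space and time exactly as the paper does. You are in fact more explicit than the paper about how the membership test must be integrated as a conjunct in existential branches and as a guard (rewritten as a disjunction with $\lnot\phi_\TM$) in universal branches; the paper simply says ``picking $\TMT\in\TM$'' and leaves this interaction implicit.
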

\begin{proof}
    Assignments $\TMT$ are encoded by their values $\TTMT = \TMT(\tv)$,
    thus taking $\cO(A) \coloneqq \cO(|\Vv|\log|\frM|)$ space.
    We implement picking assignments $\TMT \in \TM$ by picking a tuple $\TTMT \in M^{|\Vv|}$
    and then performing first-order model checking on $(\phi_\TM,\frM_M,\TTMT)$.
    We know that $|\frM_M|$ is polynomial in $|\frM|$ and hence $\log |\frM_M| \in \cO(\log|\frM|)$.
    It follows from the complexity of first-order model checking listed in the last paragraph
    that picking an assignment in this way requires
    alternating space $\cO(P_s) \coloneqq  \cO(A + \log |\phi_\TM| + r\log|\frM|)$ and
    alternating time  $\cO(P_t) \coloneqq \cO(A + |\phi_\TM|\log|\frM|)$.

    The space-analysis is parallel to the proof of \cref{prop:algo-comp1}.
    We store a subformula $\phi$ of $\psi$ as a pointer of length $\log |\psi|$.
    Then $\psi,\frM$ and $\phi_\TM$ are never modified,
    and at any time we need at most 3 assignments and 3 subformulae in the workspace.
    This yields the claimed alternating space-complexity $\cO(P_s + \log |\psi|)$,
    since we can always reuse the space required for the first-order model checking.

    For the time analysis, note that checking something such as $\TMT(\tx) = \TMS(\ty)$
    for two assignments $\TMS,\TMT$ in the workspace is
    possible within alternating time $\cO(A)$, the size of the assignments.
    \begin{enumerate}
        \item If $\phi$ is a relational literal
            we can evaluate whether $\M,\TMS \models \phi$
            in alternating time $\cO(|\psi| + |\frM| + A)$.
        \item For other local atoms $\beta \in \Omega$
            or dependence quantifiers,
            the algorithm picks at most two new assignments
            and then does a constant number of checks of the form $\TMS(\tx) = \TMT(\ty)$.
            In the case of dependence quantifiers,
            it also updates the current subformula from $\QD_X\phi$ or $\QE_X\phi$ to $\phi$.
            This can be accomplished in alternating time $\cO(P_t + |\psi|)$.
        \item Choosing some subformula at conjunctions and disjunctions takes only
            $\cO(|\psi|)$ time, since we just have to move our pointer within $\psi$.
    \end{enumerate}
    Clearly we have at most $|\psi|$ recursive calls,
    which yields the claimed alternating time-complexity $\cO(|\psi|(P_t + |\psi| + |\frM|))$.
\end{proof}

We now want to show an analogue of $\MC(\FO^k) \in \ALOGSPACE = \PTIME$,
so we define $\MCFOK(\L[\Omega])$ as the restriction
of $\MCFO(\L[\Omega])$ to instances where $|\Vv| \leq k$.
We now want the alternating space-complexity
given in \cref{prop:algo-comp2} to be logarithmic in the input.
Problematic is the occurrence of $r$, which describes the maximum
number of free variables in any subformula of $\phi_\TM$,
and originates from the space-complexity of the first-order model checking
we perform when picking new assignments.
Currently, we allow arbitrary $\phi_\TM \in \FO(\tau \cup M)$ 
to represent the team $\TM$ in the input.
To obtain our wanted analogue,
we need to bound $r$ by some constant for all instances of $\MCFOK(\L[\Omega])$.

Every team $\TM \subseteq M^{\Vv}$ can be encoded by an $\FO(\tau \cup M)$-formula
$\phi_\TM$ that uses only the variables in $\Vv$.
Indeed, we can just set $\phi_\TM(\tx) = \bigvee_{\TMT \in \TM} \tx = \TMT(\tv)$.
This shows that it is very lenient to assume
that there exists some global bound for $r$ in all instances of $\MCFOK(\L[\Omega])$.

\begin{Definition}
    For $B \geq k \in \bbN$ define $\MCFOBK(\L[\Omega])$ as the restriction
    of $\MCFO(\L[\Omega])$ to instances $(\psi,\frM,\tv,\TTMS,\phi_\TM)$ where:
    \begin{enumerate}
        \item $|\Vv| \leq k$, and
        \item every subformula of $\phi_\TM$ has at most $B$ free variables.
    \end{enumerate}
\end{Definition}

\begin{corollary}\label{corollary:mcfobk}
    $\MCFOBK(\L[\Omega]) \in \ALOGSPACE = \PTIME$ for all $B \geq k \in \bbN$.
\end{corollary}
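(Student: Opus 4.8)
The plan is to derive this directly from the space bound established in \cref{prop:algo-comp2}, since $\MCFOBK(\L[\Omega])$ is precisely the restriction of $\MCFO(\L[\Omega])$ in which both parameters that could blow up the space usage are held to fixed constants. Concretely, I would run the very same alternating algorithm analysed there, simply observing that on an instance $(\psi,\frM,\tv,\TTMS,\phi_\TM)$ of $\MCFOBK(\L[\Omega])$ we have $|\Vv| \leq k$ by the first restriction and $r \leq B$ by the second, where $r$ is again the maximal number of free variables in any subformula of $\phi_\TM$.

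The key step is then to substitute these bounds into the alternating space estimate $\cO(\log |\psi| + \log |\phi_\TM| + (|\Vv| + r)\log |\frM|)$ from \cref{prop:algo-comp2}. Since $k$ and $B$ are fixed constants, the coefficient $|\Vv| + r$ is at most $k + B = \cO(1)$, so the term $(|\Vv| + r)\log|\frM|$ collapses to $\cO(\log|\frM|)$. The total alternating space therefore becomes $\cO(\log |\psi| + \log |\phi_\TM| + \log |\frM|)$, which is logarithmic in the overall input size $|\psi| + |\frM| + |\Vv| + |\phi_\TM|$. Invoking the standard identity $\ALOGSPACE = \PTIME$ (as used already for $\MCL$ in \cref{prop:algo-comp1}) yields $\MCFOBK(\L[\Omega]) \in \PTIME$.

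There is essentially no serious obstacle here: the content of the corollary is entirely absorbed into the parametrised analysis of \cref{prop:algo-comp2}, and the only thing to verify is that bounding $|\Vv|$ and the free-variable count $r$ by constants is exactly what is needed to make the offending $(|\Vv|+r)\log|\frM|$ summand logarithmic. The one point I would make explicit, to justify that the restriction defining $\MCFOBK$ is meaningful rather than vacuous, is the remark already recorded in the text that every team $\TM \subseteq M^{\Vv}$ admits an encoding $\phi_\TM$ using only the variables in $\Vv$ (e.g.\ $\phi_\TM(\tx) = \bigvee_{\TMT \in \TM} \tx = \TMT(\tv)$), so that assuming a global bound $B \geq k$ on the free-variable count is no real loss of generality.
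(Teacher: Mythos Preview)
Your proposal is correct and matches the paper's own proof essentially verbatim: both simply plug the bounds $|\Vv|\leq k$ and $r\leq B$ into the space estimate of \cref{prop:algo-comp2} to obtain alternating space $\cO(\log|\psi|+\log|\frM|+\log|\phi_\TM|)$ and then invoke $\ALOGSPACE=\PTIME$. The only extra content in your write-up is the closing remark about the existence of a $B$-variable encoding for every team, which the paper places just before the corollary rather than inside its proof.
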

\begin{proof}
    From \cref{prop:algo-comp2} and the above definition of $\MCFOBK(\L[\Omega])$
    we see that we can solve instances $(\psi,\frM,\tv,\TTMS,\phi_\TM)$ of $\MCFOBK(\L[\Omega])$
    with the algorithm from \cref{prop:algo-comp2}
    requiring only alternating space $\cO(\log |\psi| + \log|\frM| + \log|\phi_\TM|)$.
\end{proof}

\begin{proposition}~
    \begin{enumerate}
        \item $\MCFO(\L[\Omega])$ is $\PSPACE$-complete.
        \item $\MCFOBK(\L[\Omega])$ is $\PTIME$-complete for all $B \geq k \geq 2$.
    \end{enumerate}
\end{proposition}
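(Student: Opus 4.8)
The plan is to combine the upper bounds already established with hardness results obtained by reducing from the full-model variants $\MCF$ and $\MCFK$. In both parts the membership direction is free from earlier sections, and the content lies entirely in the lower bounds, which follow from a single uniform reduction.

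For part~(1), membership in $\PSPACE$ is immediate from \cref{prop:algo-comp2}, which shows $\MCFO(\L[\Omega]) \in \APTIME = \PSPACE$. For $\PSPACE$-hardness I would exhibit a polynomial-time reduction from $\MCF(\L)$, which is $\PSPACE$-hard by \cref{prop:mcf-pspace}. The key observation is that the full team $\TM = M^\Vv$ admits the trivial first-order encoding $\phi_\TM(\tx) := \top$, since every tuple over $M$ belongs to the team and therefore satisfies the defining equivalence of \cref{def:mcfo}. Thus an instance $(\psi,\frM,\tv,\TTMS)$ of $\MCF(\L[\Omega])$ maps to the instance $(\psi,\frM,\tv,\TTMS,\top)$ of $\MCFO(\L[\Omega])$; this transformation is computable in linear time and preserves the answer by definition of the full team. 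Since $\L$ is a sublogic of $\L[\Omega]$, the $\PSPACE$-hardness transfers to $\MCFO(\L[\Omega])$ for every $\Omega$.

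For part~(2), membership in $\PTIME$ is exactly \cref{corollary:mcfobk}. For $\PTIME$-hardness I would apply the same reduction, now starting from $\MCFK(\L)$, which is $\PTIME$-hard for $k \geq 2$ by \cref{prop:mcfk-ptime}. The encoding $\phi_\TM := \top$ is a constant formula, so every one of its subformulae has $0 \leq B$ free variables, and the restriction $|\Vv| \leq k$ is inherited unchanged from the source instance. Hence the image of the reduction lands inside the input format of $\MCFOBK(\L[\Omega])$ for all $B \geq k \geq 2$, and $\PTIME$-hardness transfers accordingly.

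The only point requiring care is that the reduction must genuinely respect the input constraints of the target problem: for $\MCFOBK$ one must verify the $B$-bound on the subformulae of $\phi_\TM$, but since $\top$ carries no free variables this check is vacuous, and the $k$-bound on $|\Vv|$ is preserved verbatim. There is no real obstacle here; the substance of the proposition is the recognition that the full team is the cheapest possible first-order-encoded team, so that the previously established hardness of the full-model problems is inherited directly by the formula-encoded variants.
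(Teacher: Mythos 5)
Your proof is correct and takes essentially the same route as the paper: the paper likewise reduces $\MCF(\L[\Omega])$ to $\MCFO(\L[\Omega])$ and $\MCFK(\L[\Omega])$ to $\MCFOBK(\L[\Omega])$ by encoding the full team as the trivial formula $\mathrm{True}$, citing the earlier hardness results for the lower bounds and the earlier algorithmic results for the upper bounds. The only cosmetic difference is that the paper notes the reduction is \emph{logspace}-computable, which is the appropriate notion for transferring $\PTIME$-hardness in part (2), whereas you claim only linear-time computability; since the map merely appends a constant formula to the input, it is trivially logspace, so nothing substantive is missing.
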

\begin{proof}
    Since the full team is specified by $\phi_{M^{\Vv}} = \text{True}$,
    we obtain the following logspace-computable reduction 
    \[
        \MCF(\L[\Omega]) \to \MCFO(\L[\Omega]),\quad (\psi,\frM,\tv,\TTMS) \mapsto (\psi,\frM,\tv,\TTMS,\text{True}),
    \]
    which shows that $\MCF(\L[\Omega]) \leq_{\log} \MCFO(\L[\Omega])$.
    Via the same reduction we can show that
    $\MCFK(\L[\Omega]) \leq_{\log} \MCFOBK(\L[\Omega])$ for all $B \geq k \in \bbN$.
    Hence the hardness-results follow from \cref{prop:mcf-pspace,prop:mcfk-ptime}.
    The rest was already discussed in \cref{prop:algo-comp2,corollary:mcfobk}.
\end{proof}

This shows that if the local atoms are efficiently checkable,
the complexity of model checking $\L[\Omega]$ depends mostly on how one encodes the team.
Encoding the team as a list, as one would do for ordinary relations,
we obtain $\PTIME$-completeness for both the finite-variable and the unconstrained variant.
Encoding the team as a first-order formula is more efficient,
and yields essentially the same complexity as that of first-order model checking;
$\PSPACE$-complete in general, but $\PTIME$-complete in restriction to $k \geq 2$ variables.



\begin{thebibliography}{10}

\bibitem{AndrekaBenNem98}
H.~Andréka, J.~van Benthem, and I.~Németi.
\newblock Modal languages and bounded fragments of predicate logic.
\newblock {\em Journal of Philosophical Logic}, 27:217--274, 1998.

\bibitem{BalcazarDiaGab90}
J.~Balc\'azar, J.~D\'iaz, and J.~Gabarr\'o.
\newblock {\em Structural Complexity II}.
\newblock Springer, 1990.

\bibitem{BaltagBen20}
A.~Baltag and J.~van Benthem.
\newblock A simple logic of functional dependence.
\newblock {\em Journal of Philosopical Logic}, to appear, 2021.
\newblock A preliminary version appeared in the ILLC pre-publication series
  PP-2020-06.

\bibitem{BerwangerGra01}
D.~Berwanger and E.~Grädel.
\newblock Games and model checking for guarded logics.
\newblock In {\em Logic for Programming, Artificial Intelligence, and
  Reasoning}, pages 70--84. Springer, 2001.

\bibitem{BlackburnRijVen01}
P.~Blackburn, M.~de~Rijke, and Y.~Venema.
\newblock {\em Modal Logic}.
\newblock Cambridge University Press, 2001.

\bibitem{BoergerGraGur97}
E.~Börger, E.~Grädel, and Y.~Gurevich.
\newblock {\em The Classical Decision Problem}.
\newblock Springer, 1997.

\bibitem{DawarOtto09}
A.~Dawar and M.~Otto.
\newblock Modal characterisation theorems over special classes of frames.
\newblock {\em Ann. Pure Appl. Logic}, 161:1--42, 10 2009.

\bibitem{Galliani12}
P.~Galliani.
\newblock Inclusion and exclusion dependencies in team semantics: On some
  logics of imperfect information.
\newblock {\em Ann.~Pure Appl.~Log.~}, 163:68--84, 2012.

\bibitem{GallianiHel13}
P.~Galliani and L.~Hella.
\newblock Inclusion logic and fixed-point logic.
\newblock In {\em Computer Science Logic 2013}, volume~23, pages 281--295,
  2013.

\bibitem{GraedelVaa13}
E.~Gr\"adel and J.~V\"a\"an\"anen.
\newblock Dependence and independence.
\newblock {\em Studia Logica}, 101(2):399--410, 2013.

\bibitem{Graedel+07}
E.~{Gr\"adel~et~al.}
\newblock {\em Finite Model Theory and Its Applications}.
\newblock Springer-Verlag, 2007.

\bibitem{Graedel99a}
E.~Grädel.
\newblock Decision procedures for guarded logics.
\newblock In {\em Automated Deduction --- CADE-16}, pages 31--51. Springer,
  1999.

\bibitem{GraedelOtt14}
E.~Grädel and M.~Otto.
\newblock The freedoms of (guarded) bisimulation.
\newblock In {\em Johan van Benthem on Logic and Information Dynamics}, pages
  3--31. Springer, 2014.

\bibitem{Hodges97}
W.~Hodges.
\newblock Compositional semantics for a language of imperfect information.
\newblock {\em Log.~J.~IGPL}, 5:539--563, 1997.

\bibitem{Hodges97a}
W.~Hodges.
\newblock {\em A Shorter Model Theory}.
\newblock Cambridge University Press, 1997.

\bibitem{KontinenVaa09}
J.~Kontinen and J.~V\"a\"an\"anen.
\newblock On definability in dependence logic.
\newblock {\em Journal of Logic, Language, and Information}, 18:317--241, 2009.

\bibitem{Papadimitriou94}
C.~Papadimitriou.
\newblock {\em Computational Complexity}.
\newblock Addison-Wesley, 1994.

\bibitem{Vaananen07}
J.~V{\"a}{\"a}n{\"a}nen.
\newblock {\em Dependence logic: A new approach to independence friendly
  logic}.
\newblock Cambridge University Press, 2007.

\bibitem{Benthem76}
J.~van Benthem.
\newblock {\em Modal Correspondence Theory}.
\newblock PhD thesis, University of Amsterdam, 1976.

\bibitem{Benthem83}
J.~van Benthem.
\newblock {\em Modal Logic and Classical Logic}.
\newblock Bibliopolis, 1983.

\end{thebibliography}

\end{document}